\let\theoremstyle\relax
\def\thm@space@setup{%
  \thm@preskip=0.5em plus 0.25em minus 0.1em
  \thm@postskip=\thm@preskip 
}
\theoremstyle{plain}%
\newtheorem{theorem}{Theorem}[section]
\newtheorem{lemma}[theorem]{Lemma}
\newtheorem{proposition}[theorem]{Proposition}
\theoremstyle{definition}
\newtheorem{example}[theorem]{Example}
\setlist[description]{font=\normalfont\bfseries\space}
 \def\bibsep{\smallskipamount}%
 \def\newblock{\ }%
\DeclarePairedDelimiter{\abs}{\lvert}{\rvert}
\DeclarePairedDelimiter{\paren}{\lparen}{\rparen}
\newcommand*{\R}{\mathbb{R}}   
\newcommand*{\GF}{\mathbb{F}}  
\newcommand*\rel@kern[1]{\kern#1\dimexpr\macc@kerna}
\newcommand*\widebar[1]{%
  \begingroup
  \def\mathaccent##1##2{%
    \rel@kern{0.8}%
    \overline{\rel@kern{-0.8}\macc@nucleus\rel@kern{0.2}}%
    \rel@kern{-0.2}%
  }%
  \macc@depth\@ne
  \let\math@bgroup\@empty \let\math@egroup\macc@set@skewchar
  \mathsurround\z@ \frozen@everymath{\mathgroup\macc@group\relax}%
  \macc@set@skewchar\relax
  \let\mathaccentV\macc@nested@a
  \macc@nested@a\relax111{#1}%
  \endgroup
}
\newcommand*{\Men}{\mathcal{M}}    
\newcommand*{\Women}{\mathcal{W}}  
\newcommand*{\Types}{\Theta}       
\newcommand*{\men}{m}              
\newcommand*{\women}{w}            
\newcommand*{\type}{\theta}        
\newcommand*{\typeii}{{\theta'}}   
\newcommand*{\ntypes}{n}           
\newcommand*{\utility}{u}          
\newcommand*{\utilityii}{u'}          
\newcommand*{\contutility}{\widebar u} 
\newcommand*{\thresh}{{\tau}}    
\newcommand*{\threshii}{\tau'}    
\newcommand*{\CDF}{F}              
\newcommand*{\Dist}{\mathcal{D}}   
\newcommand*{\mass}{\eta}          
\newcommand*{\rate}{\lambda}       
\newcommand*{\glow}{f}             
\newcommand*{\strategy}{\sigma}
\newcommand*{\strategyii}{\sigma'}
\newcommand*{\OPT}{\mathsf{OPT}}   
\newcommand*{\nfill}{n}
\newcommand*{\tee}{t}
\newcommand*{\exeps}{\varepsilon}
\newcommand*{\neqs}{m}
\newcommand*{\neqsii}{m'}
\newcommand*{\nvars}{n}
\newcommand*{\var}{x}
\newcommand*{\bit}{b}
\newcommand*{\switch}{s}
\newcommand*{\switchval}{2(1 + 2\delta)}
\newcommand*{\eqn}{e}
\newcommand*{\MaxThreeLinTwo}{\mathsf{MAX3LIN2}}
\newcommand*{\PlatformOpt}{\mathsf{OptimalDirectedSearch}}
\newcommand*{\NP}{\mathsf{NP}}
\newcommand*{\approxerr}{\varepsilon}
\newcommand*{\Waiting}{\mathsf{Waiting}}
\newcommand*{\Exited}{\mathsf{Exited}}
\newcommand*{\Meeting}{\mathsf{Meeting}}
\newcommand*{\PreMeeting}{\mathsf{PreMeeting}}
\newcommand*{\Accept}{\mathsf{Accept}}
\newcommand*{\Reject}{\mathsf{Reject}}
\newcommand*{\weighting}{q}
\begin{document}

\TITLE{Designing Approximately Optimal Search on Matching Platforms}

\RUNAUTHOR{Immorlica, Lucier, Manshadi, Wei}

\RUNTITLE{Designing Approximately Optimal Search}

\ARTICLEAUTHORS{%
\AUTHOR{Nicole Immorlica}
\AFF{Microsoft Research, New York, NY, \EMAIL{nicimm@gmail.com}}
\AUTHOR{Brendan Lucier}
\AFF{Microsoft Research, Cambridge, MA, \EMAIL{brlucier@microsoft.com}}
\AUTHOR{Vahideh Manshadi}
\AFF{Yale School of Management, New Haven, CT, \EMAIL{vahideh.manshadi@yale.edu}}
\AUTHOR{Alexander Wei }
\AFF{UC Berkeley, Berkeley, CA, \EMAIL{awei@berkeley.edu}}
}

\ABSTRACT{

We study the design of a decentralized two-sided matching market in which agents' search is guided by the platform.
There are finitely many agent types, each with (potentially random) preferences drawn from known type-specific distributions.
Equipped with knowledge of these distributions, the platform 
guides the search process by determining the meeting rate between each pair of types from the two sides. 
Focusing on symmetric pairwise preferences in a continuum model, we first characterize the unique stationary equilibrium that arises given a feasible set of meeting rates. 
We then introduce the platform's 
optimal directed search problem, which involves optimizing meeting rates to maximize equilibrium social welfare.
We first show that incentive issues arising from congestion and cannibalization 
make the design problem fairly intricate.
Nonetheless, we develop an efficiently computable search design whose corresponding equilibrium achieves at least $\nicefrac{1}{4}$ the 
social welfare of the optimal design.
In fact, our construction always recovers at least $\nicefrac{1}{4}$ the \emph{first-best} social welfare, where agents' incentives are disregarded.
Our directed search design is simple and easy-to-implement, as its corresponding bipartite graph consists of disjoint stars.
Furthermore, our design implies the platform can substantially limit choice and yet induce an equilibrium with an approximately optimal welfare.
Finally, we show that approximation is likely the best we can hope for by establishing that the problem of designing optimal directed search is $\mathsf{NP}$-hard to even approximate beyond a certain constant factor.
%
}

\KEYWORDS{directed search, matching platforms, market design, search friction, sharing economy}
\maketitle

\pagenumbering{arabic}

\section{Introduction}\label{sec:introduction}

\awedit{Matching platforms have become prominent facilitators of social and economic connections in recent years: Thirty percent of U.S. adults have used dating platforms to look for a partner \citep{pew2020}; and 35\% of U.S. workers have engaged in some form of freelance labor in the past year, in part thanks to the growth of online marketplaces for freelance work \citep{ozimek2019}. 
Platforms, such as e-Harmony, try to improve search via match recommendations that are
based on learned user characteristics and match compatibility models.
While carefully designed models have predictive power, the compatibility of a pair often also involves an {\em idiosyncratic} component that can only be discovered upon meeting. Furthermore, since matching requires a ``coincidence of wants,'' the platform cannot compel users to match, even if preferences are known.  In such environments, how can matching platforms improve the search process while respecting users' incentives?}


Toward answering this question, 
we construct a dynamic matching market 
between two heterogeneous populations, each composed of finitely many types. 
To be concrete, we take a dating market for heterosexual couples as our base example and thus refer to agents of the two sides as women and men.
An agent's type captures their common features observable to the platform and partially determines their preferences for opposite-side agents.\footnote{We remark that platforms such as e-Harmony extract these features using surveys; however, we abstract away from such details and assume that these features are directly observable.} 
In addition to the type-specific component, each agent's preference for any opposite-side agent has an idiosyncratic component that is unknown a priori and is only revealed when the pair meets.\footnote{Such preference structures have been studied previously in matching literature; see e.g., \citet{ashlagi2020clearing} and \citet{KanoriaSaban21}.}
In our model, agents have cardinal preferences, and each woman-man pair shares a symmetric valuation that is drawn from a distribution corresponding to their types. 
Symmetric preferences capture the notion of mutual compatibility in a relationship, i.e., any relationship is either win-win or lose-lose.
Upon meeting, both agents observe their shared valuation; each then decides whether to match or to wait for another candidate. 
We consider a continuum  model where agents of different types arrive at exogenous rates and leave the market either upon meeting a satisfactory match or unmatched due to a life event which occurs at a given rate (see  \Cref{sec:dynamics} and \Cref{fig:flows}). Such an exogenous departure rate captures {\em search friction} in the sense that an agent can only meet a limited number of candidates before leaving unmatched due to a life event. 

With knowledge of preference distributions and exogenous arrival/unmatched departure rates, the platform guides the search process by designing \emph{meeting rates} between pairs of woman-man types. The set of meeting rates for a given type can be viewed as an \emph{assortment} of opposite-side types offered to them over time. 
Faced with such an assortment, an agent decides on which candidates to accept or reject in order to maximize their long-run utility. We focus on the stationary and symmetric equilibria of the underlying dynamic game: We show for any set of ``feasible'' meeting rates that respect natural physical constraints (see \cref{eq:capacity,eq:flowbalance}), there exists a unique such equilibrium in which each agent type plays a threshold strategy (see \Cref{proposition:unique-main}). 
Establishing the uniqueness of the stationary equilibrium relies on the symmetric structure of preferences. In fact, in our  constructive proof, we show that iterating the best response map converges after finitely many rounds (see \Cref{proposition:unique}).
Uniqueness of equilibrium along with its simple structure enables us to sidestep issues of instability and equilibrium selection which may make the platform's design problem ill-defined.
As such, our structural results may be of independent interest in future work on the design of dynamic matching markets.\footnote{We highlight that several previous papers considered symmetric preference structures in static settings under other names such as {\em correlated two-sided} or {\em acyclic} markets \citep{abraham2008stable, ackermann2011uncoordinated}.} 



We remark that by taking such a design approach to search, our work departs from the prevailing search environment studied in the literature \citep{BurdettColes97, shimer2000assortative, adachi2003search, LauermannNoldeke14}, which assumes agents meet uniformly at random. 
In the presence of sufficient differentiation across types, such a ``hands-off'' approach to search can be arbitrarily sub-optimal: 
If the platform already knows that 
sports fans only marry sports fans and outdoor enthusiasts only marry outdoor enthusiasts, it should not waste time by letting outdoor enthusiasts meet sports fans. (For a quantitative example, see \vmedit{the horizontal market presented in} \Cref{example:horizontal}.) 
At the same time, the optimal meeting design can be far more nuanced than simply letting preferred pairs meet each other. This is due to disparities in arrival rates combined with strategic behavior of agents, which can lead to issues of cannibalization and congestion. 
To alleviate these issues, sometimes the platform may wish to restrict the choices of a type in order to induce a matching outcome with globally higher welfare. 
To illustrate these behaviors, in \Cref{sec:examples}, we present an instance of an assortative market with deterministic preferences for which the optimal meeting design is indeed non-assortative 
\vmedit{(see the vertical market presented in \Cref{example:vertical}}). 

The aforementioned behaviors suggest that optimal design can be fairly subtle.
In fact, we show that the platform's  optimal directed search problem (formally introduced in \Cref{sec:optimaldirectedsearch}) is $\mathsf{APX}$-hard, i.e., $\NP$-hard to approximate beyond a fixed constant factor (see \Cref{theorem:hardness-main}). 
Nonetheless, we develop a solution (i.e., a set of feasible meeting rates) whose equilibrium welfare is at least $1/4$ of that of the optimal design (see \Cref{theorem:approximation}). Our design relies on first relaxing the incentive constraints and solving the first best problem. Through an intricate reformulation, we show that the first-best problem can be reduced to a polynomial-time solvable ``generalized assignment problem'' that admits a forest-structured solution (see \Cref{sec:reduction}).
We then show that the assignment problem admits a $2$-approximate solution consisting of disjoint ``star-shaped'' submarkets, i.e., submarkets where one side consists entirely of a single type. Finally, we recompute the first-best for each submarket; utilizing its star structure, we show that the first-best solution can be ``modified'' to respect incentive constraints with at most a factor of $2$ loss in welfare (see \Cref{sec:starshaped}, \Cref{proposition:starshaped}, and \Cref{fig:outline}). 
This holds because for a star-shaped market, there is an alignment of incentives between individual agents and the platform.

\vmedit{We complement our theoretical developments with numerical studies of markets where agents' preferences have both a vertical and a horizontal component (see Section \ref{subsec:setup}). Our comparative statics with respect to the strength of each component as well as the intensity of search friction illustrate that, compared to random meeting, our design is particularly effective in markets with strong horizontal components or high search friction (see \Cref{fig:experiment1,,fig:experiment2} in Section \ref{sec:simulation}). Furthermore, our simulation results show that the welfare outcome of our approximation algorithm can be substantially better than its theoretical guarantee.}

\awdelete{The simple structure of our proposed solution---a collection of disjoint star-shaped markets---implies that the platforms can substantially limit choice, offer very simple assortments, and yet induce an equilibrium with an approximately optimal welfare. Our result is particularly intriguing as we do not impose any restrictions on the structure of symmetric preferences. 
Furthermore, limiting choice makes the agent's decision problem, i.e., finding their equilibrium threshold, substantially easier.}

\awedit{The simple structure of our proposed solution—a collection of disjoint star-shaped markets— implies that, with careful search design, platforms can substantially limit choice, offer very simple assortments, and yet induce an equilibrium with approximately optimal welfare. By limiting choice, our proposed solution also significantly simplifies the agent's decision problem and thus reduces the need for complex strategization. Our result is particularly intriguing since it holds without any restrictions on the structure of the symmetric preferences.}

\subsection{Related Work}
\label{subsec:lit:review}

Our work relates to and contributes to several streams of literature on matching markets. 

\paragraph{Search and Matching.}
There is a rich literature in economics that studies decentralized matching models with search frictions and transferable/nontransferable utility. For an informative review, we refer the interested reader to \citet{chade2017sorting}. By and large, this literature focuses on uniformly random meeting among agents and studies the properties of the resulting stationary equilibria. \vmedit{
Our framework is partly inspired by the seminal works of
\citet{BurdettColes97} and \citet{smith2006marriage} in this literature. These papers consider matching models with nontransferable utility in which agents of each side belong to a continuum of types.
In the work of \citet{BurdettColes97}, an agent's match utility only depends on the type (pizazz) of the other agent, while \citet{smith2006marriage} considers more general pair-specific utilities (production) that may be symmetric (akin to our preference structure). 
However, unlike these papers, we do not assume that match utility is deterministic and vertically differentiated.\footnote{\vmedit{\cite{chade2006matching} extends the model of~\citet{BurdettColes97} to incorporate uncertainty in preferences; however, their model differs crucially from ours in that, even upon meeting, only a noisy version of the utility is observed.}} This implies that, in our general model, types are incomparable and  assortative matching is not well-defined.\footnote{\vmedit{After our Example~\ref{example:vertical}, which considers a vertically differentiated market, we further discuss the connection between this example and one of the  examples presented in~\citet{BurdettColes97}.}} (See Example~\ref{example:horizontal} for a horizontally differentiated market within our framework.) }


\vmedit{Another line of work in this literature studies the relationship between the stationary equilibria of decentralized search under uniformly random meeting and corresponding sets of stable matchings.}
Under a ``cloning'' assumption---which keeps the distribution of agents unchanged by asserting that each time two agents match and leave, two single agents identical to them join---\citet{adachi2003search} shows that as search friction vanishes, the set of stationary equilibria converges to the set of stable matchings of a corresponding centralized market. 
Relaxing such a cloning assumption, and thus allowing for the distribution of agents to be endogenously determined, \citet{LauermannNoldeke14} show that stationary equilibria converge to stable matchings if and only if there exists a unique stable matching in a corresponding discrete market. 

\vmedit{Following this literature, 
we also model search friction as a time consuming process of meeting partners and abstract away from incorporating 
any potential second-stage decision or cost
upon meeting (e.g., whether to go on a date with a potential partner). 
However, we emphasize that} we complement this literature by taking a design approach---motivated by online matching platforms---to optimize
directed search (i.e., non-uniform and pairwise type-specific meetings) in a model with cardinal and symmetric preferences. \vmedit{In the spirit of designing the search environment, \citet{shimer2001matching} study a model with transferable utility in which search is costly and search intensities can vary across types. (However, agents still meet uniformly at random within the searching population.) They show that a socially optimal solution (i.e., first-best) can arise as an equilibrium of a decentralized search process with 
a linear tax or subsidy on search intensity.}

\paragraph{Directed Search and Platform Interventions.}
Moving beyond random meeting, a series of recent papers has studied different forms of platform design and intervention to facilitate search. 
\citet{halaburda2016competing} focus on the impact of limiting choice and show that when agents' outside options are heterogeneous, a platform that offers limited choice and charges a higher price can still compete with ones without any choice restriction. 
A recent work of \citet{KanoriaSaban21} shows that limiting the action of agents, e.g., allowing for only one side to propose, and hiding information about the quality of some agents can lead to welfare improvement. 
 \vmedit{We now further discuss some of the key differences between our paper and \citet{KanoriaSaban21}. First, we remark that the set of actions of agents in \citet{KanoriaSaban21} is richer in that, in addition to deciding on whether to match upon meeting, agents decide on whether to request meeting a candidate and whether to inspect that candidate (at a cost).
While the model of \citet{KanoriaSaban21} also includes exogenous departure rates, their paper focuses on the regime where departure rates approach zero. As a result, each agent has effectively unboundedly many meeting opportunities and the main search friction is inspection cost. 
In contrast, in our model, each agent has a finite number of meeting opportunities; upon each meeting---after observing the match utility---the agent decides whether to accept the match or to risk departing unmatched before meeting another candidate. (As mentioned above, a majority of papers in the search literature take a similar modeling approach.)
Another key difference is the nature of the platform intervention. \citet{KanoriaSaban21} investigate blocking one side from proposing and hiding information about quality in the special case of a vertical market where one side consists of a single type and the other one has two (high/low) types. In contrast, we focus on designing meeting distributions---inspired by platform recommendation systems---for general (not necessarily vertical) markets. Overall,} compared to the \vmedit{aforementioned} papers, our model and our intervention are more fine-grained, in that both sides can have multiple \vmedit{(potentially incomparable)} types and we design pairwise type-specific meeting rates.


The approach of \citet{banerjee2017segmenting} to designing visibility graphs in a two-sided market with different types of buyers and sellers bears some similarity with ours. However, there are several key differences: We consider a matching market without transfers in which the platform's directed search design impacts agents' acceptance thresholds on both sides, while 
\citet{banerjee2017segmenting} focus on a network of buyers and sellers exchanging a single undifferentiated good in which the platform impacts clearing prices by choosing a visibility subgraph. 

Motivated by the role of matching platforms in shaping agent's choice, a sequence of recent papers \citep{ashlagi2019assortment, aouad2020online, rios2020improving} consider the problem of assortment planning in two-sided matching in both static and dynamic settings. 
In this novel line of work, agents are non-strategic in that their behavior is captured by a choice model such as the Multinomial Logit model. Consequently, agents are oblivious to the action of those on the other side and the platform design. While sharing a similar motivation with this emerging literature, our work complements it by capturing agents' strategic behavior in response to how the platform's design guides the search process. 
In our work, the set of meeting rates for a type can be viewed as an assortment of options that the platform (sequentially) offers to an agent type. The equilibrium response to the collection of such assortments determines the matching and welfare outcome.


\paragraph{Matching with Incomplete Information.}
Outside the framework of search, several papers explore the problem of finding a stable matching where preferences are a priori unknown. 
Taking a communication complexity approach, \citet{gonczarowski2019stable} and \citet{ashlagi2020clearing} establish bounds on the the amount of communication, measured by the number of bits, needed to find a stable match in markets with private preferences. The recent work of \citet{immorlica2020information} focuses on a setting where learning preferences is costly and show how costly information acquisition impacts an agent's preference.
Furthermore, a few recent papers, such as \citet{liu2020competing}, use the multi-armed bandit framework to model the process of learning preferences as  an online learning problem and develop efficient learning algorithms.  Finally, in another direction, \citet{emamjomeh2020complexity} analyze an iterative query process for learning a stable matching under general preferences.
\vmedit{A closely related line of work focuses on specific matching algorithms such as the Men-Proposing Deferred Acceptance Algorithm, and examines agents' incentives to manipulate  preferences (see for example \citet{roth1982economics}, \citet{coles2014optimal}, and  \citet{immorlica2015incentives}).}

\paragraph{Matching in Dynamic Environments. }
\vmedit{
A growing number of papers study the design of dynamic matching markets motivated by various applications ranging from school choice \citep{feigenbaum2020dynamic} to ride-sharing \citep{nikzad2017thickness} to public housing \citep{afeche2021optimal}. In this body of literature, several papers focus on the timing (or frequency) of matching in a centralized market and the trade-off between market thickness and waiting time \citep{AshlagiKE, ashlagi2019matching, akbarpour2020thickness}. In a closely related direction, papers such as \citet{doval2019efficiency} and \citet{arnosti2020design} study the setting where strategic agents have heterogeneous preferences and thus face a trade-off between matching with a less preferred choice (including their outside option) or waiting longer. Recently, several papers examine the role of information design on managing congestion in dynamic allocation problems \citep{anunrojwong2021information, ashlagi2021optimal, che2021optimal}.  
Our work complements the aforementioned papers by studying the design of a dynamic matching market in which strategic agents with heterogeneous preferences repeatedly meet potential matches recommended by the platform. 
}







\section{Model}\label{sec:model}

In this section, we introduce a search model in a dynamic, two-sided matching market, where the search process is facilitated by recommendations from a centralized platform. A graphical depiction of our model, focused on the dynamics around a single type, is shown in \Cref{fig:flows}.

\ifx\acmConference\undefined
    \begin{figure}
        \FIGURE
        {\includesvg[scale=0.4]{figures/flows.svg}}
        {This schematic depicts the dynamics for a single type $\type$. Each dark gray box represents a type, with the size corresponding to the type's population mass. The light gray connections between boxes correspond to how agents meet under directed search. The colored lines depict inflow and outflow for type $\type$, with width corresponding to flow rate: Blue for arrivals, green for matching, and red for leaving unmatched. In stationary equilibrium, the inflow should balance the outflow (as is depicted).\label{fig:flows}}{}
    \end{figure}
\else
    \clearpage
    \begin{wrapfigure}{r}{0.5\textwidth}
        \vspace{-3em}
        \centering
        \includesvg[scale=0.4]{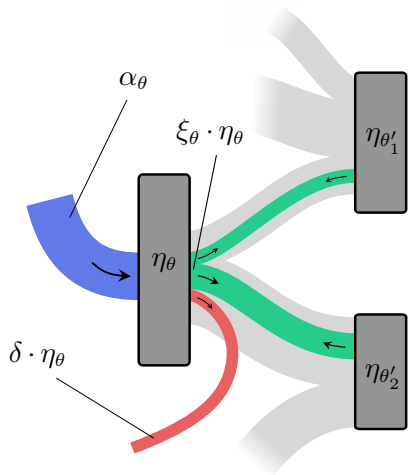}
        \caption{This schematic depicts the dynamics for a single type $\type$. Each dark gray box represents a type, with the size corresponding to the type's population mass. The light gray connections between boxes correspond to how agents meet under directed search. The colored lines depict inflow and outflow for type $\type$, with width corresponding to flow rate: Blue for arrivals, green for matching, and red for leaving unmatched. In stationary equilibrium, the inflow should balance the outflow (as is depicted).}
        \label{fig:flows}
    \end{wrapfigure}
\fi

\subsection{Agents, Types, and Dynamics}\label{sec:dynamics}

We model our matching market as a flow economy, where the market consists of a continuum of agents of infinitesimal mass.
Agents on the two sides are differentiated into finitely many types
belonging to the sets $\Men$ and $\Women$, respectively.
Agents' types determine prior distributions over their preferences (see \Cref{sec:utilities}). We use $\Types = \Men\cup\Women$ to denote the combined type space of all agents. For each type $\type\in\Types$, we denote 
the population mass of type $\type$ agents present on the platform by
$\mass_\type$.
Throughout, we will use the notational convetion that $\typeii$ refers to a generic type on the opposite side of the market from type $\type$.

Our matching market takes place in continuous time. To track change in the population masses $\mass_\type$ over time, we use \emph{flow rates}---the mass of agents arriving or departing per unit time.\footnote{Although $\mass_\type$ can in general vary over time, we suppress the time index on $\mass_\type$ for notational compactness and because our eventual focus will be on stationary equilibria, in which case $\mass_\type$ will be constant over time.} We assume entry into the market is fixed exogenously, with type $\type$ agents entering at a constant flow rate of $\alpha_\type$ (i.e., an $\alpha_\type$ mass of type $\type$ agents enters per unit time). This corresponds to the blue inflow in \Cref{fig:flows}. On the other hand, we let departure from the market be partially determined endogenously, with all agents eventually either leaving with a match or experiencing an exogenous ``life event'' that causes them to leave unmatched. More specifically, the two ways an agent can depart are as follows:
\begin{description}
    \item[Matching.] 
    Agents leave the market when they enter into a mutually agreed upon match with another agent (see the green outflow in \Cref{fig:flows}).
    We define $\xi_\type$ so that $\xi_\type\mass_\type$ is the flow rate at which type $\type$ agents do so. That is, each \emph{individual} of type $\type$ (assuming symmetry between agents of type $\type$) leaves matched with probability $\xi_\type\,d\tee$ during the infinitesimally small time interval $[\tee, \tee + d\tee)$. The value of $\xi_\type$ is determined both by how the platform recommends potential matches and who the agents themselves choose to match with (see \Cref{lemma:rate}).\footnote{Note that $\xi_\type$ can also vary over time; we suppress the time index as we did for $\mass_\type$.}
    \item[Life event.] Agents also leave unmatched when they experience a ``life event'' (see the red outflow in \Cref{fig:flows}). We assume life events occur randomly for each individual at a constant rate of $\delta\,d\tee$. That is, each individual experiences a life event with probability $\delta\,d\tee$ during each infinitesimally small time interval $[\tee, \tee + d\tee)$. (Equivalently, the time an agent spends in the market is drawn (unknown to them) from an exponential distribution of rate $\delta$.) At the individual level, the possibility of leaving unmatched acts as a discount factor and ensures that agents do not search forever. At the platform level, individuals leave at a total flow rate of $\delta\mass_\type$, which ensures that the mass of unmatched agents does not grow unboundedly.
\end{description}

Our focus will be on the stationary equilibria of this market, where the inflow of arriving agents balances the outflow of departing agents. We say that \emph{stationarity} holds if 
\begin{equation}\label{eq:stationarity}
    \alpha_\type = (\delta + \xi_\type)\mass_\type
\end{equation}
for all $\type\in\Types$. Note that $\delta + \xi_\type$ is the total rate at which \emph{individual} type $\type$ agents exit the market.

\subsection{Preferences and Utilities}\label{sec:utilities}

We assume that agents have cardinal preferences (i.e., a utility associated to matching with each potential partner) and normalize the value of leaving unmatched to $0$.
In our model, the utility values of specific potential partners are \emph{unknown} ex ante. Instead, pairs of agents learn their valuations for each other when they meet while searching (see \Cref{sec:directedsearch}), with their utilities being drawn from a prior distribution given by their types.\footnote{We assume that agents do not meet twice, since for any agent, there are uncountably many agents of each type that they could meet.} These priors represent the fact that agents may have idiosyncratic preferences---beauty being in the eye of the beholder---that are not captured by their type. We assume that these priors are common knowledge among the platform and the agents.

More formally, the utilities that two agents of types $\men\in\Men$ and $\women\in\Women$ derive if they match are drawn from some joint distribution $\Dist_{\men\women}$. Because we use $\Dist_{\men\women}$ to capture idiosyncrasies in preferences, we assume that utilities are drawn independently from $\Dist_{\men\women}$ for each such pair of agents.
To model mutual compatibility, we also assume agents have \emph{symmetric valuations}, meaning that if two agents match, then each agent derives the same utility $\utility\in\R$ from the match.

\vmedit{As mentioned in the introduction, an agent’s type captures their common features. 
In our abstract model, we assume that the platform directly observes these features and thus the agent's type.  While some features in practice may be self-reported by agents and thus present opportunities for  misreporting, we argue that this is not a major concern in our setting and motivating applications.
In practice, repeated meetings make verification easier: For example, if one misreports their height, then that would be revealed after meeting a potential partner, who could report back to the platform. At the same time, untruthful reporting  may result in severe punishments for an agent, such as being permanently banned from the platform.\footnote{\vmedit{For example, \href{https://hinge.co/terms}{Hinge's Terms of Service} require each user to not ``{\em Misrepresent your identity, age, current or previous positions, qualifications, or affiliations with a person or entity.}'' Violating these terms results in being banned from the platform.}} On a more abstract level, in many settings of interest, e.g., horizontally differentiated markets, it is unclear whether misreporting even helps an agent.}


Under the symmetric valuations assumption, each joint distribution $\Dist_{\men\women}$ can be thought of as a distribution over $\R$ with cumulative distribution function $\CDF_{\men\women}$. To avoid a technical discussion of tiebreaking\footnote{While all of our arguments work with discrete distributions, one has to take care to define tiebreaking properly. On the other hand, note that any non-continuous distribution can be approximated by a continuous distribution by adding a small amount of (bounded) noise.} we make the mild assumption that $\Dist_{\men\women}$ is a continuous distribution for all $(\men, \women)$. Then $\CDF_{\men\women}$ is absolutely continuous and $\CDF_{\men\women}(\thresh) = \Pr(\utility\le\thresh)$ for $\utility\sim\Dist_{\men\women}$. Finally, to make our notation for distributions symmetric, we define $\Dist_{\women\men}\coloneqq\Dist_{\men\women}$ and $\CDF_{\women\men}\coloneqq\CDF_{\men\women}$ for all $(\men, \women)$.

\subsection{Directed Search}\label{sec:directedsearch}

A key feature of our model is that search is mediated by a platform. At a high level, the platform directs search by picking for each agent type $\type$ the rates at which they ``meet'' other agent types while searching. For each agent, search consists of a sequence of meetings with potential partners until either a match is formed or they leave unmatched. When two agents meet, each must decide whether to accept or reject the match. If both parties accept (i.e., there is a double coincidence of wants), then they match with each other and leave the market. Otherwise, both agents remain in the market and continue their respective searches.

At the individual level, the frequency at which an agent encounters other agents of any given type follows a Poisson process with rate given by the platform's directed search policy. Upon two agents of types $\type$ and $\typeii$ meeting, their valuation $\utility$ for the match is drawn from $\Dist_{\type\typeii}$ and revealed to them. Each agent then decides whether to accept or reject the match, with the match forming if and only if both accept. If the agents match, each leaves the market with utility $\utility$; otherwise, each continues their search. We assume agents are rate-limited and normalize time so that agents can meet at most one candidate per unit time on average. That is, we require the total frequency of meetings for any agent to follow a Poisson process with at most unit rate---this is our \emph{capacity constraint}.

At the platform level, directed search can be thought of as some mass $\glow_{\men\women}$ of type $\men$ and type $\women$ agents, for each pair of types $(\men, \women)$, meeting each other per unit time. In particular, \emph{equal} masses of type $\men$ and type $\women$ agents should meet each other over any time interval. Note that this property must be satisfied if we were to explicitly pair agents in a discrete model. While our model is continuous, we can still imagine the platform ``pairing'' equal masses of randomly sampled agents for meetings at each instant of time. By requiring that agent-level Poisson processes be consistent with this fact, we obtain our \emph{flow balancing} constraint.

We now state the directed search model sketched above in more formal terms. For any $\type\in\Types$, define an \emph{assortment} to be a collection of rates $\rate_{\type}(\typeii)\ge 0$ for types $\typeii$ on the opposite side such that type $\type$ agents meet type $\typeii$ agents following a Poisson process with rate $\rate_{\type}(\typeii)$. The platform, as part of its design, chooses an assortment $\rate_\type$ for each type $\type\in\Types$. \vmdelete{(Note that we implicitly restrict the platform to designs that are anonymous and independent of agents' histories.)}

A set of assortments $\{\rate_\type\}_{\type\in\Types}$ is \emph{feasible} if it satisfies the following two sets of constraints:
\begin{description}
    \item[Capacity.] The total rate of meetings for type $\type$ agents is bounded by $1$ for all types $\type\in\Types$:
        \begin{equation}\label{eq:capacity}
            \sum_{\typeii} \rate_\type(\typeii)\le 1.
        \end{equation}
        (Note that $\sum_{\typeii} \rate_\type(\typeii)$ is the total rate because merging Poisson processes sums their rates.)
    \item[Flow balance.] For all pairs $(\men, \women)\in\Men\times\Women$, the mass of type $\women$ agents met by type $\men$ agents per unit time equals the mass of type $\men$ agents met by type $\women$ agents per unit time. By an exact law of large numbers \citep{sun2006elln, duffie2018dynamic}\footnote{As in the related literature on search and matching, we formally require a continuous-time exact law of large numbers for random matching; such a result has only recently been developed rigorously for discrete-time settings by \citet{duffie2018dynamic}.}, we can write this condition as
        \begin{equation}\label{eq:flowbalance}
        \glow_{\men\women}\coloneqq\mass_\men\rate_{\men}(\women) = \mass_\women\rate_{\women}(\men).
        \end{equation}
        That is, the mass $\glow_{\men\women}$ of type $\men$ and $\women$ agents that meet per unit time is well-defined.
\end{description}
We assume the platform can direct search according to any feasible set of assortments.

\awedit{To motivate the Poisson assumption, we think of the platform as implementing a feasible set of assortments by randomly sampling a subset of each type to meet some other type at each instant in time. With uniform sampling, individual agents will experience meetings at a Poisson rate. Note that by assuming uniform sampling, we implicitly restrict the platform to designs that are anonymous and independent of agents' histories.}

\section{The Platform's Design Problem}\label{sec:design}

In this section, we introduce the platform's design problem, namely: \emph{How should the platform suggest meetings in directed search to maximize social welfare in equilibrium?} In order to formally define the problem, we first describe the agents' strategic decision-making problem and briefly characterize the equilibria that arise.

\subsection{Strategies and Equilibrium Play}\label{sec:strategies}

Any set $\{\rate_\type\}_{\type\in\Types}$ of assortments defines a game for the agents, in which agents strategically decide which potential partners to accept and reject in their meetings. As our focus will be on stationary equilibria, we assume agents play \emph{time-} and \emph{history-independent} strategies. We further assume that strategy profiles are \emph{symmetric} within each type. We can thus write the strategy of type $\type$ agents as a function $\strategy_\type(\typeii, \utility)$ taking values in $[0, 1]$, specifying the probability with which a type $\type$ agent accepts when meeting a type $\typeii$ agent whom they value at utility $\utility$.

A property of our setup is that no individual agent can directly influence the behavior of other agents, as agents are only affected by the action profiles of types \emph{in aggregate}. Thus, rather than modeling agents as playing a full-fledged dynamic game, we can think of each agent as facing a Markov decision process (MDP) derived from their assortment $\rate_\type$ and strategies $\strategy_\typeii$ of the agents on the opposite side.\footnote{Note that agents are not directly affected by the actions of other agents on the same side.} Then, a strategy $\strategy_\type$ is a best response for type $\type$ agents if and only if it is an optimal policy for this MDP. It follows that a strategy profile $\{\strategy_\type\}_{\type\in\Types}$ is a Nash equilibrium if and only if the strategy of each type is an optimal policy for the MDP given by their assortment and opposite side's strategies.

In \Cref{sec:mdp}, we formally describe the agents' MDPs and show that any Nash equilibrium in ``non-dominated'' strategies is equivalent to one in \emph{threshold strategies}, i.e., strategies $\sigma_\type$ for which there exists a threshold $\thresh_\type$ such that $\sigma_\type(\typeii, \utility) = 1$ if $\utility\ge\thresh_\type$ and $\sigma_\type(\typeii, \utility) = 0$ if $\utility < \thresh_\type$.\footnote{In our analysis, we eliminate dominated strategies that reject matches worth more than one's expected continuation utility. Performing such a pruning is necessary to rule out degenerate equilibria (e.g., the one where agents reject all of their potential matches because they do not expect anyone to ever accept).} In fact, each threshold $\thresh_\type$ in this equilibrium equals the expected utility $\contutility_\type$ of type $\type$ agents upon entry. (Note that $\contutility_\type$ depends on type $\type$'s strategy, the other agents' strategies, as well as the platform's assortments---see \Cref{lemma:payoff}.) Furthermore, we show in \Cref{sec:formalities-unique} that
the equilibrium prediction of our model is unique in the following sense: For any set of assortments, our model makes a unique prediction of the distribution of matches that are realized. We record these observations in the following proposition \vmedit{(proven in \Cref{sec:formalities-unique})}:

\begin{proposition}[Structure of Nash Equilibria]\label{proposition:unique-main}
Given any fixed set of assortments $\{\rate_\type\}_{\type\in\Types}$:
\begin{enumerate}
    \item There exists a unique strategy profile where each type plays a threshold strategy with threshold equal to that type's expected utility. Moreover, this strategy profile is a Nash equilibrium.
    \item Any Nash equilibrium in non-dominated strategies produces the same distribution (in both types and utilities) of realized matches as the preceding Nash equilibrium in threshold strategies.
\end{enumerate}
\end{proposition}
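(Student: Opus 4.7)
The plan is to characterize best responses as threshold strategies via an MDP formulation, establish existence of the equilibrium in part~(1) through a fixed-point argument, prove uniqueness by exploiting the symmetric-preference structure, and handle part~(2) via continuity of the utility distributions. Fix the assortments $\{\rate_\type\}_{\type\in\Types}$ and an opposite-side strategy profile $\{\strategy_\typeii\}$. A type-$\type$ agent then faces a stationary, time-homogeneous Markov decision problem: meetings with type-$\typeii$ agents arrive as independent Poisson streams at rates $\rate_\type(\typeii)$, each yielding a utility draw $u \sim \Dist_{\type\typeii}$; a meeting produces a match only if both sides accept; and an exogenous life event ends the agent's search at rate $\delta$. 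Since $\delta > 0$, the Bellman equation has a unique bounded solution $\contutility_\type$, and a standard optimal-stopping argument shows that the optimal policy is to accept iff $u \geq \contutility_\type$. Hence every best response is a threshold strategy with $\thresh_\type = \contutility_\type$, and the equilibrium condition reduces to a coupled fixed-point system on the thresholds: when all types play threshold strategies, the pair $(\type,\typeii)$ matches iff $u \geq \max(\thresh_\type, \thresh_\typeii)$, and self-consistency requires $\thresh_\type = \contutility_\type(\thresh)$ for every $\type$. Existence then follows from Brouwer's theorem applied to the continuous self-map $\thresh \mapsto \contutility(\thresh)$ on a compact cube $[0, M]^{\Types}$, where $M$ uniformly bounds the continuation values (finite under any mild integrability assumption on the $\Dist_{\men\women}$).

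I expect uniqueness to be the main obstacle. The best-response map is neither monotone nor contractive in any obvious sense: raising $\thresh_\type$ both permits skipping marginal matches and increases rejections from pickier partners, and these two effects propagate nonlinearly through the network of types. The plan is to exploit symmetric preferences, which induce a total order on the realized utility of any given meeting and thereby preclude the cyclic acceptance-rejection patterns that could otherwise support multiple equilibria. Concretely, I would iterate the best-response map from $\thresh^{(0)} = \mathbf{0}$ and track, for each pair $(\type,\typeii)$, the evolution of the effective acceptance level $\max(\thresh_\type^{(t)}, \thresh_\typeii^{(t)})$. Because there are only finitely many pairs and the linear order on utilities rules out the cyclic dependencies that would obstruct convergence, the activation pattern should stabilize after finitely many rounds at a limit independent of the starting configuration, delivering uniqueness (as recorded in \Cref{proposition:unique}).

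For part~(2), observe that any non-dominated best response must accept every meeting with $u > \contutility_\type$ (refusing is strictly dominated by accepting such an offer, given continuation value $\contutility_\type$) and must reject every meeting with $u < \contutility_\type$. Hence any non-dominated Nash equilibrium agrees with the threshold equilibrium of part~(1) except possibly on the singleton $\{u = \contutility_\type\}$. Since each $\CDF_{\men\women}$ is continuous, this singleton has $\Dist_{\men\women}$-measure zero, so the joint distribution over type pairs and realized match utilities is identical across all non-dominated equilibria.
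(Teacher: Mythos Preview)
Your MDP setup and threshold characterization are correct and match the paper's approach. Brouwer for existence is a valid alternative to the paper's constructive best-response iteration; the paper's route is more informative because it actually shows the iteration converges in $O(|\Types|)$ rounds, yielding an efficient algorithm, but your non-constructive argument is fine for the bare existence claim. Part~(2) is essentially right: continuity of the $\CDF_{\men\women}$ kills the boundary event $\{\utility = \contutility_\type\}$, and the rest is forced by best-response plus non-domination.

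The genuine gap is uniqueness. Your plan---iterate from $\mathbf 0$, track the pairwise acceptance levels $\max(\thresh_\type^{(t)},\thresh_\typeii^{(t)})$, and assert that the pattern ``should stabilize after finitely many rounds at a limit independent of the starting configuration''---does not supply the mechanism that makes this true. Convergence of the iteration from one starting point gives existence, not uniqueness; you need either to show the limit is the same from every start, or to argue uniqueness directly. The paper does both, and the direct argument is the cleaner one you are missing: suppose two threshold equilibria $\{\thresh_\type\}$ and $\{\threshii_\type\}$ differ, and pick a type $\type$ with $\thresh_\type\neq\threshii_\type$ for which $\max(\thresh_\type,\threshii_\type)$ is maximal; say $\thresh_\type>\threshii_\type$. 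Then for every opposite-side $\typeii$, maximality forces $\max(\thresh_\type,\thresh_\typeii)=\max(\thresh_\type,\threshii_\typeii)$ (if $\thresh_\typeii>\thresh_\type$ then $\thresh_\typeii=\threshii_\typeii$; otherwise both are dominated by $\thresh_\type$). Plugging into the fixed-point formula for $\contutility_\type$ shows that type $\type$'s payoff against $\{\threshii_\typeii\}$ at threshold $\thresh_\type$ equals $\thresh_\type>\threshii_\type$, contradicting that $\threshii_\type$ was type $\type$'s equilibrium threshold. This is where the symmetric-preference structure actually bites: it is the common ordering of match values that lets you pick a ``highest'' disagreeing type and freeze everything above it. Your proposal gestures at this intuition (``linear order on utilities rules out cyclic dependencies'') but never cashes it in.

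For the convergence half, the paper's mechanism is similarly concrete: starting from $\mathbf 0$, odd-iteration thresholds decrease, even-iteration thresholds increase, odd dominates even pointwise, and after each odd iteration the currently highest unfrozen threshold freezes (because no unfrozen type's threshold can exceed it thereafter, so its best response stops changing). That is what gives the $O(|\Types|)$ bound. Your proposal should include one of these two arguments rather than deferring to the proposition you are trying to prove.
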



In light of \Cref{proposition:unique-main}, it suffices to focus our attention on equilibria in threshold strategies. To facilitate our discussion of such equilibria, we state expressions for the rate $\xi_\type$ at which agents match \vmedit{(in \Cref{lemma:rate})} and their expected utility $\contutility_\type$ when all agents play threshold strategies \vmedit{(in \Cref{lemma:payoff})}.

\begin{lemma}[Matching Rate]\label{lemma:rate}
Suppose each type $\type\in\Types$ plays a threshold strategy with threshold $\thresh_\type$. Then the rate $\xi_\type$ at which individual type $\type$ agents match is
\begin{equation}\label{eq:rate}
\xi_\type = \sum_{\typeii} \paren*{\rate_\type(\typeii)\int_{\max(\thresh_\type, \thresh_\typeii)}^\infty \,d\CDF_{\type\typeii}}.
\end{equation}
\end{lemma}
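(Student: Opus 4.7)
The claim is essentially a bookkeeping identity, so my plan is to decompose the overall matching rate of a type $\type$ individual into a sum of rates for matching specifically with each opposite-side type $\typeii$, and then to evaluate each term by conditioning on the event of a meeting.

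First, I would note that the meetings of a type $\type$ individual with the various opposite-side types are independent Poisson processes (with respective rates $\rate_\type(\typeii)$), and that ``matching'' events are a thinning of these processes; thus by the superposition and thinning properties of Poisson processes, the total matching rate decomposes as $\xi_\type = \sum_{\typeii}\rate_\type(\typeii)\cdot p_{\type\typeii}$, where $p_{\type\typeii}$ is the probability that a given meeting with a type $\typeii$ agent produces a match.

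Next, I would compute $p_{\type\typeii}$ by conditioning on the realized valuation $\utility\sim\Dist_{\type\typeii}$, which is drawn independently for each meeting. Because preferences are symmetric, both agents observe the \emph{same} $\utility$. A match is realized exactly when both agents accept, and under threshold strategies this holds iff $\utility\geq\thresh_\type$ and $\utility\geq\thresh_\typeii$, i.e., iff $\utility\geq\max(\thresh_\type,\thresh_\typeii)$. Therefore $p_{\type\typeii}=\Pr_{\utility\sim\Dist_{\type\typeii}}[\utility\geq\max(\thresh_\type,\thresh_\typeii)]=\int_{\max(\thresh_\type,\thresh_\typeii)}^{\infty} d\CDF_{\type\typeii}$. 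Substituting into the sum over $\typeii$ yields the stated formula.

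One small subtlety I would want to be careful about is the independence assumption implicit in writing match formation as a simple thinning: the strategy of the type $\typeii$ agent met at a given instant must be independent of anything the type $\type$ agent has observed in the past, so that conditional on a meeting with type $\typeii$, the valuation is genuinely a fresh draw from $\Dist_{\type\typeii}$ and the opposite-side acceptance probability is exactly $\Pr[\utility\geq\thresh_\typeii]$. This is justified because (i) the opposite-side strategy profile is symmetric within each type and history-independent, (ii) the prior assumption that agents do not meet twice rules out repeat encounters, and (iii) valuations are drawn i.i.d.\ across pairs. Assuming this justification, the rest is straightforward; I do not expect any substantive obstacle beyond being explicit about these modeling facts when invoking thinning. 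No further calculation is needed, and summing over $\typeii$ produces the displayed expression for $\xi_\type$.
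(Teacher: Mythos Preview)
Your proposal is correct and follows essentially the same approach as the paper: decompose the total matching rate as a sum over opposite-side types of the meeting rate $\rate_\type(\typeii)$ times the probability that a given meeting results in a match, then evaluate that probability as $1-\CDF_{\type\typeii}(\max(\thresh_\type,\thresh_\typeii))=\int_{\max(\thresh_\type,\thresh_\typeii)}^\infty d\CDF_{\type\typeii}$. The paper's proof is a two-sentence version of exactly this argument; your explicit invocation of Poisson thinning and the independence caveat are reasonable elaborations but not additional ideas.
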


\begin{proof}
The rate at which an agent of type $\type$ matches is the sum of the rates at which they meet each type $\typeii$ weighted by the probability such a meeting results in a match. The latter probability is \smash{$1 - \CDF_{\type\typeii}(\max(\thresh_\type, \thresh_\typeii)) = \int_{\max(\thresh_\type, \thresh_\typeii)}^\infty \,d\CDF_{\type\typeii}$} in terms of the agents' thresholds $\thresh_\type$ and $\thresh_\typeii$.
\end{proof}

\begin{lemma}[Expected Utility]\label{lemma:payoff}
Suppose each type $\type$ plays a threshold strategy with threshold $\thresh_\type$. Then the expected payoff\, $\contutility_\type$ of type $\type$ agents is
\begin{equation}\label{eq:payoff}
\contutility_\type = \frac{\sum_{\typeii}\paren[\Big]{\rate_{\type}(\typeii)\int_{\max(\thresh_\type, \thresh_\typeii)}^\infty\utility \,d\CDF_{\type\typeii}}}{\delta + \sum_{\typeii}\paren[\Big]{\rate_{\type}(\typeii)\int_{\max(\thresh_\type, \thresh_\typeii)}^\infty \,d\CDF_{\type\typeii}}}.
\end{equation}
(Note that by \Cref{lemma:rate}, the denominator can also be written as $\delta + \xi_\type$.)
\end{lemma}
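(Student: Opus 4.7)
The plan is to reduce the computation of $\contutility_\type$ to a race between competing exponential clocks, using two key structural observations: the memorylessness of all relevant processes, and the fact that under threshold strategies, a rejected meeting leaves the agent in an identically distributed continuation problem. In particular, after a rejection the remaining lifetime is still $\mathrm{Exp}(\delta)$, the future meeting streams are still Poisson of the same rates, and no payoff has been collected; hence for the purpose of computing $\contutility_\type$, rejections can be discarded without loss.

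Formally, I would apply Poisson thinning to split each type-$\typeii$ meeting stream of rate $\rate_\type(\typeii)$ into two independent sub-streams: an ``acceptance'' sub-stream of rate $\rate_\type(\typeii)\int_{\max(\thresh_\type,\thresh_\typeii)}^\infty d\CDF_{\type\typeii}$ whose arrivals correspond to realized matches (with i.i.d.\ marks drawn from $\Dist_{\type\typeii}$ restricted to $[\max(\thresh_\type,\thresh_\typeii),\infty)$), and a complementary rejection sub-stream that I discard. The agent then faces only two kinds of terminal events — the life-event clock (rate $\delta$) and the acceptance clocks (total rate $\xi_\type$ by \Cref{lemma:rate}) — so the standard competing-exponentials identity gives that the first event is a life event with probability $\delta/(\delta+\xi_\type)$ (yielding payoff $0$) or a type-$\typeii$ match with probability $\rate_\type(\typeii)\int_{\max(\thresh_\type,\thresh_\typeii)}^\infty d\CDF_{\type\typeii}\,/(\delta+\xi_\type)$, at conditional expected utility equal to the truncated mean $\int_{\max(\thresh_\type,\thresh_\typeii)}^\infty \utility\,d\CDF_{\type\typeii}\,/\int_{\max(\thresh_\type,\thresh_\typeii)}^\infty d\CDF_{\type\typeii}$. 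Taking the tower expectation, the truncation probability in the conditional mean cancels against that in the event probability, and the formula stated in \eqref{eq:payoff} falls out.

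The only step that requires real care is the ``discarding rejections'' claim: if instead one writes a naive Bellman equation averaging over every meeting (including rejected ones), one must argue that the continuation value after a rejection equals $\contutility_\type$ itself, which follows from the Markov property of the agent's MDP together with memorylessness, after which a short algebraic simplification recovers the same expression as the thinning argument. Beyond this conceptual point, the rest of the proof is routine manipulation of Poisson rates, so I do not anticipate further difficulty.
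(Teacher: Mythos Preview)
Your proposal is correct and takes essentially the same approach as the paper: both arguments exploit memorylessness to discard rejected meetings and then condition on the first terminal event (life event vs.\ successful match), computing the payoff as a probability-weighted sum of truncated means. The paper phrases this via its discrete-time MDP (``condition on the event that the agent does not return to $\Waiting$''), whereas you express the same idea in continuous time via Poisson thinning and competing exponentials, but the underlying logic and the resulting computation are identical.
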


While the proof of \Cref{lemma:payoff} depends on the formal definition of the MDP, the intuition for the formula is simple: For any type $\type$ agent, the memorylessness of the MDP implies their expected utility equals their expected utility conditioned on leaving at any particular instant. Conditioning on such an event, we see that with probability proportional to $\delta$, they left due to a life event, and with probability proportional to \smash{$\rate_\type(\typeii)\int_{\max(\thresh_\type, \thresh_\typeii)}^\infty \,\CDF_{\type\typeii}$}, they left because they matched with an agent of type $\typeii$. In equation \eqref{eq:payoff}, we simply express the agent's expected utility as the sum of their expected utilities in each of these scenarios weighted by probability of occurrence.

\subsection{Optimal Directed Search}\label{sec:optimaldirectedsearch}


The platform's design problem is to design the search process via assortments in order to induce a stationary equilibrium that is (approximately) optimal in terms of social welfare. Thus, in addition to designing the assortments, the platform should ensure that those assortments are sustainable in equilibrium. When analyzing the design problem, we assume that the platform has access to the \vmedit{departure} rate $\delta$, the arrival rates $\alpha_\type$ for all $\type\in\Types$, and the preference distributions $\Dist_{\men\women}$ for all $(\men, \women)\in\Men\times\Women$.

More formally, we say that a set of assortments $\{\rate_\type\}_{\type\in\Types}$ 
induces a \emph{stationary equilibrium} if, for the equilibrium strategy profile $\{\strategy_\type\}_{\type\in\Types}$, there exist population masses $\{\mass_\type\}_{\type\in\Types}$ such that:
\begin{enumerate}
    \item Stationarity (as defined in \cref{eq:stationarity}) holds for all types $\type\in\Types$.
    \item The assortments are feasible with respect to the population masses (i.e., \eqref{eq:capacity} and \eqref{eq:flowbalance} hold).
\end{enumerate}
By \Cref{proposition:unique-main},
a set of assortments $\{\rate_\type\}_{\type\in\Types}$ implies a unique Nash equilibrium in threshold strategies.
Furthermore, these threshold strategies determine $\xi_\type$ for each $\type$ by \Cref{lemma:rate}. Given $\{\xi_\type\}_{\type\in\Types}$, the stationarity condition then determines $\mass_\type$ for each $\type$. Therefore, any choice of assortments $\{\rate_\type\}_{\type\in\Types}$ induces at most one stationary equilibrium, making the platform's optimization problem over the assortments well-defined.\footnote{Since our model uniquely predicts play given assortments, stationary equilibria are also self-sustaining once established.}

Recall that the platform's optimization objective is social welfare. For stationary equilibria, social welfare is defined as the total flow rate of utility realized by agents matching with each other. 
Note that the symmetric valuations assumption implies each of the two sides contributes exactly half of the total welfare.
The platform's optimization problem is thus to find the assortments $\{\rate_\type\}_{\type\in\Types}$ that induce a stationary equilibrium with maximum welfare.

Our setup naturally lets us cast the platform's design problem of finding a welfare-maximizing policy for directed search as a computational one: Given inputs $\delta$, $\{\alpha_\type\}_{\type\in\Types}$, and $\{\Dist_{\men\women}\}_{(\men, \women)\in\Men\times\Women}$, \emph{compute} the assortments $\{\rate_\type\}_{\type\in\Types}$ that induce a stationary equilibrium with maximum welfare. We name this problem $\PlatformOpt$.\footnote{To formalize our computational problem while avoiding details relating  numerical computation, we assume that the platform's knowledge of the preference distributions $\Dist_{\men\women}$ comes in the form of oracle access to the CDFs $\CDF_{\men\women}$ and the tail expectations $\int_{\thresh}^\infty \utility \,d\CDF_{\men\women}$. We further assume that the platform can solve for each agent's best response optimally. This is a mild assumption because the corresponding optimization problem boils down to a ternary search on a unimodal function of the aforementioned oracle's outputs (see \Cref{lemma:rho} and its generalization \Cref{lemma:rho-2} in \Cref{sec:formalities}).}

We remark that we do not dwell on dynamical issues beyond steady state (e.g., convergence to equilibrium from ``cold starts''). Instead, we assume that the platform has the power to manipulate entry into the market at time $0$ and can ensure that the requisite populations of agents are present for the desired equilibrium to sustain indefinitely. This motivates our objective for the platform of finding a welfare-maximizing stationary equilibrium.

\subsubsection{Optimal Directed Search as an Optimization Problem}

Stationarity, feasibility, and agents playing best responses can all be encoded (albeit indirectly) as constraints on the assortments. We can thus write $\PlatformOpt$ as the following constrained optimization problem:
\begin{subequations}\label{eq:opt}
\begin{alignat}{5}
\max_{\rate_{\type}, \thresh_{\type}, \xi_\type, \mass_\type}&\quad&& 
\sum_{\men\in\Men}\sum_{\women\in\Women} \paren*{\mass_\men \rate_\men(\women)\int_{\mathrlap{\max(\thresh_\men, \thresh_\women)}}\qquad\utility\,d\CDF_{\men\women}} + 
\sum_{\women\in\Women} \sum_{\men\in\Men}&& \paren[\bigg]{\mass_\women\rate_\women(\men)\int_{\mathrlap{\max(\thresh_\men, \thresh_\women)}}\qquad\utility\,d\CDF_{\men\women}}\tag{\ref{eq:opt}} \\[0.5em]
\text{such that}%
&&& \alpha_\type = (\delta + \xi_\type)\mass_\type&&\forall\type\in\Types\label{eq:opt-stationarity} \\
&&& \mass_\men\rate_\men(\women) = \mass_\women\rate_\women(\men)&&\forall (\men, \women)\in\Men\times\Women\label{eq:opt-flowbalance} \\
&&& 1\ge \sum_{\typeii}\rate_\type(\typeii)&&\forall\type\in\Types\label{eq:opt-capacity} \\
&&& \xi_\type = \sum_{\typeii} \paren*{\rate_\type(\typeii)\int_{\max(\thresh_\type, \thresh_\typeii)}^\infty \,d\CDF_{\type\typeii}}&&\forall\type\in\Types\label{eq:opt-rate} \\
&&& \thresh_\type = \frac{1}{\delta + \xi_\type}\sum_{\typeii}\paren*{\rate_{\type}(\typeii)\int_{\max(\thresh_\type, \thresh_\typeii)}^\infty\utility \,d\CDF_{\type\typeii}}&&\forall\type\in\Types\label{eq:opt-payoff} \\
&&& \rate_\men(\women), \rate_\women(\men)\ge 0 && \forall (\men, \women)\in\Men\times\Women.\label{eq:opt-nonnegativity}
\end{alignat}
\end{subequations}
Here, constraint \eqref{eq:opt-stationarity} is the stationarity constraint \eqref{eq:stationarity}; constraints \eqref{eq:opt-flowbalance} and \eqref{eq:opt-capacity} are the feasibility constraints of flow balance \eqref{eq:flowbalance} and capacity \eqref{eq:capacity}; constraint \eqref{eq:opt-rate} is formula \eqref{eq:rate} for individuals' rate of matching; constraint \eqref{eq:opt-payoff} states that each type's threshold is a fixed point of their MDP payoff function \eqref{eq:payoff}; and constraint \eqref{eq:opt-nonnegativity} simply requires that the meetings rates be non-negative.

The validity of constraint \eqref{eq:opt-payoff} follows from \Cref{proposition:unique-main}: Our model's unique prediction for which matches are realized occurs when agents all play threshold strategies with threshold equal to their expected utility. Moreover, the profile of thresholds satisfying this property is uniquely determined by the assortments. Hence the social welfare of an assortment is given by the flow of utility realized when all agents play such threshold strategies.

We note that while the optimization program \eqref{eq:opt} is written as an optimization problem over four sets of variables $\{\rate_\type\}_{\type\in\Types}$, $\{\thresh_\type\}_{\type\in\Types}$, $\{\xi_\type\}_{\type\in\Types}$, and $\{\mass_\type\}_{\type\in\Types}$, it is really an optimization problem over the assortments $\{\rate_\type\}_{\type\in\Types}$: Agents' thresholds $\{\thresh_\type\}_{\type\in\Types}$ and matching rates $\{\xi_\type\}_{\type\in\Types}$ are determined entirely by their assortments, and the stationary masses $\{\mass_\type\}_{\type\in\Types}$ are in turn determined entirely by the matching rates.


\subsection{Illustrative Examples: Directed Search in Horizontal and Vertical Markets}\label{sec:examples}

With our setup of the model and the platform's design problem now complete, we present 
\vmedit{two simple examples with fundamentally different market structures: a horizontal market and a vertical one. In these special cases, we are able to optimally solve $\PlatformOpt$ and thus gain insight into the structure and value of the optimal design. } 
\vmedit{Through the former example, we show that platform-directed search can provide significant welfare gains relative to random meeting. Through the latter, we illustrate some subtleties involving congestion and cannibalization that arise (even in this very restricted setting) when solving $\PlatformOpt$.}

\vmdelete{as an example a market where agents have deterministic and strictly vertical preferences. Through this example, we illustrate some subtleties involving congestion and cannibalization that arise (even in this very restricted setting) when solving $\PlatformOpt$.}


To motivate the utility of platform-directed search, we describe a market involving highly heterogeneous preferences (i.e., horizontal differentiation) where directed search leads to significantly greater efficiency than random meeting.

\begin{example}[Horizontal market]
\label{example:horizontal}
We consider a market where the two sides are $\Men = \{\men_1,\ldots,\men_\ntypes\}$ and $\Women = \{\women_1,\ldots,\women_\ntypes\}$, with types $\men_i$ and $\women_j$ achieving positive utility from matching only if $i = j$. Suppose also that the types are symmetric, with each type having the same arrival rate $\alpha_\type = 1$ and each pair of compatible types $(\men_i, \women_i)$ having the same preference distribution $\CDF_{\men_i\women_i} = \CDF$. Finally, let $\delta > 0$ be arbitrary.

We compare the welfares of two stationary equilibria---one under random meeting and another under optimal directed search. Consider the symmetric equilibrium where agents meet randomly. By symmetry, only a $1/n$ fraction of each agent's meetings are with agents of the compatible type. On the other hand, with the optimal directed search, the platform sets assortments so that $\rate_{\men_i}(\women_i) = \rate_{\women_i}(\men_i) = 1$ for all $i$.\footnote{This policy is optimal because its welfare matches the that of the first-best outcome (defined in \Cref{sec:reduction}).} Then, agents meet other agents of the same type with probability $1$. To compare these two equilibria, observe that the random meeting equilibrium is equivalent to the optimal directed search equilibrium with search friction $n\delta$---under random meeting, each agent has a $n$ times higher probability of leaving unmatched between meetings with compatible agents. In the limit of large $\delta$, where agents are likely to accept on their first meeting (or not match at all), this leads to a multiplicative gap approaching $n$ between the welfares of random meeting and optimal directed search.
\end{example}

This example demonstrates that in a market with horizontal preferences, there can be a significant amount of wasteful meeting if agents proceed with random search: In our example, under random meeting, only one in $\ntypes$ meetings involves pairs that are compatible with each other. On the other hand, with directed search, the platform could eliminate such wasteful meeting entirely. While admittedly stylized, this example nonetheless highlights the usefulness of directed search in horizontally differentiated markets.

\vmdelete{With our setup of the model and the platform's design problem now complete, we present as an example a market where agents have deterministic and strictly vertical preferences. Through this example, we illustrate some subtleties involving congestion and cannibalization that arise (even in this very restricted setting) when solving $\PlatformOpt$.}

In our \vmedit{second example, of a} vertically structured market, agents on each side of the market have a common preference ordering over the types on the opposite side. In such a market, one might expect that the optimal directed search scheme results in a positively assortative matching when agents' valuations are supermodular. However, this turns out not to always be the case. Because agents arrive at disparate rates, matching assortatively can lead to inefficiency due to congestion at the top of the market, and due to ``high type'' agents cannibalizing demand for ``low type'' agents. While such phenomena make finding the exact optimal assortments challenging, we will show in \Cref{sec:approximation} that we can nonetheless find a set of assortments that are approximately optimal.

\begin{example}[A Vertical Market]
\label{example:vertical}
Consider a market with strictly vertical preferences, where the agents on each side belong to either a high type $H$ or a low type $L$. That is, we let $\Men = \{\men_H, \men_L\}$ and $\Women = \{\women_H, \women_L\}$. Suppose the utility distributions $\CDF_{\men_H\women_H}$, $\CDF_{\men_L\women_H}$, $\CDF_{\men_H\women_L}$, and $\CDF_{\men_L\women_L}$ are point masses at $\utility(1 + 2\exeps)$, $\utility$, $1 + \exeps$, and $1$, respectively, for $\utility\gg 1$ and $\exeps\ll 1$.\footnote{An astute reader may notice that, technically, our preference distributions do not satisfy our continuity assumption. We present our example this way for simplicity's sake---the same example can be made to work with continuous distributions by adding some small amount of Gaussian noise to each utility value.} Thus, agents always prefer matching with high-type agents over low-type agents. Suppose further that the arrival rates are such that $\alpha_{\men_H} = \alpha_{\women_L} = 1$ and $\alpha_{\women_H} = \alpha_{\men_L} = 1 / \utility$. Finally, we consider this example in the regime where $\delta$ is very small (i.e., suppose we are in a nearly frictionless market).

To analyze the possible stationary equilibria of this market, we first consider what happens in the equilibria where high-types only match with high-types and low-types only match with low-types. The maximum possible welfare in such a equilibrium is bounded above by $2((1 + 2\exeps) + \alpha_{\men_L}) / (1 + \delta)\approx 2$, since the welfare generated by the high-types is at most $2\cdot\alpha_{\women_H}\cdot\utility(1 + 2\exeps) / (1 + \delta)$ and the utility generated by the low-types is nearly negligible at $2\cdot\alpha_{\men_L}\cdot 1/(1 + \delta)$. In particular, matching for both high- and low-types is bottlenecked by the low arrival rate of the less common type. One can think of these equilibria as being congested at both levels of the market: Due to the disparities in arrival rates, there are many type $\men_H$ agents hoping to match with only a few type $\women_H$ agents and many type $\women_L$ agents hoping to match with only a few type $\men_L$ agents. As a result, large numbers of type $\men_H$ agents and type $\women_L$ agents go unmatched.

Next we observe that such equilibria---where high-types only match with high-types and low-types only match with low-types---arise if the expected utility $\contutility_{\men_H}$ of type $\men_H$ is greater than $1 + \exeps$.
Notably, type $\men_H$ agents \emph{are not willing} to match with type $\women_L$ agents due to the possibility of matching with a type $\women_H$ agent: If $\contutility_{\men_H} > 1 + \exeps$, then \Cref{proposition:unique-main} tells us that type $\men_H$ agents will not match with type $\women_L$ agents, since type $\men_H$ agents would only obtain $1 + \exeps$ utility from doing so.
From the platform's perspective, the presence of $\women_H$ agents in type $\men_H$'s assortment cannibalizes demand for type $\women_L$ agents.
Likewise, the possibility of matching with type $\men_H$ makes type $\women_H$ reject type $\men_L$ agents---the expected utility of type $\women_H$ agents is at least $(1 + \exeps)\cdot\alpha_{\men_H} / \alpha_{\women_H} > \utility$ from matching with type $\men_H$.


On the other hand, we can obtain a more efficient equilibrium if the expected utility of type $\men_H$ is slightly lower, e.g., if they only match with type $\women_L$. Consider the assortment where type $\men_H$ and $\women_L$ agents only meet each other and type $\women_H$ agents and $\men_L$ agents only meet each other. All agents meet at maximal rates; furthermore, all meetings result in matches because no type has other options. As a result, the social welfare in equilibrium is $2(\alpha_{\men_H}\cdot(1 + \exeps) + \alpha_{\men_L}\cdot\utility) / (1 + \delta) = 2((1 + \exeps) + 1) / (1 + \delta)\approx 4$. That is, the social welfare of this non-assortative ``diagonal'' set of assortments (which happens to be optimal) is nearly twice that of any equilibrium where high-types only match with high-types and low-types only match with low-types.
\end{example}

\vmedit{We conclude our discussion of the above market with two remarks. First, the cannibalization effect (and its welfare implications) is in the same spirit as the {\em sorting externality} discussed in  \citet{BurdettColes97} using a high/low type vertical market, but with a different preference structure. In that setting, the preference of an agent only depends on the type of the other side, and agents meet each other uniformly at random. Depending on the belief that type $H$ would match with a type $L$ or not (and its corresponding match threshold), two different equilibria arise. Interestingly, the assortative equilibrium has the lower welfare  compared to the equilibrium under which type $H$ matches both types. Next, we}
note these phenomena arise in part due to the fact that the platform 
\vmedit{design is anonymous and thus it}
cannot discriminate between agents beyond their type. Indeed, at the opposite extreme, the platform could simply force agents to match with a certain type, e.g., by having a subset of type $\men_H$ agents only meet type $\women_L$ agents and another subset of type $\men_H$ agents only meet type $\women_H$ agents. However, 
\vmedit{we stress that such splitting of types has two key drawbacks: (i) if agents are able to rejoin the platform under a different identity, then such separation is difficult to enforce; (ii) such discriminatory treatment may be unpalatable to users from a fairness perspective.}
We also show in \Cref{sec:approximation} that such an assumption is not restrictive---our solution is constant-factor competitive against the \emph{first-best} platform policy.


%
%
%
%
\section{Approximation Algorithms}\label{sec:approximation}

In this section, we show how the platform can construct in polynomial time a set of assortments $\{\rate_\type\}_{\type\in\Types}$ such that its induced equilibrium obtains a $4$-approximation to the optimal social welfare. Furthermore, the equilibrium resulting from our computed assortments has a simple and appealing structure: It consists of disjoint submarkets such that each submarket has only a single type on one of the sides. 
We also establish hardness of approximation for the platform's optimization problem, meaning that we cannot do better than a constant-factor approximation unless $\mathsf{P} = \NP$.

We actually show a slightly stronger result: The resulting equilibrium obtains a $4$-approximation to the optimal welfare that could be obtained in the first-best matching economy, where rather than agents choosing to accept or reject, the platform gets to decide on behalf of the agents. (In this first-best economy, the platform gets to choose the assortments as well.) Our theorem can thus also be interpreted as a price-of-anarchy-style result, that agents' self-interested behavior causes welfare to degrade by at most a factor of $4$ relative to the first-best optimum.


\begin{theorem}[Computationally Efficient Approximation]\label{theorem:approximation}
There is a polynomial-time $4$-approximation algorithm for $\PlatformOpt$. (In fact, this algorithm gets a $4$-approximation to the platform's first-best optimal welfare.)
\end{theorem}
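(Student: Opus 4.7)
The plan is to prove Theorem~\ref{theorem:approximation} through a two-step decomposition, each step contributing a factor-of-$2$ loss, combined into the final factor-of-$4$ guarantee. The key idea is to first sidestep incentive constraints by working with a \emph{first-best} relaxation, then show that (i) this relaxation admits a simple ``star-forest'' structured approximate solution, and (ii) on star-shaped submarkets, incentives can be restored with only a constant loss.

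\textbf{Step 1: First-best relaxation and its structure.} I would begin by defining the first-best problem as \eqref{eq:opt} with the fixed-point constraint~\eqref{eq:opt-payoff} dropped, so that the platform chooses thresholds $\thresh_\type$ directly rather than having them arise in equilibrium. This is clearly an upper bound on $\OPT$. I would then reformulate the first-best by introducing the flow variables $\glow_{\men\women} = \mass_\men\rate_\men(\women) = \mass_\women\rate_\women(\men)$ (which are well-defined by \eqref{eq:opt-flowbalance}) together with per-edge thresholds, expressing welfare and capacity constraints directly in terms of these $\glow_{\men\women}$. The resulting program should look like a ``generalized assignment problem'' on the bipartite type graph, with capacity-like budgets coming from the arrival rates $\alpha_\type$ and departure rate $\delta$. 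An exchange argument on any cycle in the support of the $\glow_{\men\women}$ should show that one can strictly improve (or at least preserve) welfare by pushing flow around the cycle until one edge drops out; iterating, the first-best admits a forest-supported optimum, and this can be computed in polynomial time by standard LP or combinatorial methods.

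\textbf{Step 2: Star-forest approximation.} Next I would invoke the combinatorial fact that every forest admits a partition into vertex-disjoint stars whose total edge weight is at least half that of the original forest. Concretely, root each tree and keep, for each internal vertex, only the edges to its children (or to its parent), whichever side has higher weight; this yields disjoint stars keeping at least half the welfare. Because the submarkets in this star decomposition are disjoint and each has only one type on one side (the star's center), they can be analyzed independently, and the first-best welfare on the union is at least half the first-best welfare of the original forest-supported solution.

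\textbf{Step 3: Restoring incentives on a star, with factor $2$ loss.} This is the step I expect to be the main obstacle, and is carried out in the section alluded to as \Cref{sec:starshaped} (\Cref{proposition:starshaped}). On a single star-shaped submarket---say with center $\type^*$ and leaves $\{\typeii_i\}$---the peripheral types only have one option, so their incentives are trivially aligned with the platform's; the only nontrivial incentive issue is the center's threshold, which by \Cref{proposition:unique-main} equals $\contutility_{\type^*}$. I would recompute first-best \emph{within} the star and compare it to the equilibrium welfare: the idea is that by \Cref{lemma:payoff}, the center's endogenous threshold $\contutility_{\type^*}$ trades off matching rate against match quality in the same direction as the welfare objective. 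A charging argument should show that either (a) accepting everything the first-best accepts is already incentive-compatible, or (b) discarding the low-value edges whose utility lies below $\contutility_{\type^*}$ removes at most half the first-best welfare, since those removed edges contribute per-meeting utility below the retained ones. This alignment, together with the single-type bottleneck at the center, is exactly why the star structure is useful.

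\textbf{Step 4: Composition.} Combining Step~2's factor-$2$ loss with Step~3's factor-$2$ loss yields a feasible, incentive-compatible set of assortments supported on disjoint stars whose welfare is at least $\tfrac{1}{4}$ of the first-best, hence at least $\tfrac{1}{4}\OPT$. All subroutines---computing the forest-supported first-best, extracting a star-forest, and recomputing first-best within each star---run in polynomial time, completing the proof.
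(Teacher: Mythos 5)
Your proposal tracks the paper's architecture closely: first-best relaxation reduced to a generalized assignment problem, forest-supported optimum, factor-$2$ loss to a vertex-disjoint union of stars, and factor-$2$ loss restoring incentives on each star, composed for the $4$-approximation. The bottleneck is Step~3, and as written it contains a gap.

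Two specific issues. First, a minor one: your star-forest extraction ("keep, for each internal vertex, only the edges to its children or to its parent, whichever side has higher weight") does not produce a vertex-disjoint union of stars in general---on a path $a$--$b$--$c$--$d$--$e$ where every vertex prefers its children, you retain a path of length $3$. The correct construction (\Cref{lemma:star}) two-colors the edges by depth parity; each color class is automatically a union of stars, and one of the two classes carries at least half the weight.

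Second, and more substantively, your Step~3 dichotomy ("either the first-best choice is already IC, or drop the edges below $\contutility_{\type^*}$ and lose at most half") misses the actual obstruction the paper must handle. The problem is not merely that the center rejects low-value meetings: it is that the leaf type $\women_\nfill$ whose capacity is only partially used in the first-best may, once incentives are reinstated, \emph{accept more} than the first-best intended (its idealized threshold $\hat\thresh_{\women_\nfill}$ can be strictly below $\rho_{\men\women_\nfill}$ by \Cref{lemma:thresh-star}), which can \emph{violate the center's capacity constraint} and destroy feasibility of the first-best flows altogether. Also, your claim that peripheral incentives are "trivially aligned" is too quick---leaves do strategize, and one still needs \Cref{lemma:thresh-ub} and \Cref{lemma:thresh-max} to pin down $\max(\thresh_\men,\thresh_{\women_i}) = \max(\thresh_\men,\hat\thresh_{\women_i})$. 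The paper's case split is on $\hat\thresh_\men$ versus $\rho_{\men\women_\nfill}$: when $\hat\thresh_\men > \rho_{\men\women_\nfill}$, feasibility holds (thresholds only rise, so matching rates only fall) and a calculation shows $\tfrac{1}{2}\OPT \le 2\alpha_\men\hat\thresh_\men$; when $\hat\thresh_\men \le \rho_{\men\women_\nfill}$, feasibility can fail, and one must modify the solution to either drop $\women_\nfill$ entirely or keep \emph{only} $\women_\nfill$, whichever retains half the welfare. Your charging sketch, as stated, does not produce this case analysis and in particular does not address feasibility, which is the crux.
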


Our construction of these assortments and the resulting equilibrium proceeds as follows:
\begin{description}
    \item[Solving for the platform's first-best.] We start by considering the platform's optimization problem for the first-best matching economy, where the platform gets to plan both the assortments and which matches the agents accept. That is, we relax the agents' incentive constraint and allow the platform to optimize over thresholds $\thresh_{\men\women}$ between each pair $(\men, \women)$ of types. This relaxation removes the main barrier---namely, incentives---to solving the optimization problem exactly. Accordingly, we show that this relaxed problem can be reduced to a generalization of the classical assignment problem and thus be efficiently solved.
    
    \item[Approximating via ``star-shaped'' markets.] We then show that an optimal solution to this ``generalized'' assignment problem can be converted into a stationary equilibrium (where we now take agents' incentives back into account) with at most a factor $4$ loss in welfare. The key ideas for this step are:
    \begin{itemize}
        \item Showing that the optimal first-best solution can be $2$-approximated by another consisting of ``star-shaped'' submarkets (which have a side where all the agents are of the same type). For this step, we draw on some classical observations about the structure of optimal solutions from previous works on generalized assignment problems \citep{LenstraShmoysTardos90, ChakrabartyGoel10, banerjee2017segmenting}.
        \item Showing that for markets where all the agents are of the same type on one side, the platform can obtain a $2$-approximation to the optimal welfare of the first-best matching economy. This step relies on the observation that in such markets, there is an alignment of incentives between individual agents and the platform.
    \end{itemize}
\end{description}

A constant-factor approximation algorithm is 
likely 
the best we can hope for, computationally:  We show in \Cref{sec:hardness} that approximating the platform's optimization problem to a factor better than $\frac{24}{23}$ is $\NP$-hard.

\begin{theorem}[Hardness of Approximation]\label{theorem:hardness-main}
 Approximating $\PlatformOpt$ up to a $\paren*{\frac{24}{23} - \approxerr}$-factor is $\NP$-hard for any $\approxerr > 0$.
\end{theorem}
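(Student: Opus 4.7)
The plan is to reduce from \MaxThreeLinTwo. Recall Håstad's theorem: for any $\eta > 0$, it is $\NP$-hard to distinguish \MaxThreeLinTwo\ instances in which at least a $(1-\eta)$ fraction of equations admit a simultaneous satisfying assignment from those in which no assignment satisfies more than a $(1/2 + \eta)$ fraction. Given such an instance with variables $\var_1, \dots, \var_\nvars$ and equations $\eqn_1, \dots, \eqn_\neqs$ of the form $\var_i + \var_j + \var_k = \bit \pmod 2$, I will build in polynomial time an instance of $\PlatformOpt$ whose maximum achievable equilibrium welfare takes the affine form $\alpha \neqs + \beta \cdot s$, where $s$ is the maximum number of simultaneously satisfiable equations. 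Choosing gadget weights so that $\alpha = 11\beta$ makes the resulting YES/NO ratio approach $(2\alpha + 2\beta)/(2\alpha + \beta) = 24/23$ as $\eta \to 0$, yielding $\NP$-hardness for any approximation strictly better than $24/23$.

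The construction uses two families of gadgets. For each variable $\var_i$, I introduce a \emph{switch gadget}: a small set of types (on both sides) together with arrival rates and preference distributions calibrated so that, in any near-optimal feasible design, the platform's meeting rates collapse to one of exactly two canonical configurations, which I interpret as the assignments $\var_i = 0$ and $\var_i = 1$. The constant $\switchval$ sets the scale of the switch's contribution so that deviating from a pure configuration costs strictly more in welfare than it saves downstream, thereby propagating a consistent assignment throughout the instance. For each equation $\eqn$, I introduce an \emph{equation gadget}: additional types whose preference distributions are designed so that the gadget contributes welfare exactly $\beta$ when the configurations of the three participating switches are consistent with $\eqn$, and strictly less otherwise. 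The switches themselves contribute the term $\alpha \neqs$ of the welfare independently of the chosen assignment.

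With the gadget in hand, the reduction proceeds in two directions. In the YES direction, given an assignment satisfying at least $(1 - \eta)\neqs$ equations, I explicitly exhibit a feasible assortment realizing welfare at least $\alpha \neqs + \beta(1-\eta)\neqs$: set each switch to its canonical configuration, verify the capacity and flow-balance constraints \eqref{eq:capacity}--\eqref{eq:flowbalance}, and then use the threshold characterization of \Cref{proposition:unique-main} together with \Cref{lemma:rate,lemma:payoff} to confirm that the induced thresholds make the intended matches incentive-compatible. In the NO direction, starting from any feasible assortment I show how to round its induced equilibrium to an integral assignment of the MAX3LIN2 instance while losing at most $o(\neqs)$ welfare; the calibration $\switchval = 2(1 + 2\delta)$ is what guarantees that any switch in a near-optimal equilibrium is close to one of its two canonical configurations, from which the rounding is immediate. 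The upper bound $\alpha \neqs + \beta(1/2 + \eta)\neqs + o(\neqs)$ then follows.

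The main technical obstacle is the NO-direction argument, which must rule out the cannibalization and congestion ``tricks'' highlighted by \Cref{example:vertical}: a fractional mixture of switch configurations should not be able to beat the best integral assignment. The calibration must ensure two properties simultaneously. First, within each switch gadget the equilibrium thresholds imposed by the opposite side make any non-canonical meeting rate strictly suboptimal for the platform, so that the induced best responses pin the thresholds into one of the two canonical slots. Second, the equation gadget must be ``faithful'' in the sense that connecting incompatibly configured switches cannot recoup welfare by exploiting idiosyncratic preferences on other edges. Establishing this requires a careful local-to-global argument: on each gadget one shows that, conditional on the opposite side's thresholds, the platform's optimal local design is integral, and then a global exchange argument on the induced bipartite structure promotes this into a bound on the welfare lost to rounding. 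Once this robustness lemma is in place, the rest of the reduction is arithmetic bookkeeping of the weights $\alpha$ and $\beta$.
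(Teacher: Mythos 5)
Your plan coincides with the paper's at the strategic level: both reduce from $\MaxThreeLinTwo$ via H{\aa}stad's $3$-bit PCP, both use per-variable ``switch'' gadgets and per-equation gadgets, both calibrate the switch payoff to $\switchval$, and both land on the ratio $24/23$ by the same arithmetic (background welfare independent of the assignment, plus a marginal welfare per satisfied equation in an $11{:}1$ ratio). In that sense your ``bookkeeping'' is consistent with the paper's Lemmas~\ref{lemma:complete} and~\ref{lemma:sound}.

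However, the proposal has a genuine gap, and it is exactly where you flag it: the soundness direction is deferred rather than proven, and the gadget construction that makes it provable is not specified. Two things are missing. First, the concrete gadget: the paper places two variable types $\var_{i,0},\var_{i,1}$ on the $\Men$ side, one switch type $\switch_i$ and four equation types per equation on the $\Women$ side, with arrival rates $4\neqs_i$ for variable/switch types and $3$ for equation types, and point-mass utilities taking values in $\{\switchval, 1, 0\}$. This specific design is what makes the analysis go through; ``a small set of types calibrated so that meeting rates collapse to two canonical configurations'' is an IOU, not a construction. Second, and more importantly, the paper's key structural lemma is that in \emph{any} welfare-maximizing equilibrium each switch type is fully matched to exactly one of its two variable types, so that the equilibrium literally reads out a bit assignment; the proof compares the welfare lost by unmatching a variable type pair (at most $2 \cdot 4\neqs_i$) against the welfare gained by fully matching the switch ($4\neqs_i \cdot \switchval/(1+\delta) > 8\neqs_i$), and then patches feasibility and thresholds after the modification. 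This is the local argument that rules out the fractional/cannibalization escape routes you worry about; your proposal names a ``careful local-to-global argument'' and a ``robustness lemma'' but does not supply either. Without it, the rounding step in the NO direction and the $o(\neqs)$ loss bound are unsupported.

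One more minor caution: you assert the instance has maximum equilibrium welfare \emph{exactly} of the form $\alpha\neqs+\beta s$. The paper does not prove an exact closed form; it proves a lower bound in the YES case (roughly $24\neqs+12s$, as $\delta\to 0$) and an \emph{upper} bound in the NO case (roughly $33\neqs+3s$), two different affine functions that merely agree at $s=\neqs$. That asymmetry is fine for the promise-gap argument, and your ratio computation happens to coincide with the NO-side coefficients, but phrasing the welfare as an exact affine function overstates what you can hope to prove with this construction---unsatisfied equations still collect roughly three-quarters of a satisfied equation's welfare, so the equation gadget is not ``all-or-nothing.''
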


We prove the above hardness result in \Cref{sec:hardness}.
From a technical perspective, our approach very closely mirrors that of \citet{ChakrabartyGoel10}, who show hardness of approximation for related problems (e.g., maximum budgeted allocation) by reducing from Hastad's $3$-bit PCP \citep{Hastad01}. While their technique applies to our setting, it is not clear that their hardness results apply directly---a challenge specific to our setting is that our optimization problem is continuous rather than discrete. To make this approach work, we must show that our complicated feasibility set introduces a discrete element to the optimal allocation. But as a consequence, we obtain a slightly worse constant $c$.

A constant-factor gap between first-best and second-best is also inherent to the $\PlatformOpt$ problem. We show via example in 
\vmedit{\Cref{app:proof:first-second-gap}}
 that there exists a market such that the first-best and second-best are separated by a factor of $2$.

\begin{proposition}[Gap Between First- and Second-Best]\label{proposition:first-second-gap}
For any $\approxerr > 0$, there exists a market such that the platform's first-best optimal welfare is at least $2-\approxerr$ times the optimal  (second-best) welfare for $\PlatformOpt$.
\end{proposition}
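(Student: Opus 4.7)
The plan is to exhibit, for any $\approxerr > 0$, a star-shaped market parameterized by an integer $\ntypes$ and a departure rate $\delta > 0$ for which the first-best welfare approaches $4$ while the second-best welfare is exactly $2$. Specifically, I would take $\Women = \{\women\}$ and $\Men = \{\men_1, \ldots, \men_\ntypes\}$ with $\alpha_\women = \ntypes$ and $\alpha_{\men_i} = 1$, and use $\CDF_{\men_1\women}$ concentrated near $1$ and $\CDF_{\men_i\women}$ concentrated near $1/\ntypes$ for $i \ge 2$ (smoothed with tiny Gaussian noise to satisfy continuity, as in \Cref{example:vertical}). The intuition is that first-best can ``force'' $\women$ to accept all of her meetings, letting each $\men_i$ match at full capacity, whereas the incentive constraint in second-best forces $\women$ either to reject the low-type men entirely (creating a bottleneck at $\men_1$) or to be essentially indifferent among them (collapsing welfare to $2$).

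For the first-best lower bound, I would analyze the assortment $\rate_\women(\men_i) = 1/\ntypes$ for every $i$ with all thresholds set to $0$. Stationarity, flow balance, and capacity pin down $\mass_\women = \ntypes/(1+\delta)$, $\mass_{\men_i} = 1/(1+\delta)$, and $\glow_{\men_i\women} = 1/(1+\delta)$, yielding welfare $2(2\ntypes - 1)/(\ntypes(1+\delta))$, which tends to $4$ as $\ntypes \to \infty$ and $\delta \to 0$. For the second-best upper bound, I would work in the unique equilibrium guaranteed by \Cref{proposition:unique-main}. Writing $a = \rate_\women(\men_1)$ and $B = \sum_{i \ge 2}\rate_\women(\men_i)$, whenever $\women$ accepts some low-type man at positive rate in equilibrium, \Cref{lemma:payoff} gives $\contutility_\women \le 1/\ntypes$, which rearranges cleanly to $a(\ntypes - 1) \le \delta$. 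Plugging this IC-tight bound into the welfare expression $2(\ntypes a + B)/(\delta + a + B)$ makes numerator and denominator coincide, so welfare collapses to exactly $2$ independently of $B$. Conversely, if $\women$ rejects all low-type men, then her threshold $a/(\delta + a)$ must strictly exceed $1/\ntypes$, forcing $a > \delta/(\ntypes - 1)$; but $\men_1$'s capacity constraint requires $a \le \delta/(\ntypes - 1 + \ntypes\delta)$, which is strictly smaller. So this ``picky'' equilibrium is infeasible, and any assortment with $\rate_\women(\men_1) > 0$ must have $\women$ accepting the low-type men, giving welfare at most $2$. Assortments omitting $\men_1$ yield welfare at most $2(\ntypes-1)/(\ntypes(1+\delta)) < 2$, and assortments connecting $\women$ only to $\men_1$ are bottlenecked by $\men_1$'s arrival rate and give welfare at most $2/(1+\delta)$.

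Combining these two bounds yields a first-best-to-second-best ratio of at least $(2 - 1/\ntypes)/(1+\delta)$, which exceeds $2 - \approxerr$ whenever, say, $\ntypes > 8/\approxerr$ and $\delta < \approxerr/4$. The main technical step is ruling out ``mixed'' second-best assortments in which the platform connects $\women$ to an asymmetric subset of the low-type men with unequal rates. This reduces to the symmetric case above because the low-type men are interchangeable in $\women$'s preferences and because the welfare contribution and the IC constraint for $\women$ depend only on the aggregate rate $B$, not on its distribution across the $\men_i$.
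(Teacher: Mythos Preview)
Your construction is correct and the case analysis closes, but you have chosen a considerably more elaborate example than the paper. The paper builds a minimal three-type star: $\Men = \{\men\}$, $\Women = \{\women_H, \women_L\}$, with point-mass utilities $1/\exeps'$ and $1/2$, arrival rates $\alpha_\men = 2$, $\alpha_{\women_H} = \exeps'$, $\alpha_{\women_L} = 2 - \exeps'$, and $\delta = \exeps'$. With only two partner types, the second-best bound is one line: if $\men$ accepts $\women_L$, then $\contutility_\men \le 1/2$ and welfare is at most $2\alpha_\men\contutility_\men = 2$; if not, the entire welfare is bottlenecked by $\alpha_{\women_H}$ and is again at most $2$. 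No capacity-feasibility contradiction is needed because the ``picky'' regime is simply unprofitable rather than infeasible.

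Your $(n+1)$-type construction reaches the same $2$-vs-(almost-)$4$ gap via the same dichotomy, but the extra types force you into a more delicate argument: you have to rule out the regime where $\women$ rejects low-type men by showing that the required rate $a > \delta/(n-1)$ collides with $\men_1$'s capacity bound $a \le \delta/(n-1+n\delta)$. This works, and your reduction of the mixed case to the aggregate $B$ is clean, but the entire feasibility detour is an artifact of having made the rare high-value partner also capacity-constrained. The paper sidesteps this by making the high-value partner scarce in \emph{arrival rate} rather than in \emph{meeting capacity}, which is what collapses the analysis to two lines.
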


\vmedit{Before proceeding with the proof of \Cref{theorem:approximation},  we remark that the first-best 
optimal welfare and the welfare achieved by random meeting serve as natural benchmarks to compare our solution against. (We remind the reader that, in our general setting, an assortative matching is not well-defined.) \Cref{theorem:approximation} ensures that our solution achieves at least $1/4$ of the former benchmark. On the other hand, \Cref{example:horizontal} implies the welfare achieved under our solution can be arbitrarily larger than that achieved by random meeting. In fact, it is straightforward to show that for the horizontal market of Example \ref{example:horizontal}, our solution coincides with the first-best.\footnote{\vmedit{See Section \ref{sec:simulation} for a numerical solution of a similar horizontal market as well as  further  numerical analysis.}} Consequently, for large $\delta$, its welfare gain compared to random meeting would be proportional to $n$.}

In the remainder of this section, we develop the proof of \Cref{theorem:approximation}.

%
%
%
%
%
%
%

\subsection{The First-Best Matching Economy}\label{sec:reduction}

In this section, we study the platform's first-best optimization problem, where the platform has the power to plan the entire matching economy. That is, in addition to designing assortments, we allow the platform to ignore agents' incentives and dictate each type's strategy. The first-best relaxation is obtained by modifying the optimization problem \eqref{eq:opt} as follows: Relax the best response constraint \eqref{eq:opt-payoff}. Then, replace the per-type threshold variables $\thresh_\type$ with pairwise threshold variables $\thresh_{\men\women}$ that take the place of $\max(\thresh_\men, \thresh_\women)$ in all expressions for all $(\men, \women)\in\Men\times\Women$ in \eqref{eq:opt}.

To see why this relaxation corresponds to the platform's first-best problem, note that rather than optimizing over all agent strategies, the platform can restrict its attention to strategies $\strategy_\type$ that have a threshold for each type on the opposite side. This is because whenever two agents of types $\men$ and $\women$ match at a certain rate, it is optimal for the matches realized to be the highest-valued ones. In particular, the platform should never hope that agents of types $\men$ and $\women$ reject each other at utility $\utility$ but match at some utility $\utilityii < \utility$. Rather, there should exist a threshold $\thresh_{\men\women}$ such that agents of these types match if and only if $\utility\ge\thresh_{\men\women}$.

While even the relaxed optimization problem looks difficult to work with, our goal in this section is to show that a careful reparametrization reduces the optimization problem to a linear program. The resulting linear program is related to (and in fact generalizes) the linear programming relaxation of the assignment problem; the main difference is that instead of having unit capacities on the flow to each node, we allow for real-valued capacities. Formally, 
the reduction can be stated as follows:

\begin{proposition}[First-Best ``Assignment Problem'']\label{proposition:lp}
    Consider the linear program
    \begin{subequations}\label{eq:lp}
    \begin{alignat}{3}
        \max_{\beta_{\men\women}} &\quad&& 2\cdot\sum_{\men\in\Men}\sum_{\women\in\Women} \rho_{\men\women}\cdot\beta_{\men\women} \tag{\ref{eq:lp}} \\
        \text{such that}&&& \sum_{\women\in\Women} \beta_{\men\women} \le\alpha_\men\quad\forall\men\in\Men\label{eq:lp-1} \\
        &&& \sum_{\men\in\Men} \beta_{\men\women} \le\alpha_\women\quad\forall\women\in\Women, \label{eq:lp-2}
    \end{alignat}
    \end{subequations}
    where
    \smash{$\rho_{\men\women} \coloneqq \max_{\thresh\ge 0} \frac{\int_{\thresh}^\infty u\,d\CDF_{\men\women}}{\delta + \int_{\thresh}^\infty \,d\CDF_{\men\women}}$}.
    The platform's first-best optimization problem is equivalent to \eqref{eq:lp} in the following sense:
    \begin{enumerate}
        \item The two optimization problems have the same optimal objective value.
        \item Any feasible choice of $\{\beta_{\men\women}\}_{\men\in\Men,\women\in\Women}$ corresponds to a feasible choice of 
        $\{\lambda_\type\}_{\type\in\Types}$, $\{\mass_\type\}_{\type\in\Types}$, $\{\xi_\type\}_{\type\in\Types}$, and $\{\thresh_{\men\women}\}_{\men\in\Men,\women\in\Women}$
        that achieves the same objective value.
    \end{enumerate}
\end{proposition}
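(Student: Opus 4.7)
The plan is to eliminate variables systematically until the first-best problem reduces to a linear program in a single variable per pair of types. I would proceed as follows.

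First, I would use stationarity \eqref{eq:opt-stationarity} to eliminate the population masses via $\mass_\type = \alpha_\type / (\delta + \xi_\type)$, and exploit flow balance \eqref{eq:opt-flowbalance} to replace the pair $(\rate_\men(\women), \rate_\women(\men))$ with the single symmetric flow variable $\glow_{\men\women} = \mass_\men \rate_\men(\women) = \mass_\women \rate_\women(\men)$. Under this substitution the (relaxed) objective becomes
\begin{equation*}
2\sum_{\men,\women} \glow_{\men\women} \int_{\thresh_{\men\women}}^{\infty} \utility\,d\CDF_{\men\women},
\end{equation*}
and the rate constraint \eqref{eq:opt-rate} gives $\xi_\type \mass_\type = \sum_{\typeii} \glow_{\type\typeii}\int_{\thresh_{\type\typeii}}^{\infty} d\CDF_{\type\typeii}$. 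Combining this with stationarity yields $\delta\mass_\type = \alpha_\type - \sum_{\typeii} \glow_{\type\typeii}\int_{\thresh_{\type\typeii}}^{\infty} d\CDF_{\type\typeii}$, which in turn lets me translate the capacity constraint \eqref{eq:opt-capacity} (namely $\sum_{\typeii} \glow_{\type\typeii} \le \mass_\type$) into the single inequality
\begin{equation*}
\sum_{\typeii} \glow_{\type\typeii}\!\left(\delta + \int_{\thresh_{\type\typeii}}^{\infty} d\CDF_{\type\typeii}\right) \le \alpha_\type \qquad \forall\type\in\Types.
\end{equation*}
At this point the problem depends only on $(\glow_{\men\women}, \thresh_{\men\women})$.

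The key reparametrization is then to define $\beta_{\men\women} \coloneqq \glow_{\men\women}\bigl(\delta + \int_{\thresh_{\men\women}}^{\infty} d\CDF_{\men\women}\bigr)$. Under this change of variables the two capacity-plus-stationarity inequalities become exactly \eqref{eq:lp-1} and \eqref{eq:lp-2}, and the objective factors cleanly as
\begin{equation*}
2\sum_{\men,\women}\beta_{\men\women}\cdot\frac{\int_{\thresh_{\men\women}}^{\infty} \utility\,d\CDF_{\men\women}}{\delta + \int_{\thresh_{\men\women}}^{\infty} d\CDF_{\men\women}}.
\end{equation*}
Since $\beta_{\men\women}\ge 0$ and the threshold $\thresh_{\men\women}$ only appears inside the ratio that multiplies $\beta_{\men\women}$, I can maximize out $\thresh_{\men\women}$ pair-by-pair to replace the ratio with $\rho_{\men\women}$, recovering the LP \eqref{eq:lp}. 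This shows the first-best value is at most the LP value.

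For the reverse direction (and for part (ii) of the statement), given any feasible $\{\beta_{\men\women}\}$ I would pick $\thresh_{\men\women}$ to be any maximizer in the definition of $\rho_{\men\women}$, set $\glow_{\men\women} = \beta_{\men\women}/(\delta + \int_{\thresh_{\men\women}}^{\infty} d\CDF_{\men\women})$, and then recover $\xi_\type$, $\mass_\type$, and $\rate_\type(\typeii)$ from the formulas above. Verifying feasibility amounts to reversing the algebra: the LP constraint gives exactly the combined capacity/stationarity inequality, which forces $\sum_\typeii \glow_{\type\typeii}\le\mass_\type$ and hence $\sum_\typeii \rate_\type(\typeii)\le 1$. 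The objective value matches by construction. The main subtlety I expect is bookkeeping around the case $\mass_\type = 0$ (so that $\rate_\type(\typeii)$ is ill-defined), which is handled by observing that $\mass_\type = 0$ forces $\alpha_\type = 0$ and hence $\glow_{\type\typeii}=0$ for all $\typeii$, so the assortment for such a type is irrelevant; this, together with confirming that the argmax defining $\rho_{\men\women}$ is attained (which follows from continuity of $\CDF_{\men\women}$ and the fact that the ratio vanishes as $\thresh\to\infty$), completes the equivalence.
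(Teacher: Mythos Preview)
Your proposal is correct and follows essentially the same route as the paper: introduce the symmetric flow $\glow_{\men\women}$ via flow balance, merge stationarity and the rate identity to collapse capacity into the single inequality $\sum_{\typeii}\glow_{\type\typeii}(\delta+\int_{\thresh_{\type\typeii}}^\infty d\CDF_{\type\typeii})\le\alpha_\type$, reparametrize by $\beta_{\men\women}$, and then optimize out $\thresh_{\men\women}$ to obtain $\rho_{\men\women}$. Your extra remarks on the degenerate case $\mass_\type=0$ and on attainment of the maximum defining $\rho_{\men\women}$ are bookkeeping details the paper handles (or sidesteps) only implicitly, so if anything your write-up is slightly more careful on those points.
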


\Cref{proposition:lp} implies that to solve the first-best optimization problem, it suffices to solve the linear program \eqref{eq:lp} and convert the resulting solution to an optimal choice of variables for the first-best optimization problem.  
%
The proof of the \Cref{proposition:lp} proceeds by first reparameterizing the optimization program \eqref{eq:opt} using the flow balance constraint: We introduce the variable
\[ \glow_{\men\women}\coloneqq\mass_\men\rate_{\men}(\women) = \mass_\women\rate_{\women}(\men) \]
for the total rate of meetings between each pair $(\men, \women)\in\Men\times\Women$.  
We can then encode the flow balance, capacity, and stationary constraints into the single combined constraint
\begin{equation}\label{eq:opt-combined-main}
    \alpha_\type\ge\sum_{\typeii}\paren*{\glow_{\type\typeii}\paren*{\delta + \int_{\thresh_{\type\typeii}}^\infty \,d\CDF_{\type\typeii}}}.
\end{equation}
That is, there exist corresponding values for $\rate_{\type}(\typeii)$, $\xi_\type$, and $\mass_\type$ that satisfy the constraints of \eqref{eq:opt} if and only if the $\glow_{\type\typeii}$ variables satisfy \eqref{eq:opt-combined-main}. To finish, we introduce the variables
\begin{equation}\label{eq:beta-main}
\beta_{\men\women}\coloneqq\glow_{\men\women}\paren*{\delta + \int_{\thresh_{\men\women}}^\infty \,d\CDF_{\men\women}}.
\end{equation}
We then substitute into \eqref{eq:opt-combined-main} and the objective of \eqref{eq:opt}. The latter becomes
\begin{equation}
2\sum_{\men\in\Men}\sum_{\women\in\Women}  \paren*{\beta_{\men\women}\cdot\frac{\int_{\thresh_{\men\women}}^\infty\utility\,d\CDF_{\men\women}}{\delta + \int_{\thresh_{\men\women}}^\infty \,d\CDF_{\men\women}}}.
\end{equation}
While the thresholds $\thresh_{\men\women}$ are not yet fixed, these substitutions make the optimization effectively unconstrained in $\thresh_{\men\women}$. Optimizing over $\thresh_{\men\women}$ in the objective lets us write the objective in terms of $\rho_{\men\women}$ as in \eqref{eq:lp}, since the only dependence on $\thresh_{\men\women}$ is through the expression
\begin{equation}\label{eq:rho-main}
\frac{\int_{\thresh_{\men\women}}^\infty\utility\,d\CDF_{\men\women}}{\delta + \int_{\thresh_{\men\women}}^\infty \,d\CDF_{\men\women}}
\end{equation}
We defer full details of this analysis to  \vmedit{\Cref{sec:proof:lp}}.

The final step in the above is setting $\thresh_{\men\women}$ so that expression \eqref{eq:rho-main} is maximized. It turns out that we can do so by setting $\thresh_{\men\women} = \rho_{\men\women}$. (That is, the expression achieves its maximum value at a fixed point.) We record this observation here: 

\begin{lemma}[Fixed Point Structure of Optimal Threshold]\label{lemma:rho}
    Define 
    \[ A(\thresh) = \frac{\int_\thresh^\infty \utility\,d\CDF}{\delta\, + \int_\thresh^\infty \,d\CDF}, \]
    where $\delta > 0$ and $\CDF$ is a continuous distribution, and let $\rho = \max_{\thresh\ge 0} A(\thresh)$. Then $A(\rho) = \rho$. Moreover, $A$ is monotonically increasing for $\thresh\le\rho$ and monotonically decreasing for $\thresh\ge\rho$.
\end{lemma}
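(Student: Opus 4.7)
My plan is to analyze $A$ via two related reparametrizations: one giving a clean formula for the derivative $A'(\thresh)$, and another reducing the sign of $A(\thresh) - \thresh$ to that of a strictly monotone auxiliary function whose unique zero will turn out to be simultaneously the argmax of $A$ and the value of the maximum.

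First, I would write $A = N/D$ with $N(\thresh) := \int_\thresh^\infty \utility\,d\CDF$ and $D(\thresh) := \delta + \int_\thresh^\infty d\CDF > 0$. Since $\CDF$ is absolutely continuous with density $f := \CDF'$, we have $N'(\thresh) = -\thresh\, f(\thresh)$ and $D'(\thresh) = -f(\thresh)$ almost everywhere, so the quotient rule yields
\[
A'(\thresh) \;=\; \frac{f(\thresh)\bigl(N(\thresh) - \thresh D(\thresh)\bigr)}{D(\thresh)^2} \;=\; \frac{f(\thresh)\bigl(A(\thresh) - \thresh\bigr)}{D(\thresh)}.
\]
Wherever $f(\thresh) > 0$, the sign of $A'(\thresh)$ therefore equals the sign of $A(\thresh) - \thresh$.

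Second, I would track $A(\thresh) - \thresh$ through $h(\thresh) := N(\thresh) - \thresh D(\thresh) = \int_\thresh^\infty (\utility - \thresh)\,d\CDF - \delta\thresh$, which has the same sign as $A(\thresh) - \thresh$ because $D > 0$. Integrating by parts gives $h(\thresh) = \int_\thresh^\infty (1 - \CDF(\utility))\,d\utility - \delta\thresh$, so $h'(\thresh) = -(1 - \CDF(\thresh)) - \delta < 0$ everywhere. Combined with $h(0) = \int_0^\infty (1 - \CDF(\utility))\,d\utility \ge 0$ and $h(\thresh) \to -\infty$ as $\thresh \to \infty$, this produces a unique $\thresh^{\ast} \ge 0$ with $h(\thresh^{\ast}) = 0$, i.e., $A(\thresh^{\ast}) = \thresh^{\ast}$. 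Moreover, $A(\thresh) > \thresh$ on $[0, \thresh^{\ast})$ and $A(\thresh) < \thresh$ on $(\thresh^{\ast}, \infty)$; plugging these into the derivative formula above gives $A' \ge 0$ on $[0, \thresh^{\ast}]$ and $A' \le 0$ on $[\thresh^{\ast}, \infty)$. Hence $A$ attains its maximum on $[0,\infty)$ at $\thresh^{\ast}$, so $\rho = A(\thresh^{\ast}) = \thresh^{\ast}$, which gives $A(\rho) = \rho$ and the stated monotonicity about $\rho$.

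The main subtlety I anticipate is allowing the density $f$ to vanish on intervals within the support of $\CDF$, which would make $A$ locally constant and prevent $A'$ alone from pinning down the fixed point (e.g., one cannot simply invoke $A'(\thresh) = 0$ at the maximizer). This is precisely why I locate the fixed point via $h$, which is strictly decreasing regardless of where the support of $\CDF$ lies. Because the lemma only asserts (weak) monotonicity, the weak sign of $A'$ derived above is then enough to conclude.
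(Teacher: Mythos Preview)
Your proof is correct and follows essentially the same approach as the paper: both differentiate $A$ (the paper via the logarithmic derivative, you via the quotient rule) to reduce the sign of $A'$ to that of the auxiliary quantity $h(\thresh)=\int_\thresh^\infty(\utility-\thresh)\,d\CDF-\delta\thresh$, and then show $h$ is strictly decreasing with $h'(\thresh)=-(\delta+\int_\thresh^\infty d\CDF)$ to locate its unique zero as the fixed point and argmax. Your version is slightly more explicit about existence via the intermediate value theorem and about the case where the density vanishes, but the substance is the same.
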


\subsection{Designing an Approximately Optimal Equilibrium}\label{sec:starshaped}

In the previous section, we described how the platform can solve for a first-best solution, where agents' incentives are ignored. In this section, we now bring back agents' incentives and show how to convert the first-best solution into a stationary equilibrium. We show that our conversion process retains at least $\frac 14$ of the welfare of the first-best solution, thereby giving us an efficient $4$-approximation algorithm for $\PlatformOpt$.

\ifx\acmConference\undefined
    \begin{figure}
        \centering
        \sffamily
        \FIGURE
        {\includesvg[scale=0.42]{figures/outline.svg}}
        {This figure depicts the process for converting an optimal first-best solution to a stationary equilibrium, while preserving welfare up to a factor of $4$. We start with a first-best solution whose graph of positive weight variables forms a forest (i.e., step (1)). The first arrow represents step (2), where we approximate the forest with stars, each corresponding to a star-shaped submarket. The second arrow represents solving the first-best for each star-shaped submarket (see \Cref{lemma:first-best}). The third pair of arrows corresponds to the cases we handle in the proof of \Cref{proposition:starshaped}---in particular, whether the new first-best solution induces a stationary equilibrium (Case 1) or not (Case 2).\label{fig:outline}}{}
    \end{figure}
\else
    \begin{figure}
        \centering
        \sffamily
        \small
        \includesvg[scale=0.35]{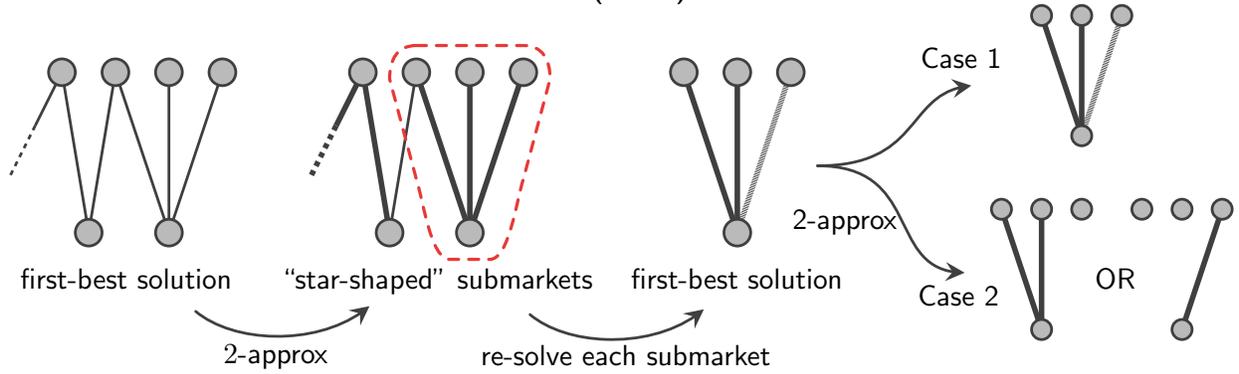}
        \caption{This figure depicts the process for converting an optimal first-best solution to a stationary equilibrium, while preserving welfare up to a factor of $4$. We start with a first-best solution whose graph of positive weight variables forms a forest (i.e., step (1)). The first arrow represents step (2), where we approximate the forest with stars, each corresponding to a star-shaped submarket. The second arrow represents solving the first-best for each star-shaped submarket (see \Cref{lemma:first-best}). The third pair of arrows corresponds to the cases we handle in the proof of \Cref{proposition:starshaped}---in particular, whether the new first-best solution induces a stationary equilibrium (Case 1) or not (Case 2).}
        \label{fig:outline}
    \end{figure}
\fi

%

%

At a high-level, our conversion process works as follows (see \Cref{fig:outline} for a visual depiction):
\begin{description}
  \item[Step (1).]
    Find a solution $\{\beta_{\men\women}^*\}_{\men\in\Men,\women\in\Women}$ to the linear program \eqref{eq:lp} such that the graph on $\Theta$ that has an edge $(\men, \women)$ for each $\beta_{\men\women}^* > 0$ is a forest.
  \item[Step (2).]
    Approximate this edge-weighted forest with a union of vertex-disjoint star graphs such that the star graphs contain at least half the total edge weight of the forest.
  \item[Step (3).]
    Construct a set of assortments for each star-shaped submarket (where one side has only one type of agent) that induces a stationary equilibrium whose social welfare is at least half the first-best social welfare of the submarket.
\end{description}
The first two of the above steps have appeared in other works featuring generalized assignment problems \cite{LenstraShmoysTardos90, ChakrabartyGoel10, banerjee2017segmenting}, where they were similarly used to simplify the problem structure. Our main insight for this stage of the proof lies in the third step, where we leverage structural properties of the agents' decision problem to show that the optimal stationary equilibrium in each submarket produces at least half the first-best welfare of that submarket.\footnote{Note that this last step is distinct from simply computing the optimal stationary equilibrium. Computing the equilibrium is not enough to show that it actually obtains at least half the first-best welfare.
}

These steps, together with \Cref{proposition:lp}, prove our approximation result \Cref{theorem:approximation}.

\subsubsection{Steps (1) and (2)}

We now state the lemmas that give the results needed for steps (1) and (2). The first lemma, having appeared in various forms since \citet{LenstraShmoysTardos90}, shows that there exists an optimal solution for our generalized assignment problem such that the positive variables $\beta_{\men\women} > 0$ form a forest when viewed as edges on a graph with vertex set $\Types$. This result is true for the same reason that, in the classical assignment linear program, there exists an optimal solution where all the variables are integral: As long as a cycle exists, one can push ``flow'' through this cycle to eliminate it while simultaneously reducing the value of some variable to $0$.

\begin{lemma}[\citet{LenstraShmoysTardos90, ChakrabartyGoel10, banerjee2017segmenting}]\label{lemma:forest}
  For a generalized assignment problem
  \begin{alignat*}{3}
    \max &\quad&& \sum_{\men\in\Men}\sum_{\women\in\Women} \rho_{\men\women}\cdot\beta_{\men\women} \\
    \text{such that}&&& \sum_{\women\in\Women} \beta_{\men\women} \le\alpha_\men\quad\forall\men\in\Men \\
                    &&& \sum_{\men\in\Men} \beta_{\men\women} \le\alpha_\women\quad\forall\women\in\Women,
  \end{alignat*}
  with real-valued capacities $\alpha_\type$, there exists an optimal solution $\{\beta_{\men\women}^*\}_{\men\in\Men,\women\in\Women}$ such that the graph on $\Types$ with edge set $\{ (\men, \women) : \beta_{\men\women}^* > 0 \}$ is a forest.
\end{lemma}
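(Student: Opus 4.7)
The plan is to prove the lemma by the standard cycle-canceling argument, familiar from the classical assignment LP, adapted to the setting of real-valued capacities. The key observation is that the feasible region of the LP is a polytope, so it admits a basic feasible (vertex) optimal solution, and such vertex solutions necessarily have forest-structured support on the bipartite graph with vertex set $\Types = \Men \cup \Women$.

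First I would take any optimal solution $\{\beta_{\men\women}\}$ and consider the bipartite support graph $G$ on $\Men \cup \Women$ whose edges are $\{(\men,\women) : \beta_{\men\women} > 0\}$. If $G$ already is a forest, we are done. Otherwise, let $C$ be a cycle in $G$. Since $G$ is bipartite, $C$ has even length and its edges can be two-colored (say red and blue) so that consecutive edges get different colors and each vertex of $C$ is incident to one red and one blue edge of $C$. Consider the perturbation that adds $\varepsilon$ to every red $\beta_{\men\women}$ and subtracts $\varepsilon$ from every blue one. Because at each vertex of $C$ exactly one red and one blue incident edge are being perturbed by opposite signs, both the $\Men$-side constraints \eqref{eq:lp-1} and the $\Women$-side constraints \eqref{eq:lp-2} are preserved exactly. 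Nonnegativity is preserved for $|\varepsilon|$ sufficiently small since every edge of the cycle currently has strictly positive weight.

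Next I would use the optimality of the starting solution to rule out a strict improvement in either direction of $\varepsilon$. The objective is linear in $\varepsilon$, so its derivative with respect to $\varepsilon$ must be zero (otherwise pushing $\varepsilon$ one way would strictly increase the objective, contradicting optimality). Hence the perturbation does not change the objective value for any feasible $\varepsilon$, in either direction. I can then choose the sign of $\varepsilon$ and push $|\varepsilon|$ up until the first of the cycle edges hits $0$: specifically, with the blue direction, take $\varepsilon = \min_{(\men,\women) \text{ blue}} \beta_{\men\women}$. The resulting vector is still feasible, still optimal, and has strictly smaller support, since at least one edge of $C$ has been zeroed out (and no new edges are introduced since we only modified values of edges already in $G$).

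Iterating this procedure strictly decreases the size of the support at each step, so it terminates in finitely many steps, at which point the support graph contains no cycle and is therefore a forest. The main (minor) subtlety to watch for is simply the bipartite two-coloring of the cycle and verifying that the constraint $\sum_{\women} \beta_{\men\women} \le \alpha_\men$ (and symmetrically for $\women$) is preserved exactly — this works precisely because in a bipartite cycle each vertex has exactly two cycle-edges, one of each color, so the $+\varepsilon$ and $-\varepsilon$ cancel in each row and column sum. No part of the argument uses integrality or any property of the capacities $\alpha_\type$ beyond nonnegativity, which is why the classical result extends verbatim to real-valued capacities.
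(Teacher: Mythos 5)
Your proof is correct and follows exactly the cycle-canceling argument that the paper sketches in one sentence (``As long as a cycle exists, one can push flow through this cycle to eliminate it while simultaneously reducing the value of some variable to $0$'') and otherwise delegates to \citet{LenstraShmoysTardos90, ChakrabartyGoel10, banerjee2017segmenting}. The bipartite two-coloring, the observation that the $\pm\varepsilon$ perturbation leaves every row and column sum invariant, the linearity-of-the-objective argument forcing the cycle's net contribution to be zero, and the finite-termination argument are all exactly the standard route.
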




The next lemma, which is the main ingredient for step (2), shows that we can take any edge-weighted forest and approximate it up to a factor of $2$ in terms of edge weight with a disjoint union of star graphs. The precise formulation we cite is due to \citet{banerjee2017segmenting}, though simillar ideas have also appeared in the contexts of budgeted allocations \cite{LenstraShmoysTardos90, ChakrabartyGoel10}. This lemma is simple to prove: Root each tree in the forest arbitrarily, and color each edge red or blue based on its distance modulo $2$ from its root. The colors partition the forest into two subgraphs each made up of star-shaped graphs; one of these subgraphs must capture at least half the edge weight of the forest.\footnote{We remark that the constant factor on this lemma is tight: Consider the tree with $2\nfill+1$ vertices, where the root vertex has $\nfill$ children and each of these children has another child. Then, any subgraph that is a vertex-disjoint union of star graphs can only obtain edge weight $\nfill+1$.}

\begin{lemma}[\citet{banerjee2017segmenting}]\label{lemma:star}
  Given an edge-weighted forest, there exists a subgraph that is a union of vertex-disjoint star graphs (i.e., trees of radius $1$) such that the total edge weight of the subgraph is at least half that of the original tree.
\end{lemma}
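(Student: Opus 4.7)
The proof is essentially sketched in the paragraph preceding the lemma, so my plan is to flesh that sketch into a clean argument. Since the forest consists of vertex-disjoint trees, I will argue for each tree separately and sum the bounds at the end.

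First, I would root each tree $T$ in the forest at an arbitrary vertex $r$, which assigns each non-root vertex a well-defined parent and an integer depth (with $r$ at depth $0$). Then, for each edge $e = (u, v)$ with $v$ being the child endpoint (i.e., the deeper of the two), I would color $e$ red if $\operatorname{depth}(v)$ is odd and blue if $\operatorname{depth}(v)$ is even. This partitions the edges of $T$ into two subgraphs $T_{\mathrm{red}}$ and $T_{\mathrm{blue}}$.

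The key structural claim is that each of $T_{\mathrm{red}}$ and $T_{\mathrm{blue}}$ is a disjoint union of stars. To see this, consider any vertex $v$ at depth $d \geq 1$: the single edge connecting $v$ to its parent has color determined by the parity of $d$, while the edges connecting $v$ to its children have color determined by the parity of $d+1$. Hence for each vertex, the incident edges of a single color class consist of either (i) its parent edge alone, or (ii) only edges to children. Consequently, in $T_{\mathrm{red}}$ every odd-depth vertex is a potential star-center (with leaves being its children, which are at even depth and thus have no other red edges), and every even-depth vertex is a leaf or isolated; similarly for $T_{\mathrm{blue}}$ with the roles of odd and even reversed (with the root serving as a star-center in $T_{\mathrm{blue}}$). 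Since star-centers in each color class are at distance $\ge 2$ from each other, the connected components of each color class are indeed stars.

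Finally, since the red and blue subgraphs together contain every edge of $T$ exactly once, their total edge weights sum to the total edge weight of $T$, so by averaging at least one of them has weight $\ge \tfrac{1}{2}\sum_{e \in T} w(e)$. Picking the heavier color class in each tree and taking the union across trees yields a vertex-disjoint union of stars whose total weight is at least half the original forest's weight. There is no substantive obstacle here; the only care needed is checking the edge-coloring boundary case at the root (which has no parent edge), and confirming that star-centers of the same color are non-adjacent so the components really are stars rather than longer paths.
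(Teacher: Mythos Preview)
Your proposal is correct and follows exactly the approach the paper sketches: root each tree, two-color edges by parity of the child's depth, observe each color class is a vertex-disjoint union of stars, and take the heavier class. One small slip: with your coloring convention, it is the \emph{even}-depth vertices that serve as star-centers in $T_{\mathrm{red}}$ (their child edges are red, their parent edge is blue) and the odd-depth vertices that are leaves---you have this swapped---but your key observation that each vertex sees only its parent edge or only its child edges in a given color class is correct, and the conclusion follows regardless.
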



In context, \Cref{lemma:forest} tells us that we can find a solution $\{\beta_{\men\women}^*\}_{\men\in\Men,\women\in\Women}$ to the linear program \eqref{eq:lp} for the platform's first-best optimization problem such that the positive weight variables form a forest on $\Types$. We weight the edges of this forest by their contribution $\beta_{\men\women}\cdot\rho_{\men\women}$ to social welfare in the first-best optimization problem. Next, \Cref{lemma:star} tells us that we can $2$-approximate this edge-weighted forest with a union of vertex-disjoint star graphs. Each of these star graphs defines a ``star-shaped'' submarket; note that each such submarket has a side where all agents are of the same type. This completes steps (1) and (2).

\subsubsection{Step (3)}\label{sec:step-3}

We now show that for any star-shaped submarket (i.e., one where one side of the market consists entirely of agents of the same type), the platform can design a set of assortments that induce a stationary equilibrium with welfare at least half that of the platform's \emph{first-best} solution. Formally, we prove the following proposition:

\begin{proposition}[Star-shaped Markets]\label{proposition:starshaped}
  For any star-shaped market, the platform can choose assortments that induce a stationary equilibrium whose welfare is a $2$-approximation to the platform's first-best welfare.
\end{proposition}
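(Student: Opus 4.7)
The plan is to exploit the alignment between platform welfare and the expected utility of the single center-side type that is specific to a star-shaped market. Let $\men$ denote the unique center-side type and $\women_1,\ldots,\women_k$ denote the leaf types. First I would derive the key identity
\[
W^{IC}\;=\;2\,\thresh_\men\,\alpha_\men
\]
for the welfare of any stationary Nash equilibrium of the star. This follows by substituting the incentive-compatibility fixed-point $\thresh_\men=\contutility_\men$ from \Cref{lemma:payoff} into the welfare expression $2\mass_\men\sum_j\rate_\men(\women_j)\int_{\max(\thresh_\men,\thresh_{\women_j})}^\infty u\,d\CDF_{\men\women_j}$ and simplifying using the stationarity identity $\mass_\men(\delta+\xi_\men)=\alpha_\men$. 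The upshot is that maximizing social welfare in a star is equivalent to maximizing an individual type-$\men$ agent's expected utility---the platform's objective and the center agent's incentives point in the same direction, which is the alignment phenomenon highlighted in \Cref{sec:step-3}.

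With this identity in hand, I would invoke \Cref{proposition:lp} to compute the first-best LP optimum $\{\beta_{\men\women_j}^{*}\}$ on the star, together with the pairwise first-best thresholds $\rho_{\men\women_j}$ and the corresponding meeting-rate design $\glow_{\men\women_j}^{*}=\beta_{\men\women_j}^{*}/(\delta+\int_{\rho_{\men\women_j}}^\infty d\CDF_{\men\women_j})$. The proof then splits into two cases, as suggested by \Cref{fig:outline}. In Case 1, the first-best design already induces a stationary IC equilibrium; this happens exactly when the IC threshold $\thresh_\men$ at the first-best rates satisfies $\thresh_\men\le\min_j\rho_{\men\women_j}$ over the pairs with $\beta_{\men\women_j}^{*}>0$, so that no pair is cannibalized. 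Combining this with the identity above gives $W^{IC}=W^{\mathrm{FB}}$, and the proposition is immediate.

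For Case 2---where the first-best design would induce cannibalization---the plan is to construct an alternative feasible assortment by restricting to a subset $S$ of the leaf types so as to restore incentive compatibility. Since $W^{IC}=2\thresh_\men\alpha_\men$ while $W^{\mathrm{FB}}=2\bar\rho\,\alpha_\men$ for a $\beta^{*}$-weighted average $\bar\rho$ of the $\rho_{\men\women_j}$'s, establishing a $2$-approximation reduces to exhibiting some feasible IC assortment whose induced $\thresh_\men$ is at least $\bar\rho/2$. I would consider a family of candidate modifications---collapsing to a single high-$\rho$ pair, and restricting to subsets obtained by dropping the types whose presence pushes $\thresh_\men$ above the remaining pairs' $\rho$'s---and argue that at least one of them always achieves the target threshold. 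The main obstacle, and the bulk of the technical work, is that no single rule suffices on every instance: small examples in the spirit of \Cref{example:vertical} show each candidate construction can fail in isolation. The proof therefore has to case-analyze on the structure of the LP solution---exploiting the fact that when $W^{\mathrm{FB}}$ is concentrated on a few dominant pairs the single-pair fallback matches $W^{\mathrm{FB}}$ up to a constant, while when $W^{\mathrm{FB}}$ spreads across pairs with comparable $\rho$'s a truncated multi-pair assortment preserves most of its value---with the fixed-point formula of \Cref{lemma:payoff} used throughout to bound $\thresh_\men$ in the modified equilibrium and \Cref{lemma:rho} to control how changing the effective threshold perturbs per-pair payoffs.
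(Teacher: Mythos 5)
Your opening identity $W^{IC}=2\thresh_\men\alpha_\men$ is correct and is precisely the alignment the paper exploits. But your case split has the wrong orientation, and the Case-1 claim would not go through. The regime in which the first-best rates $\glow_{\men\women_i}^{*}$ actually induce a feasible stationary equilibrium is $\hat\thresh_\men > \rho_{\men\women_\nfill}$ (the \emph{smallest} $\rho$ among the pairs the knapsack selects), not $\thresh_\men\le\min_j\rho_{\men\women_j}$. When $\hat\thresh_\men$ is large, every equilibrium acceptance threshold weakly dominates its first-best counterpart, so matching rates weakly decrease and the capacity constraints only gain slack. When instead $\thresh_\men\le\rho_{\men\women_\nfill}$ --- your proposed Case 1 --- the fractionally-filled knapsack item $\women_\nfill$ has idealized threshold $\hat\thresh_{\women_\nfill}<\rho_{\men\women_\nfill}$ and over-accepts relative to the first-best plan, which can violate type $\men$'s combined constraint \eqref{eq:opt-combined-main}; this is the regime the paper has to repair, not the regime where nothing needs fixing. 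Your assertion that $W^{IC}=W^{FB}$ there is also generically false: in a stationary equilibrium $\thresh_\men$ is a $\beta^{*}$-weighted average of the per-pair payoffs $\rho_{\men\women_j}$, so $\thresh_\men\le\min_j\rho_{\men\women_j}$ forces all $\rho$'s to coincide. Even in the paper's genuinely ``feasible'' case, the equilibrium welfare is \emph{not} equal to first-best; the $2$-approximation there requires the telescoping inequality that bounds $\frac12\OPT$ by $2\alpha_\men\hat\thresh_\men$ using the fixed-point definition of $\hat\thresh_\men$.

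Your Case 2 is an outline rather than an argument, and the proposed move --- dropping types ``whose presence pushes $\thresh_\men$ above the remaining pairs' $\rho$'s'' --- points in the wrong direction, since dropping leaves only lowers $\thresh_\men$ and the paper's problematic regime is precisely when $\thresh_\men$ is already low. The structural ingredient you are missing is \Cref{lemma:first-best}: the first-best LP on a star is a fractional knapsack, so its optimum fully saturates $\women_1,\ldots,\women_{\nfill-1}$ and only fractionally fills $\women_\nfill$. Hence $\women_\nfill$ is the unique possible source of infeasibility, and the family of candidate restrictions collapses to exactly two: drop $\women_\nfill$ (whereupon the residual market has $\hat\thresh_{\women_i}=\rho_{\men\women_i}$ for every remaining $i$ and the residual first-best is IC-feasible), or keep only the single pair $(\men,\women_\nfill)$. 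Since the welfare contributions of these two pieces sum to the full first-best welfare, one of them captures at least half of it, yielding the clean $2$-approximation. Without the knapsack structure, your ``concentrated vs.\ spread'' heuristic gives no a priori bound on the number of candidate subsets to try and no argument that any of them is IC-feasible; that is the gap that the paper's three structural lemmas (\Cref{lemma:thresh-star,lemma:thresh-ub,lemma:thresh-max}) are designed to close.
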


For the proof of this proposition, a key observation is that the platform's first-best threshold coincides with an agent's optimal threshold if their assortment is saturated with agents of a single type. In particular, such agents will not accept more than the platform would want them to in the first-best equilibrium. We obtain from this observation a condition for when the first-best solution induces a stationary equilibrium. We then show that the induced equilibrium achieves at least half the first-best welfare. When the condition fails to hold, it turns out that we can modify the first-best solution appropriately, by only matching agents whose capacity constraint is tight, and still obtain an equilibrium where at least half the first-best welfare is preserved.

For the remainder of this section, we assume without loss of generality that $\Men = \{\men\}$, i.e., our star-shaped market is such that the $\Men$ side of the market consists only of type $\men$ agents.
    
\paragraph{Structural Observations for Star-Shaped Markets.} Before giving the proof of \Cref{proposition:starshaped}, we develop some structural observations about the optimal first-best solution and agents' thresholds. We defer the proofs of all lemmas in this section to \Cref{sec:approximation-proofs-step-3}.

In our first lemma, we characterize the platform's first-best solution in star-shaped markets:

\begin{lemma}[First-best Solution of a Star-Shaped Market]\label{lemma:first-best}
There exists a first-best solution $\{\beta_{\men\women}^*\}_{\women\in\Women}$ and a subset $\women_1,\women_2,\ldots,\women_\nfill\subseteq\Women$ such that $\rho_{\men\women_{1}}\ge\rho_{\men\women_{2}}\ge\cdots\ge\rho_{\men\women_\nfill}$, $\beta_{\men\women_i}^* = \alpha_{\women_i}$ for all $i < \nfill$, and $\beta_{\men\women}^* = 0$ if $\women\not\in\{\women_1,\ldots,\women_\nfill\}$.
\end{lemma}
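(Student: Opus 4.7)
My plan is to specialize the first-best LP from \Cref{proposition:lp} to the star-shaped setting and observe that it becomes a standard fractional-knapsack problem, which is solved optimally by the greedy order induced by the coefficients $\rho_{\men\women}$.

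Concretely, with $\Men = \{\men\}$, the LP \eqref{eq:lp} reduces to
\[
\max \sum_{\women\in\Women} \rho_{\men\women}\,\beta_{\men\women}
\quad\text{s.t.}\quad
\sum_{\women\in\Women}\beta_{\men\women}\le\alpha_\men,\quad \beta_{\men\women}\le\alpha_\women\ \ \forall\women,\quad \beta_{\men\women}\ge 0,
\]
since constraint \eqref{eq:lp-2} collapses to a single-variable upper bound when there is only one type in $\Men$. I would relabel the types in $\Women$ in decreasing order of $\rho_{\men\women}$ (breaking ties arbitrarily) as $\women_1, \women_2, \ldots$, and define the greedy candidate
\[ \beta^*_{\men\women_i} = \min\!\left(\alpha_{\women_i},\ \alpha_\men - \sum_{j<i}\beta^*_{\men\women_j}\right), \]
with $\nfill$ equal to the smallest index for which the above minimum is attained by the second term (or $\nfill = \abs{\Women}$ if the budget is never exhausted). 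By construction, $\beta^*_{\men\women_i} = \alpha_{\women_i}$ for all $i < \nfill$ and $\beta^*_{\men\women} = 0$ for $\women \notin \{\women_1,\ldots,\women_\nfill\}$, which matches the structural statement of the lemma.

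Optimality of $\beta^*$ would then follow from a standard exchange argument. Suppose $\beta'$ is any feasible solution whose support does not have the claimed greedy structure; then there exist indices $i < j$ with $\beta'_{\men\women_i} < \alpha_{\women_i}$ and $\beta'_{\men\women_j} > 0$. Shifting $\exeps = \min(\alpha_{\women_i} - \beta'_{\men\women_i},\ \beta'_{\men\women_j})$ units of mass from $\women_j$ to $\women_i$ preserves feasibility (the $\alpha_\men$ budget is unchanged, and the individual caps are respected) and changes the objective by $\exeps(\rho_{\men\women_i} - \rho_{\men\women_j})\ge 0$. Since each swap strictly increases a lexicographic potential such as $\sum_k 2^{-k}\beta_{\men\women_k}$, iterating terminates at $\beta^*$ without decreasing the objective.

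No step here is technically delicate---the main task is correctly specializing the LP and confirming that the fractional-knapsack greedy structure translates into the claimed saturation pattern. The only subtlety I anticipate is tie-breaking when multiple types achieve the same $\rho$ value, but this is resolved by fixing any consistent ordering; any such choice yields a valid enumeration $\women_1,\ldots,\women_\nfill$.
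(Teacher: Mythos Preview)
Your proposal is correct and follows essentially the same approach as the paper: both recognize that specializing the LP \eqref{eq:lp} to $\Men=\{\men\}$ yields a fractional knapsack problem (capacity $\alpha_\men$, items of size $\alpha_\women$ and value-density $\rho_{\men\women}$), whose optimal solution has precisely the greedy saturation structure claimed. The paper's proof is in fact more terse than yours, simply invoking the fractional-knapsack reduction without writing out the exchange argument; your only minor slip is that the swap procedure alone terminates at \emph{some} greedy-structured solution rather than necessarily at $\beta^*$ (one must also note that any such solution can be increased up to $\beta^*$ since $\rho_{\men\women}\ge 0$), but this is a trivial patch.
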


Now that we understand the platform's first-best optimal solution, we can start to incorporate incentives through the incentive constraint \eqref{eq:opt-payoff}.
By \Cref{lemma:rho}, each first-best threshold $\thresh_{\men\women_i}$ in the first-best solution described in \Cref{lemma:first-best} can be taken to be $\rho_{\men\women_i}$, since $\rho_{\men\women_i}$ maximizes
\[ \frac{\int_{\thresh_{\men\women_i}}^\infty \utility\,d\CDF_{\men\women_i}}{\delta + \int_{\thresh_{\men\women_i}}^\infty\,d\CDF_{\men\women_i}}. \]
Then, by the definition \smash{$\beta_{\men\women_i} = \glow_{\men\women_i}\paren*{\delta + \int_{\thresh_{\men\women_i}}^\infty \,d\CDF_{\men\women_i}}$} in \eqref{eq:beta-main}, this first-best solution corresponds to the choice of $\glow_{\men\women_i}$ given by
\[ \glow_{\men\women_i} = \glow_{\men\women_i}^* \coloneqq \frac{\beta_{\men\women_i}^*}{\delta + \int_{\rho_{\men\women_i}}^\infty \,d\CDF_{\men\women_i}}. \]

Our next step is to consider incentives and characterize equilibrium when $\glow_{\men\women_i} = \glow_{\men\women_i}^*$. Fixing the $\glow_{\men\women_i}$ values gives us information about equilibrium thresholds $\thresh_{\type}$ for all types because the incentive constraint \eqref{eq:opt-payoff} is equivalent to
\begin{equation}\label{eq:fixedpoint}
\thresh_\type = \frac{1}{\alpha_\type}\sum_{\typeii}\paren*{\glow_{\type\typeii} \int_{\max(\thresh_\type, \thresh_\typeii)}^\infty \utility\,d\CDF_{\type\typeii}}
\end{equation}
after scaling both the numerator and the denominator of the right-hand side by $\mass_\type$. (Recall that this fixed point definition of equilibrium play comes from \Cref{proposition:unique-main}, which lets us restrict our attention to equilibria where each type has threshold equal to their expected utility.)

As a point of comparison when analyzing agents' equilibrium play, we consider the ``idealized thresholds'' $\hat\thresh_{\women_i}$ that arise when type $\women_i$ has absolute market power. These idealized thresholds, which are given by the fixed point equation
\[ \hat\thresh_{\women_i} = \frac{1}{\alpha_{\women_i}} \glow_{\men\women_i}^* \int_{\hat\thresh_{\women_i}}^\infty \utility\,d\CDF_{\men\women_i}, \]
can also be thought of as the threshold that type $\women_i$ agents would set if $\thresh_\men = 0$.
(Note that the left-hand side is monotonically increasing in $\hat\thresh_{\women_i}$ while the right-hand side is monotonically decreasing, so the fixed point exists and is well-defined.) We care about these idealized thresholds because, in equilibria where $\glow_{\men\women_i} = \glow_{\men\women_i}^*$, the threshold chosen by type $\men$ will be a best response to these idealized thresholds. Namely, type $\men$ will play the threshold $\hat\thresh_\men$ which satisfies
\begin{equation}\label{eq:hat-thresh-m}
\hat\thresh_\men
= \frac{1}{\alpha_\men}\sum_{i=1}^\nfill\,\paren*{\glow_{\men\women_i}^*\int_{\max(\hat\thresh_\men, \hat\thresh_{\women_i})}^\infty u \,d\CDF_{\men\women_i}}.
\end{equation}
To prove this claim, we will proceed through the following series of lemmas. These lemmas and the definition of $\hat\thresh_\men$ will also be useful for our later analysis.

For the idealized thresholds $\hat\thresh_{\women_i}$, 
it is not difficult to establish using the definition of $\rho_{\men\women_i}$ and the monotonicity property in \Cref{lemma:rho} that: 
 

\begin{lemma}[Comparison to First-best Thresholds]\label{lemma:thresh-star}
If $i < \nfill$, then $\hat\thresh_{\women_i} = \rho_{\men\women_i}$ (where $\rho_{\men\women_i}$ is defined in \Cref{proposition:lp}). And at $i = \nfill$, it holds that $\hat\thresh_{\women_\nfill}\le\rho_{\men\women_\nfill}$.
\end{lemma}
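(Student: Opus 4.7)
The plan is to compare the fixed-point equation defining $\hat\thresh_{\women_i}$ with the fixed-point identity $A(\rho_{\men\women_i}) = \rho_{\men\women_i}$ guaranteed by Lemma~\ref{lemma:rho}. The key is to rewrite the idealized-threshold equation in a form that looks exactly like the $A(\thresh) = \thresh$ identity when $i < \nfill$, so that uniqueness of the fixed point forces $\hat\thresh_{\women_i} = \rho_{\men\women_i}$.

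First I would substitute $\glow_{\men\women_i}^* = \beta_{\men\women_i}^*/(\delta + \int_{\rho_{\men\women_i}}^\infty d\CDF_{\men\women_i})$ into the defining equation for $\hat\thresh_{\women_i}$. Using that $\beta_{\men\women_i}^* = \alpha_{\women_i}$ for $i < \nfill$ (\Cref{lemma:first-best}), the $\alpha_{\women_i}$ in the denominator of the defining equation cancels, leaving
\[
\hat\thresh_{\women_i} \;=\; \frac{\int_{\hat\thresh_{\women_i}}^{\infty} \utility \,d\CDF_{\men\women_i}}{\delta + \int_{\rho_{\men\women_i}}^{\infty} \,d\CDF_{\men\women_i}}.
\]
By \Cref{lemma:rho}, the pair $(\rho_{\men\women_i}, \rho_{\men\women_i})$ satisfies this equation, so $\hat\thresh_{\women_i} = \rho_{\men\women_i}$ is a solution. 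For uniqueness I would argue directly: the right-hand side is strictly decreasing in $\hat\thresh_{\women_i}$ (shrinking the integration set reduces a non-negative integrand's contribution once we restrict to the relevant range, and if the threshold sits above the support the RHS is $0$ while the LHS is positive), while the left-hand side is strictly increasing. Hence the fixed point is unique, proving the equality claim.

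For the case $i = \nfill$, the only change is that $\beta_{\men\women_\nfill}^* \le \alpha_{\women_\nfill}$, so the same substitution gives
\[
\hat\thresh_{\women_\nfill} \;\le\; \frac{\int_{\hat\thresh_{\women_\nfill}}^{\infty} \utility \,d\CDF_{\men\women_\nfill}}{\delta + \int_{\rho_{\men\women_\nfill}}^{\infty} \,d\CDF_{\men\women_\nfill}}.
\]
Assume for contradiction that $\hat\thresh_{\women_\nfill} > \rho_{\men\women_\nfill}$. Then the numerator is strictly smaller than $\int_{\rho_{\men\women_\nfill}}^{\infty} \utility \,d\CDF_{\men\women_\nfill}$, so the displayed right-hand side is strictly less than $A(\rho_{\men\women_\nfill}) = \rho_{\men\women_\nfill}$, contradicting $\hat\thresh_{\women_\nfill} > \rho_{\men\women_\nfill}$.

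I do not foresee a substantive obstacle: the structural work is already done by \Cref{lemma:first-best} (which pins down $\beta^*_{\men\women_i}$) and \Cref{lemma:rho} (which identifies $\rho_{\men\women_i}$ as a fixed point of the relevant functional). The only place where care is needed is the uniqueness of the fixed point in the first part---one has to verify strict monotonicity of each side on the domain where both sides can agree, and handle the boundary behavior of the integrals so that no spurious fixed point appears above the support of $\CDF_{\men\women_i}$.
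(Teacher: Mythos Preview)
Your proposal is correct and follows essentially the same approach as the paper: substitute $\glow_{\men\women_i}^*$ using its definition in terms of $\beta_{\men\women_i}^*$, use $\beta_{\men\women_i}^* = \alpha_{\women_i}$ for $i<\nfill$ to cancel and reduce to the fixed-point equation solved by $\rho_{\men\women_i}$ via \Cref{lemma:rho}, and for $i=\nfill$ obtain an inequality and derive a contradiction from $\hat\thresh_{\women_\nfill} > \rho_{\men\women_\nfill}$. The only minor difference is that you spell out the uniqueness-via-monotonicity argument inside the proof, whereas the paper records that observation in the text just before the lemma.
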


A similar argument shows that these $\hat\thresh_{\women_i}$ upper bound the equilibrium thresholds $\thresh_{\women_i}$ for all $i$:

\begin{lemma}
[Comparison to Equilibrium Thresholds]
\label{lemma:thresh-ub}
In any equilibrium where $\glow_{\men\women_i} = \glow_{\men\women_i}^*$, $\thresh_{\women_i}\le\hat\thresh_{\women_i}$ for all $i$. 
\end{lemma}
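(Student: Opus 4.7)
The plan is to exploit the simple single-term structure of the $\women_i$-side fixed-point equation in a star-shaped market and then compare the equilibrium condition to the defining equation for $\hat\thresh_{\women_i}$ via a monotonicity argument.

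First I would reduce the $\women_i$ equilibrium condition \eqref{eq:fixedpoint} to a one-variable equation. Since $\Men = \{\men\}$, the sum over opposite-side types collapses, and the equilibrium condition becomes
\[ \thresh_{\women_i} = \frac{1}{\alpha_{\women_i}}\glow_{\men\women_i}^* \int_{\max(\thresh_\men, \thresh_{\women_i})}^\infty \utility\,d\CDF_{\men\women_i}. \]
Define the auxiliary function
\[ H_i(\thresh) \coloneqq \frac{1}{\alpha_{\women_i}}\glow_{\men\women_i}^* \int_{\thresh}^\infty \utility\,d\CDF_{\men\women_i}, \]
which is continuous and non-increasing in $\thresh$. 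By definition, $\hat\thresh_{\women_i}$ satisfies $\hat\thresh_{\women_i} = H_i(\hat\thresh_{\women_i})$. Because the identity is strictly increasing while $H_i$ is non-increasing, this fixed point is unique; moreover $H_i(\thresh) > \thresh$ for $\thresh < \hat\thresh_{\women_i}$ and $H_i(\thresh) < \thresh$ for $\thresh > \hat\thresh_{\women_i}$.

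Next I would split on the sign of $\thresh_\men - \thresh_{\women_i}$. If $\thresh_{\women_i} \ge \thresh_\men$, then $\max(\thresh_\men, \thresh_{\women_i}) = \thresh_{\women_i}$, so the equilibrium condition becomes $\thresh_{\women_i} = H_i(\thresh_{\women_i})$. By uniqueness of the fixed point of $H_i$, we get $\thresh_{\women_i} = \hat\thresh_{\women_i}$, and in particular $\thresh_{\women_i} \le \hat\thresh_{\women_i}$. Otherwise $\thresh_{\women_i} < \thresh_\men$ and the equilibrium condition reads $\thresh_{\women_i} = H_i(\thresh_\men)$. Suppose for contradiction that $\thresh_{\women_i} > \hat\thresh_{\women_i}$. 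Then $\thresh_\men > \thresh_{\women_i} > \hat\thresh_{\women_i}$, and using monotonicity of $H_i$ twice,
\[ \thresh_{\women_i} = H_i(\thresh_\men) \le H_i(\hat\thresh_{\women_i}) = \hat\thresh_{\women_i}, \]
contradicting $\thresh_{\women_i} > \hat\thresh_{\women_i}$. Hence $\thresh_{\women_i} \le \hat\thresh_{\women_i}$ in this case as well.

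I do not anticipate a major obstacle: the argument is essentially a one-dimensional monotone comparison, and the only delicate point is being careful that $H_i$ really is non-increasing and that the case $\max(\thresh_\men,\thresh_{\women_i})=\thresh_{\women_i}$ collapses the equation to precisely the defining equation of $\hat\thresh_{\women_i}$. Everything else follows from the uniqueness of the fixed point and the observation that raising $\thresh_\men$ above $\thresh_{\women_i}$ only \emph{reduces} the right-hand side of the equilibrium equation, which can only pull $\thresh_{\women_i}$ below $\hat\thresh_{\women_i}$ rather than above it. This is exactly the intuition that a stronger outside-side threshold cannibalizes $\women_i$'s acceptable meetings and therefore lowers her optimal cutoff.
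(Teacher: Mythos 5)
Your argument is correct and follows essentially the same route as the paper's proof: both rest on the non-increasing map $H_i$ and the uniqueness of its fixed point $\hat\thresh_{\women_i}$. The only cosmetic difference is that you split into cases on $\max(\thresh_\men,\thresh_{\women_i})$, whereas the paper handles both at once via the single inequality $\int_{\max(\thresh_\men,\thresh_{\women_i})}^\infty \utility\,d\CDF_{\men\women_i}\le\int_{\thresh_{\women_i}}^\infty\utility\,d\CDF_{\men\women_i}$, which yields $\thresh_{\women_i}\le H_i(\thresh_{\women_i})$ directly and then invokes the same monotone-comparison argument.
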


While the thresholds of type $\women_i$ are always upper bounded by $\hat\thresh_{\women_i}$, the quantity $\max(\thresh_\men, \thresh_{\women_i})$ is always at least $\hat\thresh_{\women_i}$: Intuitively, if $\thresh_\men < \hat\thresh_{\women_i}$, then type $\women_i$ would set their threshold to be $\hat\thresh_{\women_i}$, since that would be optimal for them. This property, together with \Cref{lemma:thresh-star}, guarantees that no type $\women_i$ for $i < \nfill$ will match more than they would have under the platform's first-best solution.

\begin{lemma}[Lower Bound on Equilibrium Matching Thresholds]\label{lemma:thresh-max}
In any equilibrium where $\glow_{\men\women_i} = \glow_{\men\women_i}^*$, $\max(\thresh_\men, \thresh_{\women_i})\ge\hat\thresh_{\women_i}$ for all $i$.
\end{lemma}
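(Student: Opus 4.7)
\textbf{Proof proposal for \Cref{lemma:thresh-max}.} I would argue by contradiction. Assume there exists some $i$ such that $\max(\thresh_\men,\thresh_{\women_i}) < \hat\thresh_{\women_i}$. Since we are in a stationary equilibrium, \Cref{proposition:unique-main} tells us every type's threshold equals its expected utility, which is always non-negative (an agent can always reject all offers to guarantee payoff $0$). In particular, $\thresh_{\women_i}\ge 0$, so $\max(\thresh_\men,\thresh_{\women_i})\ge 0$ as well. This non-negativity will be crucial when we invoke monotonicity below.

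Next, in the star-shaped market type $\women_i$ only meets type $\men$, so applying the fixed-point form \eqref{eq:fixedpoint} of the incentive constraint to type $\women_i$ together with the assumption $\glow_{\men\women_i}=\glow_{\men\women_i}^*$ yields
\[
\thresh_{\women_i} \;=\; \frac{1}{\alpha_{\women_i}}\,\glow_{\men\women_i}^{*}\int_{\max(\thresh_\men,\thresh_{\women_i})}^{\infty} u\,d\CDF_{\men\women_i},
\]
whereas by definition the idealized threshold satisfies
\[
\hat\thresh_{\women_i} \;=\; \frac{1}{\alpha_{\women_i}}\,\glow_{\men\women_i}^{*}\int_{\hat\thresh_{\women_i}}^{\infty} u\,d\CDF_{\men\women_i}.
\]

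The key structural fact is that the map $\tau\mapsto \int_\tau^\infty u\,d\CDF_{\men\women_i}$ is non-increasing on $[0,\infty)$: its derivative with respect to $\tau$ is $-\tau\, \CDF_{\men\women_i}'(\tau)\le 0$ for $\tau\ge 0$, which can also be seen directly from the fact that increasing $\tau$ only removes non-negative mass from the integral. Since $0\le \max(\thresh_\men,\thresh_{\women_i}) < \hat\thresh_{\women_i}$, this monotonicity gives
\[
\int_{\max(\thresh_\men,\thresh_{\women_i})}^{\infty} u\,d\CDF_{\men\women_i} \;\ge\; \int_{\hat\thresh_{\women_i}}^{\infty} u\,d\CDF_{\men\women_i},
\]
so $\thresh_{\women_i}\ge \hat\thresh_{\women_i}$. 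This contradicts $\thresh_{\women_i}\le \max(\thresh_\men,\thresh_{\women_i}) < \hat\thresh_{\women_i}$, completing the proof.

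The only subtlety I anticipate is the non-strict nature of the monotonicity (if $\CDF_{\men\women_i}$ places no mass in the interval $(\max(\thresh_\men,\thresh_{\women_i}),\hat\thresh_{\women_i}]$, the two integrals are equal). That case, however, is handled by the same chain of inequalities: equality of integrals forces $\thresh_{\women_i}=\hat\thresh_{\women_i}$, still contradicting the strict inequality $\thresh_{\women_i}<\hat\thresh_{\women_i}$. So no separate treatment is required, and the main step really is just the clean ``best-response beats the idealized best-response'' monotonicity argument enabled by the star shape of the submarket.
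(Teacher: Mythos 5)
Your proof is correct, but it takes a different route than the paper's. The paper argues by case split: if $\thresh_\men\ge\hat\thresh_{\women_i}$ the claim is immediate; otherwise, since $\thresh_\men<\hat\thresh_{\women_i}$, the value $\hat\thresh_{\women_i}$ solves type $\women_i$'s fixed-point equation \eqref{eq:fixedpoint} (the $\max$ collapses to $\hat\thresh_{\women_i}$), and therefore by uniqueness of the equilibrium threshold (\Cref{lemma:fixedpoint}) one gets $\thresh_{\women_i}=\hat\thresh_{\women_i}$ exactly. You instead run a direct contradiction argument: assuming $\max(\thresh_\men,\thresh_{\women_i})<\hat\thresh_{\women_i}$, the monotonicity of $\tau\mapsto\int_\tau^\infty u\,d\CDF_{\men\women_i}$ on $[0,\infty)$ forces $\thresh_{\women_i}\ge\hat\thresh_{\women_i}$, contradicting $\thresh_{\women_i}\le\max(\thresh_\men,\thresh_{\women_i})<\hat\thresh_{\women_i}$. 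Your version is a bit more hands-on and avoids invoking \Cref{lemma:fixedpoint}, though that lemma is available and already part of the paper's toolkit; the tail-expectation monotonicity you use is essentially the same fact that underlies the fixed-point uniqueness and \Cref{lemma:rho}, so the mathematical content is close even if the packaging differs. One thing worth making explicit (you gesture at it) is that $\hat\thresh_{\women_i}\ge 0$, which follows because it is an expected continuation utility and the outside option is normalized to $0$; this is what licenses applying the monotonicity over the interval $[\max(\thresh_\men,\thresh_{\women_i}),\hat\thresh_{\women_i}]\subseteq[0,\infty)$.
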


From \Cref{lemma:thresh-ub,lemma:thresh-max}, we deduce the threshold $\thresh_\men$ played in equilibria where $\glow_{\men\women_i} = \glow_{\men\women_i}^*$ for all $i$: These lemmas together imply that $\max(\thresh_\men, \thresh_{\women_i}) = \max(\thresh_\men, \hat\thresh_{\women_i})$ in such equilibria. Recall that (by \Cref{proposition:unique-main}) $\thresh_\men$ must also satisfy
\begin{equation}
\thresh_\men
= \frac{1}{\alpha_\men}\sum_{i=1}^\nfill\,\paren*{\glow_{\men\women_i}^*\int_{\max(\thresh_\men, \thresh_{\women_i})}^\infty u \,d\CDF_{\men\women_i}} = \frac{1}{\alpha_\men}\sum_{i=1}^\nfill\,\paren*{\glow_{\men\women_i}^*\int_{\max(\thresh_\men, \hat\thresh_{\women_i})}^\infty u \,d\CDF_{\men\women_i}}.
\end{equation}
This is exactly the definition of $\hat\thresh_\men$ as given in \eqref{eq:hat-thresh-m}, so we may conclude $\thresh_\men = \hat\thresh_\men$.




\paragraph{Proving the Proposition.}

With our structural lemmas, we are now ready to prove \Cref{proposition:starshaped}. We split our analysis into two cases based on the value of $\hat\thresh_\men$. We show that if $\hat\thresh_\men > \rho_{\men\women_\nfill}$, then there exists a stationary equilibrium (with incentives) corresponding to the first-best $\glow_{\men\women_i}^*$ values. Moreover, this equilibrium obtains at least half the welfare of the first-best optimal. Otherwise, if $\hat\thresh_\men\le\rho_{\men\women_\nfill}$, then we consider two further possibilities: If $\sum_{i=1}^{\nfill - 1}\rho_{\men\women_i}\cdot\alpha_{\men\women_i}$ is at least half the first-best optimal, then we only let type $\men$ agents meet type $\women_i$ agents for $i < \nfill$. Otherwise, we only let type $\men$ agents meet type $\women_\nfill$ agents. In each of these two remaining cases, we will show that we still capture at least half of the first-best optimal welfare.

\begin{proof}[Proof of \Cref{proposition:starshaped}]
    As mentioned above, we split our analysis into two cases based on the value of $\hat\thresh_\men$. For both, we use $\OPT$ to refer to the welfare value attained by the platform's first-best optimal solution, i.e.,
    $\OPT = 2\sum_{i=1}^\nfill \rho_{\men\women_i}\cdot\beta_{\men\women_i}^*$. We now discuss the two cases:
    
  \begin{description}
    \item[Case 1: $\boldsymbol{\hat\thresh_\men > \rho_{\men\women_\nfill}}$.] If $\hat\thresh_\men > \rho_{\men\women_\nfill}$, we show that the first-best choice of $\glow_{\men\women_i}^*$ induces a stationary equilibrium obtaining at least half the first-best optimal welfare.
    
    To check feasibility, we need to check that the $\{\xi_\type\}_{\type\in\Types}$ and $\{\mass_\type\}_{\type\in\Types}$ given by the thresholds in the preceding lemmas satisfy the constraints of \eqref{eq:opt}. By our analysis in \Cref{proposition:lp}, it is equivalent to check that the combined constraint \eqref{eq:opt-combined-main} holds. By \Cref{lemma:thresh-star,lemma:thresh-max}, in equilibrium, we have $\max(\thresh_\men, \thresh_{\women_i})\ge\hat\thresh_{\women_i} = \rho_{\men\women_i}$ for all $i < \nfill$ and $\thresh_\men = \hat\thresh_\men$, so $\max(\thresh_\men, \thresh_{\women_\nfill})\ge\hat\thresh_{\men} > \rho_{\men\women_\nfill}$. This implies all the thresholds in equilibrium are higher than they were in the platform's first-best solution. In other words, the corresponding matching rates $\xi_\type$ are no larger than they were in the first-best solution. It follows that the combined constraint \eqref{eq:opt-combined-main} can only have more slack, meaning that a stationary  equilibrium is indeed induced by the choice of $\glow_{\men\women_i}^*$ and the thresholds discussed above.
    
    Our remaining task is to check that the realized welfare is sufficient by lower bounding $\hat\thresh_\men$. The intuition for the following calculation is that if type $\men$ is rejecting potential matches of value $\utility$, then the expected payoff of type $\men$ must be at least $\utility$. Indeed, we have:
    \ifx\acmConference\undefined
      \begin{align*}
        \frac 12\OPT
        &= \sum_{\women\in\Women} \rho_{\men\women}\cdot\beta_{\men\women}^* \\
        &= \sum_{\women\in\Women} \paren*{\glow_{\men\women}^* \int_{\rho_{\men\women}}^\infty u \,d\CDF_{\men\women}} \\
        &= \sum_{i=1}^\nfill\, \paren*{\glow_{\men\women_i}^* \int_{\max(\hat\thresh_\men, \rho_{\men\women_i})}^\infty u \,d\CDF_{\men\women_i}} + \sum_{i=1}^\nfill\, \paren*{\glow_{\men\women_i}^* \int_{\rho_{\men\women_i}}^{\max(\hat\thresh_\men, \rho_{\men\women_i})} u \,d\CDF_{\men\women_i}} \\
        &\le \sum_{i=1}^\nfill\, \paren*{\glow_{\men\women_i}^* \int_{\max(\hat\thresh_\men, \hat\thresh_{\women_i})}^\infty u \,d\CDF_{\men\women_i}} + \sum_{i=1}^\nfill\, \paren*{\glow_{\men\women_i}^* \int_{\rho_{\men\women_i}}^{\infty} \hat\thresh_\men \,d\CDF_{\men\women_i}} \\
        &\le \alpha_\men\cdot\frac{1}{\alpha_\men}\sum_{i=1}^\nfill\, \paren*{\glow_{\men\women_i}^* \int_{\max(\hat\thresh_\men, \hat\thresh_{\women_i})}^\infty u \,d\CDF_{\men\women_i}} + \hat\thresh_\men\sum_{i=1}^\nfill \beta^*_{\men\women_i} \\
        &\le 2\alpha_\men\hat\thresh_\men.
      \end{align*}
    \else
      \begin{align*}
        \frac 12\OPT
        &= \sum_{\women\in\Women} \rho_{\men\women}\cdot\beta_{\men\women}^*
        = \sum_{\women\in\Women} \paren*{\glow_{\men\women}^* \int_{\rho_{\men\women}}^\infty u \,d\CDF_{\men\women}} \\
        &= \sum_{i=1}^\nfill\, \paren*{\glow_{\men\women_i}^* \int_{\max(\hat\thresh_\men, \rho_{\men\women_i})}^\infty u \,d\CDF_{\men\women_i}} + \sum_{i=1}^\nfill\, \paren*{\glow_{\men\women_i}^* \int_{\rho_{\men\women_i}}^{\max(\hat\thresh_\men, \rho_{\men\women_i})} u \,d\CDF_{\men\women_i}} \\
        &\le \sum_{i=1}^\nfill\, \paren*{\glow_{\men\women_i}^* \int_{\max(\hat\thresh_\men, \hat\thresh_{\women_i})}^\infty u \,d\CDF_{\men\women_i}} + \sum_{i=1}^\nfill\, \paren*{\glow_{\men\women_i}^* \int_{\rho_{\men\women_i}}^{\infty} \hat\thresh_\men \,d\CDF_{\men\women_i}} \\
        &\le \alpha_\men\cdot\frac{1}{\alpha_\men}\sum_{i=1}^\nfill\, \paren*{\glow_{\men\women_i}^* \int_{\max(\hat\thresh_\men, \hat\thresh_{\women_i})}^\infty u \,d\CDF_{\men\women_i}} + \hat\thresh_\men\sum_{i=1}^\nfill \beta^*_{\men\women_i}
        \le 2\alpha_\men\hat\thresh_\men.
      \end{align*}
    \fi
      The last inequality follows from \eqref{eq:hat-thresh-m} and the constraint on the $\beta_{\men\women}^*$ from \eqref{eq:lp}. Since $2\alpha_\men\hat\thresh_\men$ is the social welfare under the current allocation (where the factor of $2$ comes from accounting for the utilities of both sides), we have the desired lower bound on welfare relative to $\OPT$.
      
    \item[Case 2: $\boldsymbol{\hat\thresh_\men\le \rho_{\men\women_\nfill}}$.]
    When $\hat\thresh_\men\le\rho_{\men\women_\nfill}$, it is possible that the first-best choice of $\glow_{\men\women_i}^*$ may not induce a feasible equilibrium outcome, since type $\women_\nfill$ might accept more than the platform intended in equilibrium, resulting in the combined constraint \eqref{eq:opt-combined-main} being violated for type $\men$. To resolve this, we modify the first-best optimal solution in one of two ways by considering two subcases: In the first subcase, we simply ignore type $\women_\nfill$ and match type $\men$ to types $\women_1,\women_2,\ldots,\women_{\nfill-1}$. In the second subcase, we only match type $\men$ to type $\women_\nfill$. At least one of these two subcases will obtain half the first-best optimal welfare.
    
      \begin{description}
        \item[Subcase 2(a): {$\boldsymbol{\sum_{i=1}^{\nfill-1} \rho_{\men\women_i}\cdot\beta_{\men\women_i}^* > \frac 12\OPT}$}.]
        In this subcase, we consider a modified version of the first-best optimal solution from before, where $\beta_{\men\women_\nfill}$ is set to $0$. We will show that the resulting equilibrium coincides with the first-best equilibrium for the submarket involving only type $\men$ and types $\women_{1},\ldots,\women_{\nfill-1}$.
        
        First, note that such a modification incurs at most a $\frac 12\OPT$ loss in first-best welfare by assumption. Next, recall that type $\men$'s threshold in equilibrium must satisfy the fixed point equation \eqref{eq:hat-thresh-m} for the $\glow_{\men\women_i}^*$ values for $i < \nfill$. It follows that the resulting $\thresh_\men$ is at most $\hat\thresh_\men$ (e.g., via the argument for \Cref{lemma:thresh-ub}).\footnote{Here, we clarify that $\hat\thresh_\men$ is as defined in \eqref{eq:hat-thresh-m} for the $\glow_{\men\women_i}^*$ values for the \emph{unmodified} first-best market.} Since $\hat\thresh_\men\le\rho_{\men\women_\nfill}\le\rho_{\men\women_i} = \hat\thresh_{\women_i}$ for all $i < \nfill$ by \Cref{lemma:first-best,lemma:thresh-star}, we must have $\max(\thresh_\men, \thresh_{\women_i}) = \hat\thresh_{\women_i}$ by \Cref{lemma:thresh-ub,,lemma:thresh-max}. 
          This shows that the resulting equilibrium is identical to the corresponding first-best solution. (In particular, it is feasible.) Hence the social welfare obtained is $\sum_{i=1}^{\nfill-1}\rho_{\men\women_i}\cdot\beta_{\men\women_i}^* > \frac 12\OPT$.
        \item[Subcase 2(b): $\boldsymbol{\rho_{\men\women_\nfill}\cdot\beta_{\men\women_\nfill}^*\ge \frac 12\OPT}$.]
        In this subcase, we consider the first-best solution with $\beta_{\men\women} = 0$ for all $\women\neq\women_\nfill$, and set $\beta_{\men\women} = \min(\alpha_\men, \alpha_{\women_\nfill})$. In this new ``single-edged'' market, where we only match types $\men$ and $\women_{\nfill}$, the first-best social welfare will be at least $\frac 12\OPT$ by assumption.
        In equilibrium, the thresholds must satisfy $\max(\thresh_\men, \thresh_{\women_\nfill}) = \rho_{\men\women_\nfill}$ by appropriate analogs of \Cref{lemma:thresh-star,lemma:thresh-ub,lemma:thresh-max}. Thus, like in Subcase 2(a), the resulting equilibrium is both feasible and identical to that of the corresponding first-best solution and its social welfare is at least $\frac 12\OPT$.\qedhere
      \end{description}
  \end{description}
\end{proof}

 \paragraph{Wrapping Up.}
 
 \Cref{theorem:approximation} now follows straightforwardly from our work up to this point. Since the total of the first-best welfares for the each of the ``star-shaped'' submarkets is at least half the platform's first-best welfare for the overall market and since \Cref{proposition:starshaped} shows that at least half of the first-best welfare for each submarket can be realized in stationary equilibrium, we obtain the desired $4$-approximation for \Cref{theorem:approximation}.

\section{Further Examples via Simulations}
\label{sec:simulation}

To provide further intuition and to give some concrete examples of the model, we present in this section the results of numerical simulations for a set of simple markets. We compare the outcome achieved by our approach to search design against the first-best outcome (defined in \Cref{sec:approximation}) as well as against the baseline outcome of random \vmedit{meeting}.\footnote{While we do not give an algorithm with formal guarantees to compute the outcome of the market under random  \vmedit{meeting}, the outcome can nonetheless be approximated via fixed point iteration. Specifically, we iteratively compute agents' thresholds and equilibrium populations, under the assumption that agents meet uniformly randomly, until a fixed point is reached. This iteration-based approximation converges for all markets considered in this section.} We present two sets of simulations: In the first set of simulations, we interpolate  between the two settings---horizontal and vertical---featured in \Cref{sec:examples}. In the second set of simulations, we highlight the role of search friction (parameterized by $\delta$) on the relative performances of random  \vmedit{meeting}  and designed search.

\subsection{Setup}
\label{subsec:setup}

For both sets of simulations, the market has four types on each side, i.e., $\Men = \{\men_1,\men_2,\men_3,\men_4\}$ and $\Women = \{\women_1, \women_2, \women_3, \women_4\}$. We assume that $(\alpha_{\men_1}, \alpha_{\men_2}, \alpha_{\men_3}, \alpha_{\men_4}) = (1, 4, 4, 1)$ and $(\alpha_{\women_1}, \alpha_{\women_2}, \alpha_{\women_3}, \alpha_{\women_4}) = (2, 2, 3, 2)$. Note that these arrival rates are such that there is slight asymmetry between the types on the two sides.\footnote{This asymmetry leads to more interesting outcomes than if the markets were, say, symmetric. If the markets were symmetric, for our choices of utility values, it would be optimal to fully match type $\men_i$ to type $\women_i$ for all $i$.}

For the first set of experiments, we consider interpolating linearly between a purely horizontal market and a purely vertical market. We assume that each distribution $\Dist_{\men_i\women_j}$ is normal $\mathcal{N}(\mu_{ij}, \sigma^2)$, with mean $\mu_{ij}$ determined linearly interpolating between two extreme horizontal and vertical markets. In the horizontal market, we assume that $\mu_{ij}^h = 0$ if $i\neq j$ and $\mu_{ii}^h = 8$ for all $i$. That is, agents only have positive expected utility from a match if they are of ``corresonding'' types. In the vertical market, we assume that $\mu_{ij}^v = (5 - i)(5 - j)$ for all $i, j$. That is, each type prefers others on the opposite side with smaller indices, with expected utilities being supermodular with respect to this ordering. To interpolate with weight $\weighting\in [0, 1]$, we set $\mu_{ij} = (1-\weighting)\mu_{ij}^h + \weighting\mu_{ij}^v$. Finally, we set $\sigma = 0.1$ and $\delta = 1$. The results of these simulations are plotted in \Cref{fig:experiment1}.

For the second set of experiments, we vary the search friction parameter $\delta$. We run this experiment using the market described above, with weighting $q = 0.5$. To illustrate the role that $\delta$ plays in the relative performance of the approaches, we vary it between $0.01$ and \vmedit{$10$}, evaluating the approaches for $\delta$ belonging to the set $\{0.01, 0.02, 0.05, 0.1, 0.2, 0.5, 1, 2, 5, 10\}$. The results of these simulations are plotted in \Cref{fig:experiment2}.


\begin{figure}
        \centering
        \sffamily
        \FIGURE{{\includegraphics[width = 3in]{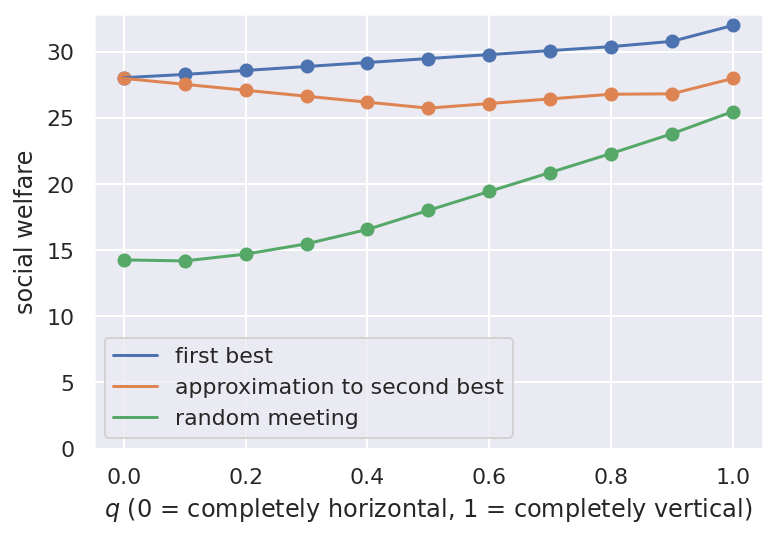}}{\includegraphics[width = 3in]{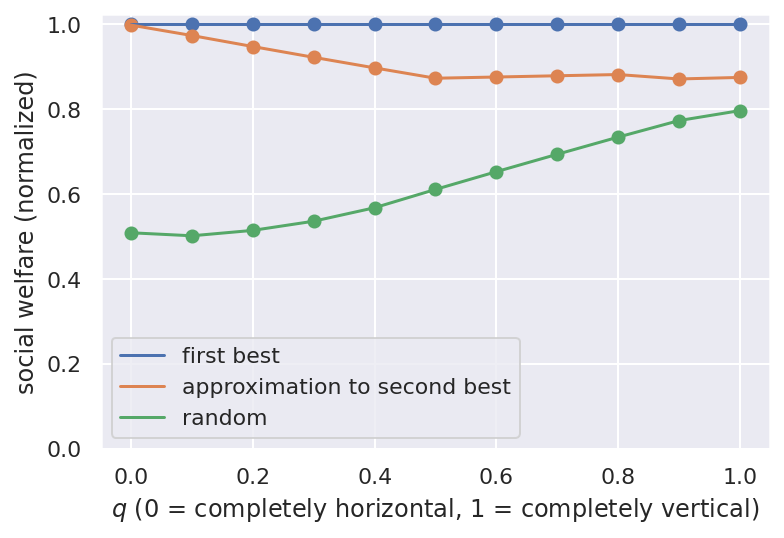}}}{On the left-hand side, we plot social welfare for our approximation algorithm and the two benchmarks of the first best and random meeting as $\weighting$ ranges from $0$ to $1$. On the right-hand side, we plot the same data, but normalize the welfare values so that the first-best social welfare is $1$ for each market.\label{fig:experiment1}}{}
\end{figure}

\subsection{Discussion}

For the first set of simulations (see \Cref{fig:experiment1}), we find that our approximation algorithm has quite good performance relative to the first best and is better than random  \vmedit{meeting}---sometimes significantly so---throughout. The performance gap between random  \vmedit{meeting} and our approximation to the second best is most apparent for the purely horizontal market, corroborating the intuition from \Cref{example:horizontal} that directed search is particularly important when agents' preferences have a strong horizontal component. As the market becomes more vertical, the gap between the performances of our approximation algorithm and random  \vmedit{meeting} narrows. One intuition for why this occurs is that in more vertical markets, while agents match more ``assortatively,'' they also tend to find more types acceptable and thus random  \vmedit{meeting}  can be more successful relative to the stark profile of the horizontal market. Nonetheless, even in such a setting, there is a clear value to directed search (e.g., via our algorithm).

For the second set of simulations (see \Cref{fig:experiment2}), we find that as search friction increases---and thus agents have fewer opportunities to search and match---directed search becomes increasingly necessary to achieve high welfare. Intuitively, such a trend holds because when agents are allowed more opportunities, agents can afford to perform search on their own. On the other hand, when agents only have few opportunities to match, directed search has a larger advantage because the platform can arrange meetings so that agents are able to quickly find desirable matches.

Finally, we note that for the examples considered, our approximation algorithm performs much better than the worst-case $4$-factor approximation proved in \Cref{theorem:approximation}. This is likely because the markets considered for the simulations are relatively simple, with only four types on each side.

\begin{figure}
        \centering
        \sffamily
        \FIGURE{{\includegraphics[width = 3in]{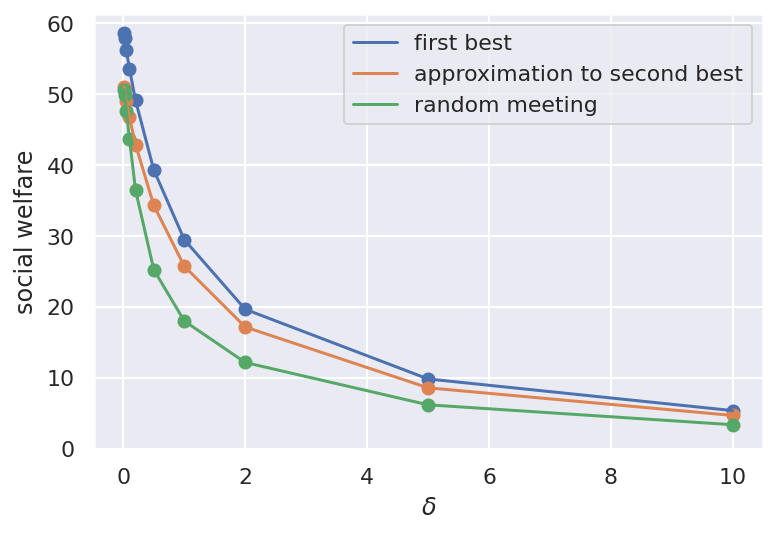}}{\includegraphics[width = 3in]{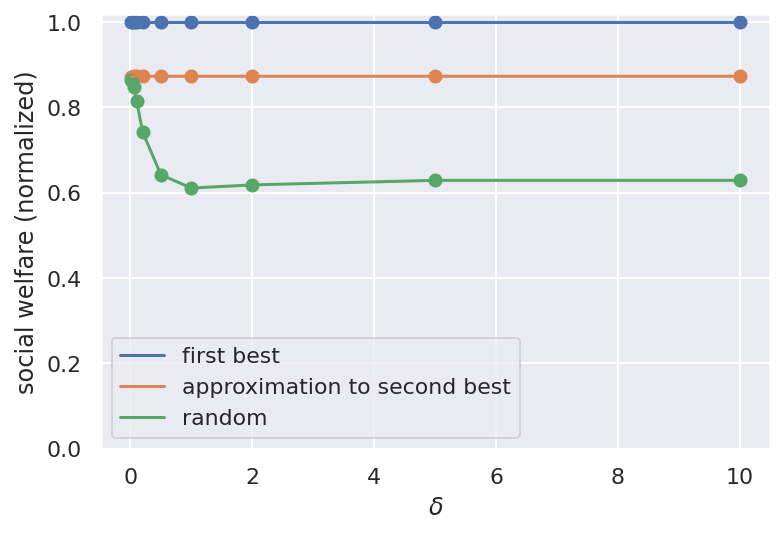}}}{On the left-hand side, we plot social welfare for our approximation algorithm and the two benchmarks of the first best and random meeting as $\delta$ ranges from $0.01$ to $10$. On the right-hand side, we plot the same data, but normalize the welfare values so that the first-best social welfare is $1$ for each market.\label{fig:experiment2}}{}
\end{figure}

\section{Conclusion}\label{sec:conclusion}

Similar to shopping platforms, matching platforms also rely on recommendation systems to facilitate search by offering personalized assortments. However, the two-sided and decentralized nature of these markets makes the design of their recommendation systems fundamentally different from those used for product recommendation. 
Congestion and misaligned incentives often necessitate making recommendations that are sub-optimal for certain agents but improve the overall social welfare. 
In this work, we take a first step toward understanding the intricacies of designing recommendation systems based on imperfect knowledge about preferences while taking agents' strategic behavior into consideration.
Somewhat surprisingly, we show that for general symmetric preferences, carefully designed assortments with very limited choice can achieve approximately optimal welfare.  

\vmedit{While our framework is general in terms of agent types and the structure of symmetric preferences, it abstracts away from some  considerations, which we discuss next. In our model, we assume that once the platform suggests a meeting between two agents, those agents meet, observe the utility of matching with each other, and make a decision to accept or reject the match. However, since meeting---which corresponds to observing the realized utility via a screening action---can be costly, one might consider a richer model (similar to that of \citet{KanoriaSaban21}) in which agents decide whether to (i) accept/reject the match without meeting, (ii) meet at a screening cost and then make an accept/reject decision. 
Studying search design under such a model for general preference structures is a valuable direction for future work. 
Incorporating screening cost would involve enriching the set of actions for each agent. This, in turn, would make encoding agents' strategic behavior in the design problem more complex.} 
\vmedit{Furthermore, the tractability of our framework as well as our design problem  crucially relies on having symmetric preferences and a stationary environment. As such, incorporating asymmetric match utilities or non-stationary behavior---such as history-dependent departure rates or strategies---would require new modeling/technical developments and is an interesting direction for future research. 
It is well-known that even for simple asymmetric preferences and under random meeting, multiple stationary equilibria may arise \citep{BurdettColes97}.  Thus, the platform design problem may go beyond designing meeting distributions and involve mechanisms for equilibrium selection.}

\vmedit{We conclude by noting that in this work, we focus on the design of matching markets with nontransferable utilities. Such models, also known as marriage models, are motivated by online dating platforms or labor markets with fixed wages.
They
can also be viewed as a special case of markets with transferable utilities in which the match utility is evenly split between the two parties \citep{smith2006marriage}. Studying a similar setting with transfers---in which the two agents' payoffs constitute the Nash bargaining solution---is also well-motivated in the context of online labor markets where wages are negotiable. In such a setting, the platform's first-best solution remains unchanged. As a future direction, one could explore the possibility of building on our hardness result and approximation scheme to establish analogous results for designing search in matching markets with transferable utilities.}



\setlength{\bibsep}{0.0pt}
\def\newblock{}
\OneAndAHalfSpacedXI
{
\bibliographystyle{plainnat}
\bibliography{bibliography.bib}}

\newpage
\renewcommand{\theHsection}{A\arabic{section}}
\begin{APPENDIX}{}


\section{Formal Definitions of MDP, Strategies, and Equilibrium}\label{sec:formalities}

This section develops the mathematical formalization for the agents' decision problem introduced in \Cref{sec:strategies}. We also prove \Cref{lemma:payoff} and \Cref{proposition:unique-main} using the formalization developed here.

\subsection{Solving for the Agent's Best Response}\label{sec:mdp}

Recall from \Cref{sec:strategies} that any fixed set of assortments defines a game for the agents, in which agents decide who to accept and reject in their meetings. Recall also that agents play symmetric time- and history-independent strategies. We thus write the strategy of type $\type$ agents as a function $\strategy_\type(\typeii, \utility)$ taking values in $[0, 1]$ which specifies the probability that a type $\type$ agent accepts when meeting a type $\typeii$ agent whom they value at utility $\utility$. Then $\strategy_\type(\typeii, \utility)\cdot\strategy_\typeii(\type, \utility)$ is the probability that two agents of types $\type$ and $\typeii$ who value each other at utility $\utility$ mutually accept. For measure-theoretic purposes, we restrict our attention to strategies $\strategy_\type$ that are measurable as a function in $\utility$ for all $\typeii$.

Rather than modeling a full-fledged dynamic game, we model each type $\type$ agent as facing a continuous-time MDP defined in terms of their (time-invariant) assortment $\rate_\type$ and opposing types' (time-invariant) strategy profiles $\strategy_\typeii$. We may do this because, as noted in \Cref{sec:strategies}, agents are only affected by the actions of other types in aggregate; thus any agent's own strategic decision will not affect the (aggregate) action of others.
In this MDP, the agent starts in a ``waiting'' state. From here, they either transition to an ``exited'' state when they leave unmatched due to a life event or to a ``meeting'' state when they meet another agent. At this ``meeting'' state, the agent makes a decision to either ``accept'' or ``reject''. Then, if they match, they transition to ``exited'' with a payoff; otherwise, they transition back to ``waiting.''

That life events and meetings occur in a memoryless manner actually means we can rid ourselves of the continuous-time aspect of this MDP and get an equivalent discrete-time MDP.
Formally, we set up this discrete-time MDP as follows: The agent's initial state is $\Waiting$. State $\Waiting$ transitions to $\Exited$ with probability $\delta / (\delta + \sum_\typeii \rate_\type(\typeii))$ and to $\PreMeeting_\typeii$ with probability $\rate_\type(\typeii) / (\delta + \sum_\typeii \rate_\type(\typeii))$. State $\PreMeeting_\typeii$ transitions to $\Meeting_{\typeii,\utility}$ with $\utility$ drawn from $\Dist_{\type\typeii}$. State $\Meeting_{\typeii,\utility}$ is a decision point where the agent can either $\Accept$ or $\Reject$. If $\Accept$ is chosen, with probability $\sigma_\typeii(\type, \utility)$ they transition to $\Exited$ and receive payoff $\utility$. In all other cases, they return to state $\Waiting$. Finally, $\Exited$ is a terminal state.

The optimal policy for this MDP admits a simple informal analysis: If $\contutility_\type$ is the expected payoff of an agent in state $\Waiting$, then $\contutility_\type$ is also the expected continuation payoff if an agent chooses $\Reject$ in state $\Meeting_{\typeii,\utility}$. Hence the agent should choose $\Accept$ if $\utility > \contutility_\type$ and $\Reject$ if $\utility < \contutility_\type$. If $\utility = \contutility_\type$, the agent is indifferent between the two options; however, since the distributions $\Dist_{\men\women}$ are continuous, this occurs with probability $0$. We now state this observation more formally:

\begin{lemma}\label{lemma:weakthreshold}
For any best response $\strategy_\type$, let $\contutility_\type$ be its expected payoff. Then
\[ \strategy_\type(\typeii, \utility) = \begin{cases} 1 & \text{if $\utility > \contutility_\type$} \\ 0 & \text{if $\utility < \contutility_\type$}\end{cases} \]
for all $\typeii$ such that $\rate_\type(\typeii) > 0$ and almost all $\utility$ in the support of $\strategy_\typeii(\type, \utility)\,d\CDF_{\type\typeii}$.
\end{lemma}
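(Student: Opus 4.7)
The plan is a standard Bellman-equation argument for the discrete-time MDP described above. First, I would note that since $\delta > 0$ and each distribution $\Dist_{\type\typeii}$ has a finite first moment (implicit in the paper's setup via oracle access to tail expectations), the MDP is effectively discounted and admits a well-defined, finite optimal value function $V$. Any best response $\strategy_\type$ with expected payoff $\contutility_\type$ must satisfy $\contutility_\type = V(\Waiting)$, and $V$ satisfies the usual Bellman equations.

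Next I would apply the Bellman equation at state $\Meeting_{\typeii,\utility}$:
\[
V(\Meeting_{\typeii,\utility}) = \max\bigl(\strategy_\typeii(\type,\utility)\cdot\utility + (1-\strategy_\typeii(\type,\utility))\cdot V(\Waiting),\; V(\Waiting)\bigr).
\]
This identity says that $\Accept$ is (weakly) preferred to $\Reject$ if and only if $\strategy_\typeii(\type,\utility)\cdot(\utility - \contutility_\type)\ge 0$. Consequently any optimal policy chooses $\strategy_\type(\typeii,\utility) = 1$ whenever $\strategy_\typeii(\type,\utility)\cdot(\utility - \contutility_\type) > 0$ and $\strategy_\type(\typeii,\utility) = 0$ whenever this quantity is strictly negative; tie-breaking on the equality set is unconstrained.

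I would then restrict to $\typeii$ with $\rate_\type(\typeii) > 0$ and $\utility$ in the support of the measure $\strategy_\typeii(\type,\utility)\,d\CDF_{\type\typeii}$. On this support, $\strategy_\typeii(\type,\utility) > 0$ almost everywhere by definition, so the sign of $\strategy_\typeii(\type,\utility)(\utility - \contutility_\type)$ is determined by the sign of $\utility - \contutility_\type$, yielding the claimed dichotomy. The indifference singleton $\{\utility = \contutility_\type\}$ has measure zero because $\CDF_{\type\typeii}$ is absolutely continuous.

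The main subtlety is the measure-theoretic bookkeeping: one must verify that the indifference set is negligible (immediate from continuity of $\Dist_{\type\typeii}$) and that restricting to $\typeii$ with $\rate_\type(\typeii) > 0$ is harmless (since otherwise the agent never reaches the corresponding $\Meeting_{\typeii,\utility}$ states and so their policy there is unconstrained by best response). Neither step is technically deep, so I expect the full proof to be short once the MDP value function is invoked.
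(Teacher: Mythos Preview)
Your proposal is correct and is precisely the standard MDP/Bellman argument that the paper alludes to; in fact the paper omits the proof entirely, stating only that ``proving such a claim is standard in the analysis of MDPs,'' so your write-up supplies exactly the omitted reasoning.
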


Proving such a claim is standard in the analysis of MDPs, so we omit a detailed formal proof.

\Cref{lemma:weakthreshold} delineates a collection of dominated strategies, namely those that reject potential matches that provide greater utility than the continuation utility (i.e., strategies $\strategy_\type$ such that $\strategy_\type(\typeii, \utility) < 1$ for some $\typeii$ and $\utility > \contutility_\type$). In the subsequent analysis, we rule out such dominated strategies to eliminate ``bad'' equilibria, e.g., when all agents reject all potential matches (knowing that no one else ever accepts).

Ruling out dominated strategies as above in fact lets us focus on equilibria in \emph{threshold strategies}, where each type has a threshold $\thresh_\type$ and accepts if and only if $\utility\ge\thresh_\type$. Our next lemma shows that any equilibrium in non-dominated strategies is equivalent, in a certain sense, to an equilibrium in threshold strategies, where each type's threshold is also their expected utility:

\begin{lemma}\label{lemma:threshold}
Suppose all agents are playing best responses where they accept all potential matches worth at least their expected utility. If type $\type$ agents all switch to playing the threshold strategy with threshold $\thresh_\type = \contutility_\type$, where $\contutility_\type$ is the expected utility of type $\type$ agents, then each type's strategy is still a best response. Moreover, the distribution of matches and the utilities at which they are realized remains the same (up to measure $0$).
\end{lemma}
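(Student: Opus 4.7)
The plan is to exploit the fact that, by Lemma~\ref{lemma:weakthreshold}, any non-dominated best response of type $\type$ already \emph{almost} coincides with the threshold strategy at $\thresh_\type=\contutility_\type$: the two functions can disagree only at $\utility=\contutility_\type$ (plus possibly on types $\typeii$ met with zero rate, which are irrelevant). Since every distribution $\Dist_{\type\typeii}$ is assumed to be continuous, the singleton $\{\utility=\contutility_\type\}$ has $\Dist_{\type\typeii}$-measure zero, so every integral that shows up in the analysis is unaffected by the switch.

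First, I would make this precise as an \emph{invariance} statement: if we replace the original best response $\strategy_\type$ by the threshold strategy $\strategy_\type^{\mathrm{thr}}$ with threshold $\contutility_\type$, then for every opposite-side type $\typeii$ and every Borel set $B\subseteq\R$,
\[
\int_{B} \strategy_\type(\typeii,\utility)\,\strategy_\typeii(\type,\utility)\,d\CDF_{\type\typeii}(\utility)
\;=\;
\int_{B} \strategy_\type^{\mathrm{thr}}(\typeii,\utility)\,\strategy_\typeii(\type,\utility)\,d\CDF_{\type\typeii}(\utility),
\]
and similarly with an extra factor of $\utility$ in the integrand. This follows because the two integrands differ only on $\{\contutility_\type\}$, which is a $\CDF_{\type\typeii}$-null set. (One does this simultaneously for every type that switches; whether it is a single type or all of them at once does not matter.)

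Second, I would use this invariance to show that all of the quantities appearing in Lemmas~\ref{lemma:rate} and \ref{lemma:payoff} are preserved by the switch. Concretely: the matching rate $\xi_\type$ of the switching type is unchanged, its expected utility computed from \eqref{eq:payoff} is still $\contutility_\type$, and for any opposite-side type $\typeii$, both $\xi_\typeii$ and $\contutility_\typeii$ are unchanged because they depend on $\strategy_\type$ only through integrals of the above form. Hence the expected-utility labels driving the characterization in Lemma~\ref{lemma:weakthreshold} stay the same for \emph{every} type. Applying Lemma~\ref{lemma:weakthreshold} in reverse (i.e., any strategy obeying its pointwise conditions is a best response, since the MDP value function is unaffected by redefining the policy on a null set of meeting utilities), the new threshold strategy $\strategy_\type^{\mathrm{thr}}$ is again a best response, and the profiles of the non-switching types remain best responses too.

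Finally, the claim about the distribution of realized matches follows from the same invariance: the joint rate at which matches of type pair $(\type,\typeii)$ and utility in a Borel set $B$ are realized equals $\glow_{\type\typeii}\int_B \strategy_\type(\typeii,\utility)\,\strategy_\typeii(\type,\utility)\,d\CDF_{\type\typeii}$, and switching $\strategy_\type$ to $\strategy_\type^{\mathrm{thr}}$ changes the integrand on a null set only. I expect essentially no serious obstacle here; the one subtlety to be careful about is the potential circularity when multiple types are switched at once (the threshold one switches to, $\contutility_\type$, is defined in terms of the pre-switch profile), but the null-set argument resolves this cleanly because $\contutility_\type$ itself is invariant under the switch, so the new threshold strategy is consistent with the post-switch equilibrium play.
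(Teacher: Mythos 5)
Your proof is essentially correct and takes the same route as the paper: exploit \Cref{lemma:weakthreshold} to see that the switch changes $\strategy_\type$ only on a null set of the measure that actually drives realized matches, and conclude that all relevant integrals, hence the MDPs faced by each type and the match distribution, are unchanged. One small imprecision worth fixing: it is not quite right that a non-dominated best response of type $\type$ ``can disagree only at $\utility=\contutility_\type$ (plus types met with zero rate)'' with the threshold strategy. Under the lemma's hypothesis they agree for $\utility\ge\contutility_\type$, and for $\utility<\contutility_\type$ \Cref{lemma:weakthreshold} only forces $\strategy_\type(\typeii,\utility)=0$ for $\strategy_\typeii(\type,\utility)\,d\CDF_{\type\typeii}$-almost every $\utility$, so they can additionally differ on a null set of that measure and wherever $\strategy_\typeii(\type,\cdot)$ vanishes. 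Your displayed invariance identity is the correct statement and does not rely on the ``only at the singleton'' claim, so the rest of the argument is unaffected; just justify it directly from the null-set conclusion of \Cref{lemma:weakthreshold} rather than from the singleton characterization. The paper's proof is terser---it notes the expected utility of $\type$ is unchanged and that rejecting more matches only restricts the opposite side's choice sets---but the content is the same null-set argument you spell out.
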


\begin{proof}
Since type $\type$ agents already accept all potential matches worth at least $\contutility_\type$, by switching to the threshold strategy with threshold $\contutility_\type$, the only change is that they now reject all potential matches worth less than $\contutility_\type$. By \Cref{lemma:weakthreshold}, accepting such potential matches was already a probability $0$ event to begin with. Hence the distribution of realized matches remains the same.

To see that all strategies are still best responses, notice that the expected utility of type $\type$ agents does not change. Furthermore, type $\type$ agents rejecting more matches only restricts the choice sets of agents on the opposite side. Hence their strategies $\strategy_\typeii$ remain best responses also.
\end{proof}

Applying \Cref{lemma:threshold} in succession to all types $\type\in\Types$ allows us to convert any equilibrium to an equivalent equilibrium in threshold strategies, such that each agent is thresholding at their expected utility. Thus, without loss of generality, we may restrict our attention to such equilibria.

We can also show a converse of sorts to \Cref{lemma:threshold}, in which we characterize the optimal threshold for type $\type$ as a solution to a fixed point equation:

\begin{lemma}\label{lemma:fixedpoint}
There exists a unique fixed point satisfying $\thresh_\type = \contutility_\type$ (viewing $\contutility_\type$ as a function of $\thresh_\type$). In particular, if $\thresh_\type$ satisfies the fixed point equation, then $\thresh_\type$ is the unique threshold best response for type $\type$ agents such that $\thresh_\type = \contutility_\type$.
\end{lemma}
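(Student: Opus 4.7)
My plan is to reformulate the fixed-point condition as one where monotonicity becomes transparent. Viewing $\contutility_\type$ as a function of $\thresh_\type$ (with opposite-side thresholds $\thresh_\typeii$ held fixed), Lemma~\ref{lemma:payoff} gives $\contutility_\type = N(\thresh_\type)/D(\thresh_\type)$, where
\[
N(\thresh) \coloneqq \sum_\typeii \rate_\type(\typeii) \int_{\max(\thresh,\thresh_\typeii)}^\infty \utility\, d\CDF_{\type\typeii}, \qquad
D(\thresh) \coloneqq \delta + \sum_\typeii \rate_\type(\typeii) \int_{\max(\thresh,\thresh_\typeii)}^\infty d\CDF_{\type\typeii}.
\]
Since $D(\thresh) \geq \delta > 0$, the fixed points of $g \coloneqq N/D$ are precisely the zeros of $\psi(\thresh) \coloneqq N(\thresh) - \thresh\, D(\thresh)$, and I will show $\psi$ is strictly decreasing on $[0,\infty)$.

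The core observation is a one-line derivative cancellation. Since each $\CDF_{\type\typeii}$ is absolutely continuous, both $N$ and $D$ are absolutely continuous in $\thresh$, and by Leibniz's rule one has $N'(\thresh) = -\thresh\, S(\thresh)$ and $D'(\thresh) = -S(\thresh)$ almost everywhere, where
\[
S(\thresh) \coloneqq \sum_\typeii \rate_\type(\typeii)\, \CDF'_{\type\typeii}(\thresh)\, \mathbf{1}[\thresh \geq \thresh_\typeii] \;\geq\; 0.
\]
Substituting,
\[
\psi'(\thresh) \;=\; N'(\thresh) - D(\thresh) - \thresh\, D'(\thresh) \;=\; -\thresh\, S(\thresh) - D(\thresh) + \thresh\, S(\thresh) \;=\; -D(\thresh) \;\leq\; -\delta
\]
almost everywhere. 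Absolute continuity of $\psi$ combined with this uniform negative bound on $\psi'$ then forces $\psi$ to be strictly decreasing.

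With strict monotonicity in hand, existence and uniqueness follow quickly. At $\thresh = 0$, $\psi(0) = N(0) \geq 0$; as $\thresh \to \infty$, $N(\thresh) \to 0$ while $\thresh\, D(\thresh) \geq \delta\, \thresh \to \infty$, giving $\psi(\thresh) \to -\infty$. The intermediate value theorem yields a zero, and strict decrease rules out a second one. The ``in particular'' clause then follows from Lemmas~\ref{lemma:weakthreshold} and~\ref{lemma:threshold}: any non-dominated threshold best response with threshold equal to expected utility must satisfy the fixed-point equation $\thresh_\type = \contutility_\type(\thresh_\type)$, and conversely the threshold strategy at the unique fixed point is such a best response. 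The main subtlety I expect is handling the kinks at $\thresh = \thresh_\typeii$, where the indicator $\mathbf{1}[\thresh \geq \thresh_\typeii]$ jumps; however, absolute continuity of each $\CDF_{\type\typeii}$ keeps $N$ and $D$ absolutely continuous through these points, so the a.e.\ bound $\psi' \leq -\delta$ lifts cleanly to strict monotonicity of $\psi$ itself.
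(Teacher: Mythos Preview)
Your proof is correct and takes a genuinely different route from the paper's. The paper argues existence indirectly, by invoking Lemma~\ref{lemma:threshold} to produce a best-response threshold that is automatically a fixed point; it then proves uniqueness via a probabilistic coupling: if $\threshii_\type < \thresh_\type$ were another fixed point, an agent thresholding at $\threshii_\type$ would leave no later than one thresholding at $\thresh_\type$, and with strictly higher expected payoff, contradicting $\threshii_\type = \contutility_\type$. Your argument is instead purely analytic: you rewrite the fixed-point condition as $\psi(\thresh) = N(\thresh) - \thresh D(\thresh) = 0$ and observe the clean cancellation $\psi' = -D \le -\delta$, from which both existence (via the intermediate value theorem) and uniqueness follow at once. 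This is essentially the same differentiation trick the paper deploys later in Lemma~\ref{lemma:rho} (and its generalization Lemma~\ref{lemma:rho-2}), so your approach is very much in the spirit of the paper's toolbox, just brought forward. What you gain is self-containment and a quantitative Lipschitz-type bound; what the paper's coupling buys is a more conceptual argument that does not rely on differentiating through the payoff formula and would extend more readily to settings where the formula is less explicit. One small point worth making explicit in your write-up: $\psi(0) = N(0) \ge 0$ uses that the opposite-side thresholds satisfy $\thresh_\typeii \ge 0$, so that the integrals $\int_{\thresh_\typeii}^\infty \utility\, d\CDF_{\type\typeii}$ are over a region where $\utility \ge 0$.
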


\begin{proof}
Let $\thresh_\type$ be a best response threshold satisfying the fixed point equation. (By \Cref{lemma:threshold}, such a $\thresh_\type$ exists.) Note that no threshold $\threshii_\type > \thresh_\type$ can satisfy the fixed point equation because $\thresh_\type$ is a best response. So suppose $\threshii_\type < \thresh_\type$. Then, an agent thresholding at $\threshii_\type$ would leave the market no later than an agent thresholding at $\thresh_\type$. If they leave strictly earlier, then they must have left with payoff at least $\threshii_\type$. And conditioned on leaving at the same time, their expected payoff is $\thresh_\type$. The latter occurs with positive probability, so their expected payoff when thresholding at $\threshii_\type$ exceeds $\threshii_\type$. It follows that $\threshii_\type$ is not a fixed point, making $\thresh_\type$ the unique fixed point.
\end{proof}




\subsection{Proof of \Cref{lemma:payoff}}

\begin{proof}[Proof of \Cref{lemma:payoff}]
To compute the agent's expected payoff at state $\Waiting$, we can condition on the event that the agent does not return to $\Waiting$. Then, the conditional probability of matching with an agent of type $\typeii$ is 
\[ \frac{\lambda_\type(\typeii)\int_{\max(\thresh_\type, \thresh_\typeii)}^\infty \,d\CDF_{\type\typeii}}{\delta + \sum_\typeii \paren*{\rate_\type(\typeii) \int_{\max(\thresh_\type, \thresh_\typeii)}^\infty \,d\CDF_{\type\typeii}}} \]
and the expected payoff conditioned on matching with an agent of type $\typeii$ is
\[ \frac{\int_{\max(\thresh_\type, \thresh_\typeii)}^\infty \utility\,d\CDF_{\type\typeii}}{\int_{\max(\thresh_\type, \thresh_\typeii)}^\infty \,d\CDF_{\type\typeii}}. \]
Summing over all $\typeii$ the product of the above two expression gives us the first equality. An application of \Cref{lemma:rate} along with the definition $\zeta_\type = \xi_\type + \delta$ gives us the second \vmedit{statement}.
\end{proof}

\subsection{Proof of \Cref{proposition:unique-main}: Equilibrium Play for Fixed Assortments}\label{sec:formalities-unique}

In this section, we complete the proof \Cref{proposition:unique-main}, which states that our model makes a unique prediction of equilibrium play given any fixed set of assortments. We show that there exists a strategy profile where each strategy is a best response and that this strategy profile is unique. Uniqueness implies that stationary equilibria are robustly self-sustaining (e.g., a market in stationary equilibrium cannot be disrupted by agents switching to another strategy profile of best responses) and that the social welfare of an assortment is well-defined (see \Cref{sec:optimaldirectedsearch}). Our proof of the existence of a profile of best responses is constructive and can be implemented as an algorithm to compute equilibrium play. Consequently, our model's unique prediction of equilibrium play can also be efficiently computed by the platform.

To prove that a profile of best responses exists, we show that iterating the best response map converges after finitely many iterations. (In a sense, this iteration behaves like a Gale-Shapley operator.) Our uniqueness result derives from the symmetric valuations assumption that we make (see \Cref{sec:utilities}): Symmetric valuations induce a linear ordering over all possible matches by their valuations. This rules out ``cycles'' in the preferences and thus the possibility of multiple equilibria.

More formally, let the best response map take as input a strategy profile $\{\strategy_\type\}_{\type\in\Types}$ and output a new strategy profile $\{\strategyii_\type\}_{\type\in\Types}$ such that each $\strategyii_\type$ is the threshold strategy that thresholds at the expected utility of the agent under the input strategy profile. (This map is a best response by \Cref{lemma:fixedpoint}.) To show existence of equilibrium, we will show that iterating this map on the strategy profile converges.


\begin{proposition}\label{proposition:unique}
Given assortments $\{\rate_\type\}_{\type\in\Types}$, there is a unique profile $\{\strategy_\type\}_{\type\in\Types}$ in non-dominated strategies such that each $\strategy_\type$ is a best response (up to the equivalence given in \Cref{lemma:threshold}). This strategy profile can be found by iterating the best-response map $O(|\Types|)$ times.
\end{proposition}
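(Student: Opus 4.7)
My plan is to reduce the proposition to finding the unique fixed point of a best-response operator $\Phi$ on threshold vectors, then construct that fixed point by a peeling procedure that leverages the symmetric-preferences structure to finish in $O(|\Theta|)$ rounds of best-response updates.

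\textbf{Reduction and monotonicity.} By \Cref{lemma:weakthreshold,lemma:threshold,lemma:fixedpoint}, every non-dominated best response is payoff-equivalent to a threshold strategy with threshold equal to the type's own expected utility, so it suffices to exhibit a unique vector $(\thresh_\type)_{\type\in\Theta}$ at which every coordinate solves the fixed-point equation of \Cref{lemma:payoff}. Denote the corresponding coordinate-wise best-response map by $\Phi$. A signed-derivative calculation on the ratio in \eqref{eq:payoff}, using that the integration bound is $\max(\thresh_\type,\thresh_\typeii)$, shows that $\Phi$ is anti-monotone: raising any opponent's threshold weakly decreases every coordinate of $\Phi$.

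\textbf{Peeling algorithm.} The key observation is that if $\type^*$ has the highest threshold in some equilibrium, then $\max(\thresh_{\type^*},\thresh_\typeii)=\thresh_{\type^*}$ for every $\typeii$, so $\thresh_{\type^*}$ is the unique fixed point (by \Cref{lemma:rho}) of a single-variable equation that depends only on $\type^*$'s own assortment and distributions. Iterating, once the $k{-}1$ highest thresholds are locked, the $k$-th is again a univariate fixed point with those values substituted in. This motivates the algorithm: maintain a set $S$ of locked types; in each round compute, for each $\type \notin S$, the univariate fixed point $\tilde\thresh_\type$ obtained by holding opponents in $S$ at their locked values and assuming $\type$'s own threshold dominates those of the remaining unlocked opponents; then lock the argmax type $\type'$ at value $\tilde\thresh_{\type'}$. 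After $|\Theta|$ rounds every type is locked.

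\textbf{Correctness and uniqueness.} I establish by induction on the round number that the locked value of each type equals its equilibrium threshold. Anti-monotonicity of $\Phi$ yields $\thresh_\typeii^{\mathrm{eq}} \le \tilde\thresh_\typeii$ for every $\typeii \notin S$, so by the argmax property $\thresh_\typeii^{\mathrm{eq}} \le \tilde\thresh_{\type'}$ for every unlocked $\typeii$. Consequently, when $\type'$ plays $\tilde\thresh_{\type'}$ every $\max(\tilde\thresh_{\type'},\thresh_\typeii^{\mathrm{eq}})$ with $\typeii \notin S$ collapses to $\tilde\thresh_{\type'}$, so $\tilde\thresh_{\type'}$ also solves $\type'$'s equilibrium fixed-point equation against the true equilibrium opponents; uniqueness of that equation's solution (\Cref{lemma:fixedpoint}) then forces $\tilde\thresh_{\type'}=\thresh_{\type'}^{\mathrm{eq}}$. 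The same reasoning shows $\thresh_{\type'}^{\mathrm{eq}}$ is the maximum equilibrium threshold in $\Theta \setminus S$, which maintains the inductive invariant. Uniqueness of the equilibrium follows because every locked value is the unique solution of a univariate equation (\Cref{lemma:rho}) and the argmax rule is deterministic. Existence follows because the locked values form a fixed point of $\Phi$ by construction: all subsequently locked values are at most the current round's $\tilde\thresh_{\type'}$, so the $\max$'s in the full fixed-point equation for $\type'$ collapse to exactly the values used when $\type'$ was locked.

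The chief subtlety lies in the simultaneous bootstrapping of the induction hypothesis with the argmax selection, which is precisely where symmetric preferences are essential: because all cross-type dependencies enter through $\max$ operators, once the unlocked equilibrium thresholds are all dominated by $\tilde\thresh_{\type'}$ the $\max$'s collapse uniformly and $\tilde\thresh_{\type'}$'s univariate equation coincides with $\type'$'s full equilibrium equation.
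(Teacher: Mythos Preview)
Your proof is correct but takes a genuinely different route from the paper's. The paper establishes existence by iterating the full best-response map $\Phi$ from the all-zeros vector, showing that odd iterates are monotone decreasing, even iterates are monotone increasing, odd dominates even throughout, and after each odd iteration the currently-highest unfrozen threshold freezes (since no lower unfrozen threshold can influence it through the $\max$); uniqueness is then argued separately by contradiction, focusing on the type with the highest threshold at which two putative equilibria differ. Your peeling construction instead locks types greedily in decreasing order of tentative threshold and extracts both existence and uniqueness from a single induction---this is more direct and bypasses the odd/even monotonicity bookkeeping. One caveat: your algorithm is not literally ``iterating the best-response map,'' so you have not established the proposition's second sentence as written; you show the equilibrium can be computed in $O(|\Theta|)$ rounds of a \emph{different} procedure, whereas the paper's version carries the independent payoff that naive best-response dynamics themselves converge. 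Both approaches rest on the same structural fact: symmetric preferences funnel all cross-type dependence through $\max(\thresh_\type,\thresh_\typeii)$, so once the highest-threshold type is identified its fixed-point equation decouples from the rest.
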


Before we proceed with the proof of \Cref{proposition:unique}, we note that the claims of \Cref{proposition:unique-main} follow from combining \Cref{lemma:threshold,lemma:fixedpoint} with \Cref{proposition:unique}.

\begin{proof}
By applying \Cref{lemma:threshold}, we may restrict our attention to agents playing threshold strategies where each type thresholds at their expected utility $\contutility_\type$.

To show existence, we iterate the best response map starting from the strategy profile where each type's threshold is $0$. We show that this iteration converges after $O(\abs{\Types})$ steps. In the sequel, when we refer to the thresholds of the $i$-th iteration, we mean the thresholds after applying the best response map $i$ times.

Our first step is to show that each type's threshold has the following pattern: The sequence of thresholds after an odd number of iterations is (weakly) monotonically decreasing and that the sequence of thresholds after an even number of iterations is (weakly) monotonically increasing. We prove this claim by induction. For the base case, note that the claim is trivially true when comparing the zeroth and second iterations, since all thresholds are initialized to $0$. Next, suppose the claim is true for iterations $i - 3$ and $i - 1$. Then, note that the thresholds after iterations $i - 2$ and $i$ are each given by the expected utilities of agents. Now, if $i$ is odd, then we know that thresholds after iteration $i - 1$ are higher than those after iteration $i - 3$; hence, the expected utilities, and therefore the new thresholds, are uniformly lower. The analogous argument works for even $i$.

A similar argument shows that for each type, the infimum of that type's thresholds after odd-numbered iterations is at least the supremum of that type's thresholds after even-numbered iterations: The claim clearly holds for the initialization of all thresholds at $0$. Next, suppose the claim holds for iteration $i - 1$. If $i$ is odd, then let $j > i$ be any even number. (Note that the claim follows from the monotonicity property above for $j < i$.) The thresholds after the $j$-th iteration are best responses to thresholds after the $(j-1)$-th iteration. By the inductive hypothesis, all the thresholds after the $(j-1)$-th iteration are at least those of the $(i-1)$-th iteration. Hence the expected utility of each type in the $(j-1)$-th iteration must be at most that of each type in the $(i-1)$-th iteration. Hence the thresholds after the $j$-th iteration are uniformly \vmedit{smaller} than the thresholds after the $i$-th iteration. The analogous argument works for even $i$.

The final step for showing the convergence of this best response iteration is to show that after each odd-numbered iteration, at least one additional type will have their threshold ``frozen,'' meaning that it will not change in any future iterations. Indeed, after any odd-numbered iteration, consider the agent type $\type$ with the highest threshold $\thresh_\type$ who has not yet been shown to be frozen. Then, observe that in all future iterations, no unfrozen agent will have a higher threshold by the two claims above---the first claim handles odd-numbered iterations and the second claim handles even-numbered iterations. It follows that type $\type$ would not want to alter its best response in all following iterations: It will not be affected by the strategy of any unfrozen type, because their thresholds will be at most $\thresh_\type$; the strategies of all frozen types will remain the same. Thus, type $\type$ will be frozen after this iteration. Since at least one additional type has their threshold frozen after each odd-numbered iteration, iterating the best response map will converge after $O(\abs{\Types})$ iterations.


Next, we show the uniqueness of this equilibrium. Suppose for the sake of contradiction that there are two distinct strategy profiles $\{\strategy_{\type}\}_{\type\in\Types}$ and $\{\strategyii_{\type}\}_{\type\in\Types}$ such that each strategy is both a best response and a threshold strategy. Let the two sets of thresholds for these two strategy profiles be $\{\thresh_\type\}_{\type\in\Types}$ and $\{\threshii_\type\}_{\type\in\Types}$. Furthermore, suppose each threshold is its corresponding type's expected utility. Since the two strategy profiles are distinct, there exists a type $\type$ such that $\thresh_\type\neq\threshii_\type$ and $\max(\thresh_\type, \threshii_\type)$ is maximal.
Without loss of generality, we may assume that $\thresh_\type > \threshii_\type$. It follows that if $\thresh_{\typeii} > \thresh_\type$, then $\thresh_{\typeii} = \threshii_{\typeii}$, and if $\thresh_\type\ge\thresh_\typeii$, then $\thresh_\type\ge\threshii_\typeii$. Consequently, $\max(\thresh_\type, \thresh_\typeii) = \max(\thresh_\type, \threshii_\typeii)$ for all $\typeii$. But this gives us the following contradiction: Threshold $\thresh_\type$ is type $\type$'s expected utility, so \Cref{lemma:payoff} tells us
\[ \threshii_\type < \thresh_\type = \frac{\sum_{\typeii}\paren[\Big]{\rate_{\type}(\typeii)\int_{\max(\thresh_\type, \thresh_\typeii)}^\infty\utility \,d\CDF_{\type\typeii}}}{\delta + \sum_{\typeii}\paren[\Big]{\rate_{\type}(\typeii)\int_{\max(\thresh_\type, \thresh_\typeii)}^\infty \,d\CDF_{\type\typeii}}} = \frac{\sum_{\typeii}\paren[\Big]{\rate_{\type}(\typeii)\int_{\max(\thresh_\type, \threshii_\typeii)}^\infty\utility \,d\CDF_{\type\typeii}}}{\delta + \sum_{\typeii}\paren[\Big]{\rate_{\type}(\typeii)\int_{\max(\thresh_\type, \threshii_\typeii)}^\infty \,d\CDF_{\type\typeii}}}. \]
That the right-hand side exceeds the left-hand side means thresholding at $\threshii_\type$ is not actually a best response, since we could do better by choosing our threshold to be $\thresh_\type$. Therefore, two such distinct strategy profiles cannot exist, proving the proposition.
\end{proof}

%

\section{Omitted Proofs from Section \ref{sec:approximation}}\label{sec:approximation-proofs}

\subsection{Proof of \Cref{proposition:first-second-gap}}
\label{app:proof:first-second-gap}

\begin{proof}[Proof of \Cref{proposition:first-second-gap}]
We prove this proposition via example. Let $\exeps' = \exeps / 3$. Consider a market where $\Men = \{\men\}$ and $\Women = \{\women_H, \women_L\}$. That is, there is a single type on side $\Men$, while side $\Women$ is split into agents of high ($H$) and low ($L$) types.  We assume that the utility distributions $\Dist_{\men\women_H}$ and $\Dist_{\men\women_L}$ are point masses at $\frac 1{\exeps'}$ and $\frac 12$, respectively.\footnote{Technically, these distributions do not satisfy our continuity assumption. We present our example this way for simplicity's sake---the same example can be made to work with continuous distributions by adding a small amount of Gaussian noise to each utility value.}
Suppose further that the arrival rates are such that $\alpha_{\men} = 2$, $\alpha_{\women_H} = \exeps'$, and $\alpha_{\women_L} = 2 - \exeps'$. Finally, we consider this example with $\delta = \exeps'$ (i.e., the market becomes nearly frictionless as $\exeps'\to 0$).

It is not hard to see that the first-best payoff involves matching the entire supply of type $\men$ to the entire supplies of types $\women_H$ and $\women_L$, with agents accepting all matches. The first-best social welfare is thus
\[ \frac{2}{1 + \delta}\left(\alpha_{\women_H}\cdot \frac 1{\exeps'} + \alpha_{\women_L}\cdot \frac 12\right) = \frac{4 - \exeps'}{1 + \exeps'}. \]
Notice that $\frac{4-\exeps'}{1+\exeps'}\ge 4 - 5\exeps'$.

On the other hand, consider any stationary equilibrium of this market. We consider two cases: whether or not type $\men$ agents match with type $\women_L$ agents. If type $\men$ agents match with type $\women_L$ agents, then the expected utility of type $\men$ agents must be at most $\frac 12$, in which case the total welfare of the market is at most $2\cdot \alpha_\men\cdot\frac 12 = 2$. On the other hand, suppose type $\men$ does not match with type $\women_L$ agents. We can upper bound the welfare of this outcome by that of the utility of all type $\women_{H}$ agents getting matched, i.e., $2\cdot \alpha_{\women_H}\cdot \frac{1}{\exeps'} = 2$. It follows that the ratio between the first-best welfare and the best possible outcome for $\PlatformOpt$ is at least
\[ \frac{\frac{4 - \exeps'}{1 + \exeps'}}{2}\ge 2 - \frac{5}{2}\exeps' > 2 - \exeps'. \]
\end{proof}

\subsection{Proof of \Cref{proposition:lp}}
\label{sec:proof:lp}

\begin{proof}[Proof of \Cref{proposition:lp}]
We first reparameterize the flow balance constraint by introducing new variables
\[ \glow_{\men\women}\coloneqq\mass_\men\rate_{\men}(\women) = \mass_\women\rate_{\women}(\men) \]
to replace $\rate_\men(\women)$ and $\rate_\women(\men)$.
(For symmetry in notation, we will use $\glow_{\men\women}$ and $\glow_{\women\men}$ interchangeably.)
With these new variables, we may rewrite the objective symmetrically as
\[ 2\sum_{\men\in\Men}\sum_{\women\in\Women} \paren*{\glow_{\men\women} \int_{\thresh_{\men\women}}^\infty\utility\,d\CDF_{\men\women}}. \]
Furthermore, note that $\mass_\type > 0$ by \eqref{eq:opt-stationarity}. Hence we may scale the capacity constraint \eqref{eq:opt-capacity} by $\mass_\type$ and replace all occurrences of $\mass_\type\rate_\type(\typeii)$ with $\glow_{\type\typeii}$ to obtain the equivalent constraint
\begin{equation}\label{eq:opt-capacity-2}
    \mass_\type\ge\sum_{\typeii} \glow_{\type\typeii}
\end{equation}
for all $\type\in\Types$. We similarly replace the matching rate equation \eqref{eq:opt-rate} with
\begin{equation}\label{eq:opt-rate-2}
    \xi_\type\mass_\type = \sum_{\typeii}\paren*{\glow_{\type\typeii} \int_{\thresh_{\type\typeii}}^\infty \,d\CDF_{\type\typeii}}.
\end{equation}

Our next step is to absolve ourselves of the variables $\xi_\type$ and $\mass_\type$. For the former, we can simply merge constraints \eqref{eq:opt-stationarity} and \eqref{eq:opt-rate-2}, as both are equalities involving $\xi_\type\mass_\type$, to get
\begin{equation}\label{eq:opt-rate-stationarity}
    \alpha_\type = \delta\mass_\type + \sum_{\typeii}\paren*{\glow_{\type\typeii} \int_{\thresh_{\type\typeii}}^\infty \,d\CDF_{\type\typeii}}.
\end{equation}
Now, from \eqref{eq:opt-rate-stationarity}, we can get a definition of $\mass_\type$, which we can substitute into the only remaining constraint \eqref{eq:opt-capacity-2} on $\mass_\type$ to get a combined flow balance, capacity, and stationarity constraint
\begin{equation}\label{eq:opt-combined}
    \alpha_\type\ge\sum_{\typeii}\paren*{\glow_{\type\typeii}\paren*{\delta + \int_{\thresh_{\type\typeii}}^\infty \,d\CDF_{\type\typeii}}}
\end{equation}
for each $\type\in\Types$.

To obtain the linear program constraints, we make one final substitution of
\begin{equation}\label{eq:beta}
\beta_{\men\women}\coloneqq\glow_{\men\women}\paren*{\delta + \int_{\thresh_{\men\women}}^\infty \,d\CDF_{\men\women}},
\end{equation}
into \eqref{eq:opt-combined}, which gives us constraints \eqref{eq:lp-1} and \eqref{eq:lp-2} of the linear program. We also substitute $\beta_{\men\women}$ into the objective, which yields the equivalent objective
\begin{equation}
2\sum_{\men\in\Men}\sum_{\women\in\Women}  \paren*{\beta_{\men\women}\cdot\frac{\int_{\thresh_{\men\women}}^\infty\utility\,d\CDF_{\men\women}}{\delta + \int_{\thresh_{\men\women}}^\infty \,d\CDF_{\men\women}}}.
\end{equation}
The only remaining variables that we haven't taken into account are the thresholds $\thresh_{\men\women}$. However, notice that the optimization problem is effectively unconstrained in $\thresh_{\men\women}$---given any choice of $\beta_{\men\women}$ and $\thresh_{\men\women}$, we may choose a $\glow_{\men\women}$ so that the definition \eqref{eq:beta} is satisfied. Thus, we may simply set $\thresh_{\men\women}$ so that its contribution
\begin{equation}\label{eq:rho}
\frac{\int_{\thresh_{\men\women}}^\infty\utility\,d\CDF_{\men\women}}{\delta + \int_{\thresh_{\men\women}}^\infty \,d\CDF_{\men\women}}
\end{equation}
to the objective is maximized. As this is exactly the definition of $\rho_{\men\women}$, we conclude that the linear program \eqref{eq:lp} has the same objective value as the platform's first-best optimization problem.

It remains to prove the second claim in the equivalence between the linear program \eqref{eq:lp} and the platform's first-best optimization problem. For this, we note that given any feasible choice of $\{\beta_{\men\women}\}_{\men\in\Men,\women\in\Women}$, we can reverse all of the substitutions made and obtain corresponding values for $\{\lambda_\type\}_{\type\in\Types}$, $\{\mass_\type\}_{\type\in\Types}$, $\{\xi_\type\}_{\type\in\Types}$, where the variables $\{\thresh_{\men\women}\}_{\men\in\Men,\women\in\Women}$ are set as to maximize \eqref{eq:rho}.
\end{proof}

\subsection{Proof of \Cref{lemma:rho}}

\begin{proof}[Proof of \Cref{lemma:rho}]
    The logarithmic derivative of $A(\thresh)$ is
    \begin{align*}
        \frac{d}{d\thresh}\log\paren*{A(\thresh)}
        &= \frac{\CDF'(\thresh)}{\delta + \int_\thresh^\infty\,d\CDF} - \frac{\thresh\CDF'(\thresh)}{\int_\thresh^\infty \utility\,d\CDF}
        = \paren*{\int_\thresh^\infty (\utility - \thresh)\,d\CDF - \thresh\delta}\frac{F'(\thresh)}{\paren*{\delta + \int_\thresh^\infty\,d\CDF}\paren*{\int_\thresh^\infty\utility\,d\CDF}}.
    \end{align*}
    The sign of this derivative is given by the first term on the right-hand side, since the fraction on the right-hand side is always non-negative. This first term is monotonically decreasing:
    \[ \frac{d}{d\thresh}\paren*{\int_\thresh^\infty (\utility - \thresh)\,d\CDF - \thresh\delta} = -\paren*{\delta + \int_\thresh^\infty\,d\CDF} < 0. \]
    Therefore, the expression is maximized when this term vanishes, i.e., when
    \[ \thresh = \frac{\int_\thresh^\infty\utility\,d\CDF}{\delta + \int_\thresh^\infty \,d\CDF} = \rho. \]
    Moreover, since the derivative is non-negative when $\thresh\le\rho$ and is non-positive when $\thresh\ge\rho$, we have the desired monotonicities when $\thresh\le\rho$ and $\thresh\ge\rho$ as well.
\end{proof}

We also state and prove a generalization of \Cref{lemma:rho} to non-zero thresholds and multiple types on the opposite side. That is, the fixed point and monotonicity properties hold generally:

\begin{lemma}\label{lemma:rho-2}
    Define 
    \[ B(\thresh) = \frac{\sum_{i=1}^\ntypes\paren*{\rate_i\int_{\max(\thresh, \thresh_i)}^\infty \utility\,d\CDF_i}}{\delta\, + \sum_{i=1}^\ntypes\paren*{\rate_i\int_{\max(\thresh, \thresh_i)}^\infty \,d\CDF_i}}, \]
    where $\delta > 0$, $\rate_i > 0$ for all $i$, $\thresh_i\ge 0$ for all $i$, and $\CDF_i$ is a continuous distribution for each $i$. Let $\rho$ satisfies the fixed point equation $\rho = \max_{\thresh\ge 0} B(\thresh)$. Then $B(\rho) = \rho$. Moreover, $B$ is monotonically increasing for $\thresh\le\rho$ and monotonically decreasing for $\thresh\ge\rho$.
\end{lemma}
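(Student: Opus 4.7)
The plan is to mirror the proof of \Cref{lemma:rho} while replacing the single integrand with a sum over types $i$. Denote the numerator and denominator of $B(\thresh)$ by $N(\thresh)$ and $D(\thresh)$, so $B = N/D$. For each $i$, the tail integral $\int_{\max(\thresh,\thresh_i)}^\infty u\,d\CDF_i$ has derivative $-\thresh\,\CDF'_i(\thresh)$ on $\thresh > \thresh_i$ and is constant on $\thresh < \thresh_i$. Setting $S(\thresh) := \sum_{i \,:\, \thresh_i \le \thresh}\rate_i\,\CDF'_i(\thresh) \ge 0$, I obtain $N'(\thresh) = -\thresh\,S(\thresh)$ and $D'(\thresh) = -S(\thresh)$ away from the kink points $\{\thresh_i\}$; the one-sided derivatives at those boundary points agree with the same formulas, so the identities hold one-sidedly throughout $[0,\infty)$.

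The key auxiliary object is $H(\thresh) := N(\thresh) - \thresh\,D(\thresh)$, which factorizes as $H(\thresh) = D(\thresh)\bigl(B(\thresh) - \thresh\bigr)$. Because $D(\thresh) \ge \delta > 0$, the sign of $B(\thresh) - \thresh$ coincides with that of $H(\thresh)$. A direct calculation shows
\[
H'(\thresh) \;=\; N'(\thresh) - D(\thresh) - \thresh\,D'(\thresh) \;=\; -\thresh\,S(\thresh) - D(\thresh) + \thresh\,S(\thresh) \;=\; -D(\thresh) \;<\; 0,
\]
so $H$ is strictly decreasing on $[0,\infty)$. The $S$-dependent terms cancel exactly, which is also why the formula extends through the kinks.

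Applying the quotient rule gives $B'(\thresh)\,D(\thresh)^2 = N'(\thresh)\,D(\thresh) - N(\thresh)\,D'(\thresh) = S(\thresh)\,H(\thresh)$, so $B'(\thresh)$ has the same sign as $H(\thresh)$ (the non-negative factor $S$ aside). Combined with the strict monotonicity of $H$, this forces $B$ to be non-decreasing on $\{H > 0\}$ and non-increasing on $\{H < 0\}$. Since $D(\thresh) \ge \delta$ while the tail integrals in $N(\thresh)$ are bounded, $H(\thresh) \to -\infty$ as $\thresh \to \infty$; in the non-degenerate regime $H(0) \ge 0$, the intermediate value theorem yields a unique zero $\rho^* \ge 0$ of $H$, and this point is the global maximizer of $B$ on $[0,\infty)$. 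From $H(\rho^*) = 0$ I read off $B(\rho^*) = \rho^*$, and since the lemma defines $\rho := \max_{\thresh \ge 0} B(\thresh) = B(\rho^*) = \rho^*$, I conclude $B(\rho) = \rho$ together with the desired monotonicity on $[0,\rho]$ and $[\rho,\infty)$.

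The main subtlety I anticipate is the piecewise-smooth structure at the $\thresh_i$ boundaries, where $B$ is only piecewise differentiable in the classical sense. The approach handles this uniformly because the cancellation in $H'(\thresh) = -D(\thresh)$ holds on both sides of each $\thresh_i$, so strict monotonicity of $H$ on $[0,\infty)$ needs no separate kink-by-kink argument. A secondary concern is the degenerate edge case $H(0) < 0$, i.e., when $B(0) < 0$ and $B$ is already decreasing at $\thresh = 0$; there the maximum on $[0,\infty)$ is simply $B(0)$ and the fixed-point claim must be interpreted accordingly, but this regime does not arise in the paper's applications of the lemma.
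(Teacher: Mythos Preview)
Your proof is correct but takes a different route from the paper's. The paper observes that $B$ can be rewritten as
\[
B(\thresh) = \frac{\int_\thresh^\infty u\,dG}{\delta + \int_\thresh^\infty dG}
\quad\text{for the combined measure}\quad
dG = \sum_{i=1}^{\ntypes} \rate_i \cdot \mathbf{1}_{\ge\thresh_i}\cdot d\CDF_i,
\]
which, after normalizing by the total mass $\int_{\R} dG$, places the lemma squarely back in the single-distribution setting of \Cref{lemma:rho}; the entire proof is then a two-line reduction. You instead carry out the derivative computation directly, introducing $H(\thresh) = N(\thresh) - \thresh D(\thresh)$ and exploiting the cancellation $H'(\thresh) = -D(\thresh)$ to obtain strict monotonicity of $H$ and hence unimodality of $B$. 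Your argument is longer but self-contained---it does not invoke \Cref{lemma:rho} as a black box---and it makes the piecewise structure at the thresholds $\thresh_i$ explicit, whereas the paper's change of measure simply absorbs the kinks into $G$. The two approaches rest on the same underlying identity (the vanishing of the $S$-terms in your $H'$ is exactly what drives the logarithmic-derivative calculation in the proof of \Cref{lemma:rho}), so neither is materially more general; the paper's version is just more economical once \Cref{lemma:rho} is already in hand. Your remark on the degenerate case $H(0) < 0$ is apt and applies equally to the paper's argument, which is silent on it.
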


\begin{proof}
Note that we can write $B(\thresh)$ as
\[ B(\thresh) = \frac{\int_\thresh^\infty \utility\,dG}{\delta + \int_\thresh^\infty\,dG}, \]
where $dG = \sum_{i=1}^\ntypes (\rate_i\cdot\mathbf{1}_{\ge\thresh_i}\cdot d\CDF_i)$ is a new measure defined on $\R$ in terms of $\rate_i$, $\thresh_i$, and $\CDF_i$. Here, $\mathbf{1}_{\ge\thresh_i}$ denotes the indicator function that is $1$ on inputs at least $\thresh_i$ and $0$ otherwise. After appropriately normalizing the numerator and denominator (by $\int_\R\,dG$), the claim follows from \Cref{lemma:rho}.
\end{proof}

\subsection{Proofs of Lemmas from \Cref{sec:step-3}}\label{sec:approximation-proofs-step-3}

\begin{proof}[Proof of \Cref{lemma:first-best}]
    Recall the platform's first-best optimization problem \eqref{eq:lp}. In a star-shaped market, this linear program reduces to a fractional knapsack problem where the platform has a knapsack of size $\alpha_\men$ and wishes to fill it with items for each $\women\in\Women$ of weight $\alpha_\women$ and value $\rho_{\men\women}$. Thus, there exists a first-best solution $\{\beta_{\men\women}^*\}_{\women\in\Women}$ that matches type $\men$ agents exclusively with some subset $\{\women_{1},\ldots,\women_\nfill\}\subseteq\Women$ such that $\rho_{\men\women_{1}}\ge\rho_{\men\women_{2}}\ge\cdots\ge\rho_{\men\women_\nfill}$ and $\beta_{\men\women_i}^* = \alpha_\women$ for all $i < \nfill$. 
\end{proof}

\begin{proof}[Proof of \Cref{lemma:thresh-star}]
    If $i < \nfill$, then $\beta_{\men\women_i}^* = \alpha_{\women_i}$. Therefore,
    \[ \hat\thresh_{\women_i} = \frac{1}{\alpha_{\women_i}} \glow_{\men\women_i}^* \int_{\hat\thresh_{\women_i}}^\infty \utility\,d\CDF_{\men\women_i} = \frac{\int_{\hat\thresh_{\women_i}}^\infty \utility\,d\CDF_{\men\women_i}}{\delta + \int_{\rho_{\men\women_i}} \,d\CDF_{\men\women_i}}. \]
    And by \Cref{lemma:rho}, $\hat\thresh_{\women_i} = \rho_{\men\women_i}$ solves the equation. For $i = \nfill$, since $\beta_{\men\women_\nfill}^*\le\alpha_{\women_\nfill}$, we have
    \[ \hat\thresh_{\women_\nfill} = \frac{1}{\alpha_{\women_\nfill}} \glow_{\men\women_\nfill}^* \int_{\hat\thresh_{\women_\nfill}}^\infty \utility\,d\CDF_{\men\women_\nfill}\le \frac{\int_{\hat\thresh_{\women_\nfill}}^\infty \utility\,d\CDF_{\men\women_\nfill}}{\delta + \int_{\rho_{\men\women_\nfill}} \,d\CDF_{\men\women_\nfill}}. \]
    If $\hat\thresh_{\women_\nfill} > \rho_{\men\women_\nfill}$, then by \Cref{lemma:rho}, we would have that the right-hand side is at most $\rho_{\men\women_\nfill}$, since the expression \smash{$\int_{\thresh}^\infty\utility\,d\CDF_{\men\women_\nfill}$} is monotonically decreasing in $\thresh$. 
    This contradicts the assumption that $\hat\thresh_{\women_\nfill} > \rho_{\men\women_\nfill}$.
    Hence it must be the case that $\hat\thresh_{\women_\nfill}\le\rho_{\men\women_\nfill}$.
\end{proof}

\begin{proof}[Proof of \Cref{lemma:thresh-ub}]
    This follows from the fact that for any threshold $\thresh_\men$, we have that
    \[ \thresh_{\women_i} = \frac{1}{\alpha_\type}\glow_{\men\women_i}^*\int_{\max(\thresh_\men, \thresh_{\women_i})}^\infty \utility\,d\CDF_{\men\women_i}\le\frac{1}{\alpha_{\women_i}}\glow_{\men\women_i}^*\int_{\thresh_{\women_i}}^\infty \utility\,d\CDF_{\men\women_i}. \]
    The argument used in \Cref{lemma:thresh-star} when $i = \nfill$ then lets us conclude that $\thresh_{\women_i}\le\hat\thresh_{\women_i}$ for all $i$.
\end{proof}

\begin{proof}[Proof of \Cref{lemma:thresh-max}]
The lemma is clearly true if $\thresh_\men\ge\hat\thresh_{\women_i}$. Now, suppose $\thresh_{\men} < \hat\thresh_{\women_i}$. Then $\hat\thresh_{\women_i}$ would satisfy the fixed point equation \eqref{eq:fixedpoint}, meaning $\hat\thresh_{\women_i}$ would be the equilibrium threshold of type $\women_i$.
\end{proof}

\section{Hardness of Approximation}\label{sec:hardness}

In this section, we show that the platform's computational problem of finding a $c$-approximate welfare-maximizing stationary equilibrium is $\NP$-hard for some constant $c > 1$. This shows that, conditioned on $\mathsf{P}\neq\NP$, one cannot hope to obtain better than a constant factor approximation to the optimal in polynomial time.

From a technical perspective, our approach very closely mirrors that of \citet{ChakrabartyGoel10}, who show hardness of approximation for related problems (e.g., maximum budgeted allocation). While their technique applies to our setting, it is not clear that their hardness results apply directly---a challenge specific to our setting is that our optimization problem is continuous rather than discrete. To make this approach work, we must show that our complicated feasibility set introduces a discrete element to the optimal allocation. But as a consequence, we obtain a slightly worse constant $c$.

At a high level, we show hardness of approximation by reducing the problem of approximating $\MaxThreeLinTwo$ to finding a sufficiently good approximation for the platform's welfare maximization problem. (Recall that $\MaxThreeLinTwo$ is the optimization problem where, given $\neqs$ linear equations in $\nvars$ variables over $\GF_2$ such that each equation involves exactly three variables, the objective is to maximize the number of equations simultaneously satisfied by an assignment to the variables.) That $\MaxThreeLinTwo$ is hard to approximate is due to \citet{Hastad01} and is a direct corollary of his celebrated $3$-bit PCP. We state this hardness result in terms of a promise problem version of $\MaxThreeLinTwo$:
\begin{theorem}[\citet{Hastad01}]\label{theorem:hastad}
It is $\NP$-hard to distinguish between $\MaxThreeLinTwo$ instances where all but $\approxerr\neqs$ equations are simultaneously satisfiable and where at most $\paren*{\frac 12 + \approxerr}\neqs$ equations are simultaneously satisfiable for any $\approxerr > 0$.
\end{theorem}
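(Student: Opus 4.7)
The plan is to prove this via the PCP framework, reducing from a gap version of Label Cover and designing a $3$-bit probabilistic verifier whose acceptance predicate can be directly encoded as a $\MaxThreeLinTwo$ instance. The first step is to invoke the PCP Theorem combined with Raz's Parallel Repetition Theorem to obtain instances of Label Cover with perfect completeness and arbitrarily small soundness $\eta > 0$: namely, a bipartite constraint satisfaction problem with projection constraints $\pi_e : \Sigma_u \to \Sigma_w$ over a large alphabet $\Sigma$ (of size a function of $\eta$), where it is $\NP$-hard to distinguish ``all constraints satisfiable'' from ``at most an $\eta$-fraction satisfiable.''

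Next, I would design a $3$-bit PCP verifier using the long code encoding. For each variable $v$, the proof contains the truth table of a folded function $f_v : \{-1,1\}^{\Sigma} \to \{-1,1\}$, which honestly is set to the dictator $f_v(x) = x_{\ell(v)}$ for the true label $\ell(v)$. The verifier picks a random projection constraint $\pi$ between variables $u$ and $w$, samples three carefully correlated points via a Bellare--Goldreich--Sudan style distribution (one coordinate is $\approxerr$-noisy), reads one bit from $f_u$ and two bits from $f_w$, and accepts iff a linear relation over $\GF_2$ holds among the three queried bits. Honest dictator proofs pass with probability $1 - O(\approxerr)$; the main obstacle, and the technical heart of the argument, is establishing soundness \emph{exactly} $\tfrac{1}{2} + \approxerr$, rather than some weaker constant.

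The soundness analysis proceeds via Fourier analysis on the Boolean hypercube. One writes the acceptance probability as the expectation of a product of three $\pm 1$-valued functions and expands each $f_v$ in its Fourier basis $\{\chi_S\}_{S \subseteq \Sigma}$. The projection structure of $\pi$ together with the $\approxerr$-noise forces Fourier mass on large sets to be exponentially damped, so that only coefficients supported on small ``lifted'' sets contribute non-negligibly. If the acceptance probability exceeds $\tfrac{1}{2} + \approxerr$, a randomized decoding that samples a label of each variable with probability proportional to the squared Fourier coefficients satisfies an $\Omega(\approxerr^2)$-fraction of Label Cover constraints in expectation; choosing the Label Cover soundness $\eta \ll \approxerr^2$ contradicts its $\NP$-hardness. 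Folding (conditioning the long code to be odd) is what eliminates the constant Fourier coefficient and enables the tight $\tfrac{1}{2}$ threshold.

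Finally, the conversion to $\MaxThreeLinTwo$ is routine: since the verifier uses $O(\log \nvars)$ random bits, enumerating all random strings produces $\neqs = \mathrm{poly}(\nvars)$ linear equations over $\GF_2$, each involving exactly three bits of the proof (i.e., three variables after identifying proof bits with $\GF_2$ variables). Completeness translates to the existence of an assignment satisfying all but an $\approxerr$-fraction of the equations, and soundness to no assignment satisfying more than a $(\tfrac{1}{2}+\approxerr)$-fraction, yielding the stated gap. The most delicate step in this entire plan is calibrating the noise in the query distribution so that the Fourier analysis cleanly collapses to a single dictator, which is what makes $\tfrac{1}{2}$ both achievable in completeness and unbeatable in soundness.
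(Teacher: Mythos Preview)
The paper does not prove this statement at all: it is quoted as a black-box result from \citet{Hastad01} and used directly in the reduction that follows. So there is no ``paper's own proof'' to compare against; you have supplied a proof outline where the paper supplies only a citation.

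That said, your sketch is a faithful high-level summary of H\aa stad's actual argument: Label Cover via PCP plus parallel repetition, a long-code based $3$-query linearity test with $\approxerr$-noise, Fourier-analytic soundness that decodes heavy low-degree coefficients back to Label Cover labels, and folding to kill the empty Fourier coefficient and pin the soundness at $\tfrac12$. The one place where your write-up is loose is the completeness side: you say honest dictator proofs pass with probability $1 - O(\approxerr)$, which is correct, but then the derived $\MaxThreeLinTwo$ instance has a $(1-O(\approxerr))$-fraction of equations satisfiable rather than literally ``all but $\approxerr\neqs$''; this is harmless after reparametrizing $\approxerr$, but worth stating explicitly. For the purposes of this paper, though, none of this detail is needed---the theorem is invoked, not proved.
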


From this classical starting point, we derive hardness of approximation for the platform's optimization problem:
\begin{theorem}\label{theorem:hardness}
Computing a $\paren*{\frac{24}{23} - \approxerr}$-approximate solution to the platform's welfare maximization problem is $\NP$-hard for any $\approxerr > 0$.
\end{theorem}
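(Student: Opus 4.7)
I would prove \Cref{theorem:hardness} via a gap-preserving reduction from $\MaxThreeLinTwo$, invoking the promise gap of \Cref{theorem:hastad}. Given a $\MaxThreeLinTwo$ instance $\Phi$ with $\neqs$ equations in $\nvars$ variables over $\GF_2$, I would build a market $\mathcal{I}(\Phi)$ whose optimal welfare $\OPT(\mathcal{I}(\Phi))$ is a fixed affine function of the maximum number of simultaneously satisfiable equations of $\Phi$, so that distinguishing the $(1-\approxerr)\neqs$ and $(\frac{1}{2}+\approxerr)\neqs$ cases in $\Phi$ reduces to distinguishing two welfare levels whose ratio is $\frac{24}{23} - O(\approxerr)$.

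\paragraph{Construction.} Following the gadget approach of \citet{ChakrabartyGoel10}, I would introduce, for each variable $\var_i$, a pair of ``literal'' types $\men_i^0, \men_i^1 \in \Men$ that represent the two possible assignments, together with a scarce ``selector'' type $\women_i \in \Women$ whose arrival rate equals that of a single literal and which has a high-value point-mass preference distribution with both $\men_i^0$ and $\men_i^1$. The small arrival rate of $\women_i$ combined with the capacity and flow balance constraints \eqref{eq:capacity}--\eqref{eq:flowbalance} means that, at the first-best optimum, essentially only one of $\men_i^0, \men_i^1$ gets matched with $\women_i$; this is how the discrete $\{0,1\}$-assignment is extracted from the continuous optimization. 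For each equation $\eqn_j$ involving $\var_a, \var_b, \var_c$ with target parity, I would introduce an equation type $\women_{\eqn_j} \in \Women$ that meets each of the three variable types corresponding to the literals that satisfy $\eqn_j$, with a point-mass utility (shifted by a small bounded noise to satisfy our continuity assumption); the valuations are tuned so that the platform collects the equation's contribution to welfare exactly when all three of its literal types are the ones selected by their respective $\women_i$.

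\paragraph{Completeness and soundness.} In the YES case, where an assignment $\var^*$ satisfies $(1-\approxerr)\neqs$ equations, setting $\glow_{\men_i^{\var^*_i} \women_i}$ and $\glow_{\men_i^{\var^*_i} \women_{\eqn_j}}$ to saturate the relevant capacity constraints yields welfare at least $W_{yes}$ consisting of a baseline contribution from the variable gadgets plus $2 v \cdot (1-\approxerr)\neqs$ from the equation gadgets. In the NO case, I would show that any feasible set of assortments yields welfare at most $W_{no}$, the baseline contribution plus $2 v \cdot (\frac{1}{2}+\approxerr)\neqs$. Choosing $\nvars$, the per-variable weight, and $v$ so that $W_{yes}/W_{no} = \frac{24}{23} - O(\approxerr)$, while invoking \Cref{theorem:hastad}, completes the reduction.

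\paragraph{The main obstacle.} The chief technical difficulty, flagged by the authors themselves, is that our optimization problem is \emph{continuous}: the platform may split $\glow_{\men_i^0 \women_i}$ and $\glow_{\men_i^1 \women_i}$ fractionally and may also choose non-trivial thresholds. I need an integrality-style lemma showing that at (near-)optimum, each variable is (up to a small error) assigned to a single literal, so that the welfare from the equation gadgets cannot exceed $v$ times the maximum number of equations satisfiable by any \emph{integral} assignment. The argument would leverage the fixed-point characterization of optimal thresholds in \Cref{lemma:rho} together with the combined constraint \eqref{eq:opt-combined-main} to show that any fractional split on a variable is either dominated by an integral one, or wastes meeting capacity in a manner the adversary cannot fully recover. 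Because this recovery is not zero, the obtained hardness constant degrades from the $\frac{16}{15}$ of the discrete budgeted-allocation setting to $\frac{24}{23}$; quantifying this loss precisely via the welfare contributions above is the last accounting step before concluding the theorem.
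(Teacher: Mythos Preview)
Your high-level plan (gap-preserving reduction from $\MaxThreeLinTwo$, variable gadgets with a scarce selector, equation gadgets, integrality lemma) is the same as the paper's. However, two key pieces are not right as stated.

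First, your equation gadget is malformed. You introduce a single type $\women_{\eqn_j}$ that ``meets each of the three variable types corresponding to the literals that satisfy $\eqn_j$,'' but an equation over $\GF_2$ has \emph{four} satisfying assignments, not one, so there is no single triple of literals to attach $\women_{\eqn_j}$ to. The paper's construction introduces four equation types per equation---one per satisfying assignment $\eqn_{\ell,(b_i,b_j,b_k)}$---each meeting the three literal types picked out by that assignment. This is what makes the soundness accounting work: if the integral assignment fails equation $\ell$, then exactly one of the four equation types has none of its three literal partners available and contributes $0$, so the equation contributes $9$ rather than $12$. Your soundness claim (unsatisfied equations contribute nothing, giving $W_{no}=\text{baseline}+2v(\tfrac12+\approxerr)\neqs$) does not match any gadget you have described; with a single equation type and fractional matching, partial contribution is unavoidable.

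Second, the integrality mechanism you sketch (``wastes meeting capacity'') is not the one that actually works here. The paper's forcing lemma runs through the \emph{incentive} constraint, not capacity: the switch payoff is set to $2(1+2\delta) > 1$, strictly above the equation payoff of $1$, so in any welfare-maximizing equilibrium some literal type $\var_{i,b}$ must have expected utility exceeding $1$ (otherwise rerouting $\switch_i$ entirely to one literal strictly improves welfare). But a literal with continuation value above $1$ \emph{rejects all equation-type matches} in equilibrium, hence is effectively committed to $\switch_i$ alone. This is the discrete element you need; it is an equilibrium-threshold argument, not a first-best capacity argument, and \Cref{lemma:rho} by itself does not deliver it. Once you have this lemma, the $\tfrac{24}{23}$ emerges from the ratio $(24+12)/(24+10.5)$ of switch-plus-equation welfare in the YES versus NO cases, not from the baseline-plus-$v\neqs$ versus baseline-plus-$\tfrac12 v\neqs$ arithmetic you wrote down.
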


\begin{proof}
As stated above, we reduce instances of $\MaxThreeLinTwo$ to instances of the platform's welfare maximization problem. Specifically, we reduce to instances where the discount rate $\delta$ is small and the distributions $\Dist_{\men\women}$ are point masses.\footnote{Although point masses are not continuous distributions, which we required in \Cref{sec:utilities}, we can make point masses continuous by adding a small amount of (bounded) noise. Doing so does not change any of the conclusions. To simplify our exposition, we focus on the case of point masses only.}

Consider a $\MaxThreeLinTwo$ instance in $\nvars$ variables  $\var_1,\var_2,\ldots,\var_\nvars$ and $\neqs$ equations $\var_{i_\ell}\oplus \var_{j_\ell} \oplus \var_{k_\ell} = \bit_\ell$ for $\ell\in [\neqs]$. We construct an instance of our matching market by defining types, arrival rates, and payoffs based on these variables and equations:
\begin{description}
    \item[Types.] Define \emph{variable types} $\var_{i,0},  \var_{i,1}\in\Men$ and \emph{switch type} $\switch_{i}\in\Women$ for each variable $\var_i$. Define \emph{equation types} $\eqn_{\ell,(0,0,0)}, \eqn_{\ell,(0,1,1)}, \eqn_{\ell,(1,0,1)}, \eqn_{\ell,(1,1,0)}\in\Women$ for each equation $\var_{i_\ell}\oplus \var_{j_\ell} \oplus \var_{k_\ell} = \bit_\ell$, one type for each satisfying assignment to equation $\ell$. That is, type $\eqn_{\ell,(\bit_i,\bit_j,\bit_k)}$ corresponds to the satisfying assignment $(\var_{i_\ell}, \var_{j_\ell}, \var_{k_\ell}) = (\bit_i\oplus\bit_\ell, \bit_j\oplus\bit_\ell, \bit_k\oplus\bit_\ell)$.
    \item[Arrival rates.] Let the variable and switch types for $\var_i$ each arrive at rate $\alpha_i = 4\neqs_i$, where $\neqs_i$ is the number of equations that involve $\var_i$. Let each equation type arrive at rate $3$.
    \item[Payoffs.] Let matching type $\var_{i,\bit}$ and type $\switch_i$ yield payoff $\switchval$ for all $\bit$. Let matching type $\var_{i,\bit}$ with any equation type whose corresponding satisfying assignment has the variable $\var_i$ with parity $\bit$ yield payoff $1$. All other matches provide no utility.
\end{description}
With this setup, no agent on side $\Women$ will ever reject positive utility matches in equilibrium, since all positive utility matches yield the same payoff for them. Moreover, no switch type agent will ever be rejected in equilibrium, since they yield maximal payoff for variable type agents.\footnote{When $\delta > 0$, these observations still hold for distributions that are not point masses, so long as the variation in payoff is sufficiently small.} And if in equilibrium an agent of type $\var_{i,\bit}$ has expected continuation utility greater than $1$, then they must only accept agents of type $\switch_i$. These observations set the stage for our key lemma:

\begin{lemma}
In any welfare-maximizing equilibrium, the expected payoff of all switch type agents will be $\switchval / (1 + \delta)$. Furthermore, such an equilibrium can be achieved by only showing type $\switch_i$ and type $\var_{i,\bit}$ agents to each other for some $\bit\in\{0, 1\}$ for each $i$.
\end{lemma}

\begin{proof}
First, we show that some variable type $\var_{i,\bit}$ must have expected payoff greater than $1$ for each $i\in[\nvars]$. Suppose otherwise, that there exists a welfare-maximizing stationary equilibrium such that the expected payoffs of both variable types $\var_{i,\bit}$ are at most $1$ for some $i$. Then we could stop matching both types entirely, by setting their assortments to $0$. This comes at a loss in welfare of at most $2\cdot 4\neqs_i$. Then, we could match type $\var_{i,0}$ entirely to $\switch_i$, and obtain an increase in welfare of $4\neqs_i\cdot\switchval / (1 + \delta) > 2\cdot 4\neqs_i$. Finally, note that these new assortments also produce a stationary equilibrium: Unmatching the variable types $\var_{i,\bit}$ only adds slack to the assortment feasibility constraints. Unmatching also does not affect any equation type's threshold, since they always accept in the first place. This new equilibrium has higher welfare, which contradicts our premise that the original equilibrium is welfare-maximizing.

The preceding argument shows that in any welfare-maximizing equilibrium, there is a variable type $\var_{i,\bit}$ whose expected utility is greater than $1$ for each $i$. Notice that this type must exclusively match with type $\switch_i$ agents, since equation type agents only yield payoff $1$ and thus cannot be accepted in equilibrium. 
To prove the second part of this lemma, we show that we can obtain another welfare-maximizing equilibrium by matching the entire supply of type $\switch_i$ agents to this variable type $\var_{i,\bit}$.

Suppose we have a welfare-maximizing equilibrium, and we modify it by matching the entire supply of type $\switch_i$ agents to a variable type $\var_{i,\bit}$ whose expected utility exceeds $1$. To restore a stationary equilibrium, note that making this modification may violate two sets of constraints: First, it might violate type $\var_{i,\bit}$'s feasibility constraint; but this is not a problem, since we can simply stop showing $\var_{i,\bit}$ type agents any types but type $\switch_i$, since these agents were only accepting type $\switch_i$ agents in the first place. Second, it might cause type $\var_{i,\bit\oplus 1}$ to want to accept more equation type agents because they stop matching with switch type agents entirely. 
To resolve this, we can simply reduce the supply of equation type agents to type $\var_{i,\bit\oplus 1}$ so that by accepting all such agents, $\var_{i,\bit\oplus 1}$ accepts the same quantity of agents as in the original equilibrium. Finally, to see that this new equilibrium has the same welfare as the original equilibrium, note that the flow rates of matches for both switch and equation types are at least as high as before.
\end{proof}

This lemma tells us that there exists a welfare-maximizing equilibrium where, for every $i$, one of variable types $\var_{i,0}$ and $\var_{i,1}$ is fully matched to the switch type $\switch_i$. Moreover, it suffices to optimize over such equilibria. Notice that each such equilibrium also corresponds to an assignment $(\var_1,\ldots,\var_\nvars)\in\{0,1\}^\nvars$, where $\var_i = \bit$ if type $\var_{i,\bit\oplus 1}$ is fully matched to the switch type $\switch_i$.

We are now ready to state our analogs of Lemmas 4.6 and 4.7 of \citet{ChakrabartyGoel10}:

\begin{lemma}\label{lemma:complete}
Given a $\MaxThreeLinTwo$ instance with $\neqsii\ge (1 - \approxerr)\neqs$ simultaneously satisfiable equations, the platform's problem it maps to has $\OPT\ge (36 + 48\delta - 12\approxerr)\neqs / (1 + \delta)$.
\end{lemma}

\begin{proof}
Given an assignment $(\var_1,\var_2,\ldots,\var_\nvars)\in\{0,1\}^\nvars$ that satisfies $\neqsii$ equations, we can match the switch types to the variable types as above. We can then match the remaining variable types to the equation types so that each variable type matches with the equation types of each satisfied equation it is part of at a flow rate of $4 / (1 + \delta)$. 
We thus have payoff $2(1+2\delta) / (1 + \delta)\cdot \sum_i 4\neqs_i$ from switch types and payoff $12/(1+\delta)$ per satisfied equation. Summing these quantities gives us the lemma.
\end{proof}

\begin{lemma}\label{lemma:sound}
Given a $\MaxThreeLinTwo$ instance with $\neqsii\le \paren*{\frac 12 + \approxerr}\neqs$ simultaneously satisfiable equations, the platform's problem it maps to has $\OPT\le (34.5 + 48\delta + 3\approxerr)\neqs / (1 + \delta)$.
\end{lemma}

\begin{proof}
Take any assignment $(\var_1,\var_2,\ldots,\var_\nvars)\in\{0,1\}^\nvars$ and match switch types to variable types as described above. Now, for any equation that is not satisfied, notice that there must be some equation type that cannot be matched at all to any of the remaining variable types. So the payoff from the equation types for unsatisfied equations is at most $9 / (1 + \delta)$. Upper bounding the payoffs of the equation types of the satisfied equations by $12/(1+\delta)$ per satisfied equation, we get $12\neqs\cdot2(1+2\delta)/(1+\delta)$ from switch types and at most payoff $((\frac 12 - \approxerr)\neqs \cdot 9+ (\frac 12 + \approxerr)\neqs\cdot 12)/(1+\delta)$ from equation types. Summing these quantities proves the lemma.
\end{proof}

To finish, observe that \Cref{lemma:complete,lemma:sound} show that if we can approximate the platform's welfare-maximization problem with a factor of
\[ \frac{36 + 48\delta - 12\approxerr}{34.5 + 48\delta + 3\approxerr}, \]
then we would be able to distinguish the two classes of $\MaxThreeLinTwo$ inputs in \Cref{theorem:hastad}. Thus, by taking sufficiently small $\delta$ and $\approxerr$, we see that it is $\NP$-hard to approximate the platform's welfare-maximization problem within any constant factor less than $\frac{36}{34.5} = \frac{24}{23}$.
\end{proof}

\end{APPENDIX}

\end{document}